\documentclass[mnsc,nonblindrev,copyedit]{informs3a} 

\usepackage{setspace, color, graphics,booktabs,enumerate}
\usepackage{graphicx}
\usepackage{subfigure}
\usepackage{amssymb,amsmath}
\usepackage{mathrsfs}
\usepackage{threeparttable}
\usepackage{ulem} 
\definecolor{DarkBlue}{rgb}{0,0.08,0.45}
\usepackage[backref = false, bookmarks, colorlinks = true, plainpages = false, citecolor = DarkBlue , urlcolor = DarkBlue, filecolor = DarkBlue, linkcolor = DarkBlue]{hyperref}
\usepackage[dvipsnames,table]{xcolor}

\renewcommand{\theequation}{\arabic{equation}}

\usepackage{natbib}
 \bibpunct[, ]{(}{)}{,}{a}{}{,}%
 \def\bibfont{\small\renewcommand{\baselinestretch}{1.3}\selectfont}%
\theoremstyle{TH}%
\newtheorem{theorem}{Theorem}
\newtheorem{lemma}{Lemma}
\newtheorem{proposition}{Proposition}

\newtheorem{assumption}{Assumption}

\theoremstyle{EX}

\newtheorem{definition}{Definition}

\ECRepeatTheorems

\EquationsNumberedThrough    

\MANUSCRIPTNO{}

\begin{document}




\TITLE{{\Large Dynamic Matching Under Patience Imbalance}
}


\ARTICLEAUTHORS{
\AUTHOR
{Zhiyuan Chen}
\AFF{School of Economics and Management, Wuhan University, Hubei, China 430072,\\\href{mailto:zhiyuanchen@whu.edu.cn}{zhiyuanchen@whu.edu.cn}}

\AUTHOR
{David Chen}
\AFF{School of Management and Economics, The Chinese University of Hong Kong, Shenzhen, China 518172,\\\href{mailto:davidchen@cuhk.edu.cn}{davidchen@cuhk.edu.cn}}

\AUTHOR
{Ming Hu}
\AFF{Rotman School of Management, University of Toronto, Toronto, Ontario, Canada M5S 3E6, \\\href{mailto:ming.hu@rotman.utoronto.ca}{ming.hu@rotman.utoronto.ca}}

\AUTHOR
{Yun Zhou}
\AFF{DeGroote School of Business, McMaster University, Hamilton, Ontario, Canada L8S 4E8, \\\href{mailto:zhouy185@mcmaster.ca}{zhouy185@mcmaster.ca}}
}

\ABSTRACT
{
We study a dynamic matching problem on a two-sided platform with unbalanced patience, motivated by ride-hailing and other on-demand services, in which long-lived supply accumulates over time with a unit waiting cost per period, while short-lived demand departs if not matched promptly. High- or low-quality agents arrive sequentially with one supply agent and one demand agent arriving in each period, and matching payoffs are supermodular. In the centralized benchmark, the optimal policy follows a threshold-based rule that rations high-quality supply, preserving it for future high-quality demand. In the decentralized system, where self-interested agents decide whether to match under an exogenously specified payoff allocation proportion, we characterize a welfare-maximizing Markov perfect equilibrium. We show that, unlike outcomes in the centralized benchmark or in full-backlog markets, the equilibrium exhibits distinct matching patterns in which low-type demand may match with high-type supply even when low-type supply is available. We further show that, unlike settings in which both sides have long-lived agents and perfect coordination is impossible, the decentralized system can always be perfectly aligned with the centralized optimum by appropriately adjusting the allocation of matching payoffs across agents on both sides. Finally, when the arrival probabilities for H- and L-type arrivals are identical on both sides, we compare social welfare across systems with different patience levels: full backlog on both sides, one-sided backlog, and no backlog. In the centralized setting, social welfare is weakly ordered across systems, with incremental social welfare gains from making both sides patient weakly smaller than the gain from making just one side patient. However, in the decentralized setting, the social welfare ranking across the three systems depends on the matching payoff allocation rule and the unit waiting cost, and enabling patience can either increase or decrease social welfare.
}

\maketitle

%

\section{Introduction}\label{sec:introduction}

Many modern platforms operate as two-sided dynamic matching markets in which participants arrive over time, are vertically differentiated by type, and exhibit a patience imbalance between the two sides. On ride-hailing platforms (e.g., Uber, Lyft, and Didi), riders and drivers arrive sequentially and stochastically; riders may vary by destination length (e.g., an airport trip versus a nearby grocery run), while drivers may vary by vehicle class (e.g., UberX versus Black/Premier), which carry different per-mile rates. A natural specification is that the matching payoff takes a multiplicative form in rider and driver types, implying that matching payoffs are supermodular. Matching is implemented either centrally by the platform (i.e., the platform selects a driver for an incoming trip request from the rider) or through match offers that participants may accept or reject (i.e., the platform offers the trip to a driver, who can accept, ignore, or decline, and the rider may also cancel the order if the waiting time is too long.) An often-overlooked feature of these markets is the imbalance in patience between the two sides. Riders are typically time-sensitive and may abandon the platform if no match is available or if the expected wait is too long, whereas drivers can remain available while waiting for requests. Using terminology from inventory theory, this market operates under a one-sided backlog regime: supply backlogs and waits to be matched, whereas demand is lost sales that depart if not served upon arrival. Accordingly, we refer to demand-side agents as short-lived and supply-side agents as long-lived. Throughout the paper, we use long-lived and patient interchangeably for agents who can wait but incur a waiting cost per unit time, and likewise use short-lived and impatient interchangeably for agents who will leave if not matched upon arrival. Table \ref{Table.1} summarizes the two-sided markets that exhibit similar imbalances in patience levels between the two sides.
\begin{table}[htpb]
	\centering
	\caption{\label{Table.1} Examples of two-sided platforms with unbalanced patience levels}
	\begin{threeparttable}
		\begin{tabular}{llll}
			\toprule
			Two-sided markets & Supply side & Demand side & Matching mechanism\\
			\midrule
			Ride-hailing platforms (e.g., Uber, Lyft, Didi) & Drivers & Riders\tnote{a} & (De-)centralized \\ 
			Organ transplant exchange (e.g., liver) & Patients & Organs\tnote{b} & Centralized\\
			Freelancer platforms (e.g., Upwork, Fiverr) & Freelancers & Tasks\tnote{c} & Decentralized\\
            Marriage in traditional societies & Men & Women\tnote{d} & Decentralized\\
			\bottomrule
		\end{tabular}
		\begin{tablenotes}
			\item[a] It is estimated that the median waiting time for an Uber car is shorter than 5 minutes.
			\item[b] Harvested organs will perish in a few hours, while patients may remain on the waiting list for one or two years.
			\item[c] The task listing usually comes with a deadline, and the turnover can be fast due to fulfillment from third parties.
            \item[d] Subjected to traditional social norms that prescribe women to marry at younger ages than men, and burdened by a narrower childbearing age span, women exhibit more impatience than men, thereby prompting a stronger inclination toward early marriage.
		\end{tablenotes}
	\end{threeparttable}
\end{table}

\citet{baccara2020optimal} establish the optimal dynamic matching policy for a matching platform with long-lived agents on both sides, where agents can only leave the platform after being successfully matched. Their model is referred to as the full backlog case. However, the optimal dynamic matching policy remains poorly understood in the one-sided backlog case, where demand agents are short-lived. Understanding the structure of this optimal policy would have far-reaching implications for both centralized and decentralized market design. Additionally, it could help bridge the gap between dynamic matching with long-lived agents on both sides and that with short-lived agents on both sides, the latter effectively reduced to a sequence of independent one-shot (static) matching problems.


Motivated by the practical examples listed in Table \ref{Table.1}, we study the centralized and decentralized dynamic matching systems and analyze the impact of unbalanced patience. We focus on an infinite-horizon matching process on a two-sided platform. 
 Agents arrive at the platform sequentially: one supply agent and one demand agent arrive in each period. Supply agents are long-lived, remaining on the platform until matched, whereas demand agents are impatient and leave the platform if unmatched upon arrival. Each side has two types of vertically differentiated agents: H-type and L-type, representing high- and low-quality, respectively. The matching payoffs exhibit homogeneous preferences and satisfy the supermodular property, which intuitively means that the payoff from an (H, H) match (i.e., a match of H-type supply and H-type demand) and an (L, L) match is greater than that from an (H, L) match and an (L, H) match (we consistently use the first argument to denote the supply type and the second argument to denote the demand type). At the end of each period, unmatched supply agents incur per-period waiting costs. We characterize the optimal dynamic matching policy in the centralized setting and the welfare-maximizing equilibrium matching policy in the decentralized setting, i.e., the equilibrium that maximizes long-run social welfare among all possible equilibria. We then compare the induced matching policies and social welfare across the two settings. Finally, to isolate the role of patience imbalance, we benchmark welfare and policy structures across three systems when the arrival probabilities on both sides are identical: (i) a full backlog system (patient agents on both sides), (ii) a one-sided backlog system (impatient demand agents and patient supply agents), and (iii) a no backlog system (impatient agents on both sides). We highlight our main findings and contributions as follows. 

First, we characterize the structure of the optimal dynamic matching policy in the centralized setting. Upon the arrival of an H-type demand agent, the planner matches her with an available H-type supply agent whenever possible; if no such supply agent is present, the planner matches her with an L-type supply agent. Upon the arrival of an L-type demand agent, it is optimal to match her with an available L-type supply agent. If no L-type supply agent is available, the planner matches the L-type demand agent with an H-type supply agent only when the number of H-type supply agents exceeds a threshold. This optimal policy exhibits a threshold structure reminiscent of that in \citet{baccara2020optimal} for the full backlog system, where both sides are long-lived. However, the relevant thresholds and the implied matching priorities differ in our environment due to one-sided impatience on the demand side. In addition, we establish that optimal long-run social welfare is decreasing in the unit waiting cost.

Second, we study the decentralized setting in which matches form only upon mutual acceptance, and the matching payoff is split between the two agents in the match pair. We establish that, under a first-come-first-served (FCFS) discipline, there exists an equilibrium that maximizes long-run social welfare among all equilibria. The induced matching pattern is characterized by a priority-and-threshold structure. Specifically, an (H, H) match is formed whenever H-type agents are simultaneously available on both sides. An (L, H) match is formed only when no H-type supply agent is available. Conversely, an (H, L) match is formed only when the number of H-type supply agents exceeds a threshold. Finally, an (L, L) match is formed only when the number of H-type supply agents falls below this threshold and the number of L-type supply agents exceeds another threshold. This equilibrium contrasts with \citet{baccara2020optimal}, where, in the full backlog system with long-lived agents on both sides, the welfare-maximizing decentralized equilibrium admits (L, L) matches whenever L-type agents are available on both sides. Our one-sided impatience setting, by contrast, restricts (L, L) matches via endogenized thresholds that reflect the trade-off between preserving high-quality supply for future high-quality demand and limiting supply-side waiting costs. Moreover, we show that, unlike in the centralized planner’s problem, equilibrium social welfare in the decentralized setting may increase with the unit waiting cost. The intuition is that, while a higher unit waiting cost raises the per-period waiting burden of each long-lived supply agent, it also makes supply agents less patient. As the unit waiting cost continuously increases, the equilibrium matching thresholds on the queue length decrease discretely. This downward shift reduces the average number of waiting supply agents. The resulting savings in total waiting costs, driven by fewer long-lived supply agents waiting, can more than offset the aggregate increase in waiting costs, thereby raising overall decentralized social welfare.



Third, we demonstrate that when demand agents are more impatient than supply agents (i.e., supply agents are patient and demand agents are impatient), the decentralized setting can be effectively coordinated by appropriately tuning the matching payoff allocation ratio between matched supply and demand agents. By reshaping agents’ acceptance incentives, the payoff allocation influences equilibrium matching behavior and, in turn, the total queue length in the system. This coordination mechanism also differs from the results in \citet{baccara2020optimal}. In the full-backlog system with long-lived demand agents, however, such coordination is not achievable under FCFS. In that environment, changing the payoff split merely reallocates congestion between the two sides by shifting waiting from one queue to the other, without reducing the aggregate queue length. Consequently, the decentralized outcome continues to exhibit an inefficiently larger total backlog than the centralized optimum.

Finally, we investigate the value of patience by comparing long-run social welfare across three systems when the arrival probabilities for H- and L-type arrivals are identical on both sides. In the centralized setting, we find that social welfare is weakly ordered across the three systems: the full backlog system yields weakly higher welfare than the one-sided backlog system, which, in turn, yields weakly higher welfare than the no-backlog system. The intuition is that patience expands the planner’s feasible decision set by allowing unmatched agents to be carried forward, thereby postponing low-value matches and waiting for higher-value pairings. Moreover, we find that patience exhibits diminishing marginal returns: moving from ``no one can wait'' to ``one side can wait'' yields a larger increase in social welfare than moving from ``one side can wait'' to ``both sides can wait.'' This result has a market-design implication for centralized planners: rather than attempting to increase patience on both sides, a planner may want to prioritize ensuring that at least one side of the market is patient, because one-sided patience already delivers most of the social welfare improvement achievable. The patient side acts as a buffer stock that absorbs randomness in arrivals on the other side, enabling the planner to smooth mismatches over time and implement higher value (more assortative) pairings. If patience imbalance does not arise endogenously, a planner can be better off by inducing patience on at least one side through institutional design and incentive schemes, rather than devoting comparable effort to increasing patience on both sides.


In the decentralized setting, however,  social welfare in dynamic matching systems with different patience levels critically depends on the matching payoff allocation rule, which determines the private incentives for agents to wait for a better match. On the one hand, a small payoff share for H-type supply agents makes them unwilling to wait, which also weakens H-type demand agents' incentive to wait if they can. Less waiting by H-type agents favors L-type agents; matches occur more quickly, even though the generated rewards can be low. Thus, the steady state population in the system is low. In such a thin market, additional patience increases opportunities for future assortative matches, while the induced increase in waiting costs is limited; therefore, social welfare rises as the system moves from the no-backlog system to the one-sided backlog system to the full backlog system. On the other hand, a large payoff share for supply agents increases the incentive of H-type agents on both sides to wait for assortative matches, if they can. This, in turn, makes it more difficult for L-type agents to be matched, raising the system's steady-state population and leading to congestion. In such a thick market, aggregate waiting costs increase with the population of the systems, while the marginal improvement in match quality exhibits diminishing returns; hence, social welfare can fall as patience is expanded. Under approximately equal payoff sharing, whether waiting is beneficial depends on the value of the unit waiting cost. As an illustration, in the modern marriage market where women are more patient compared to traditional society, the ``unmarried women'' phenomenon emerges as an equilibrium from the interaction between shifting bargaining power and the rational adoption of more patient (i.e., selective) matching strategies, although social welfare may suffer from a utilitarian perspective.



In this paper, we refer to matches between agents of the same type, i.e., (H, H) and (L, L), as assortative matching. Matches between agents of different types, i.e., (H, L) and (L, H), will be referred to as cross-matching. Throughout the paper, we use $\lfloor x\rfloor$ to denote the greatest integer less than or equal to $x$ and $x^+$ to denote $\max\{x, 0\}$. We also employ the term ``increasing/decreasing" in a weak sense; that is, ``increasing/decreasing" means ``non-decreasing/non-increasing." All proofs are available in the Online Appendix.

\section{Literature Review}\label{sec:literature review}


For classic results on static matching in two-sided markets, we refer readers to \citet{roth1992twosidedmarket}; for a survey of the rapidly growing literature on dynamic matching, see \citet{doval2025dynamic}. Our work is related to the following four streams of literature.

\subsubsection*{Dynamic matching in two-sided markets.}

 Our work is most closely related to \citet{baccara2020optimal}, who study a similar dynamic matching problem but assume that agents on both sides are patient: they incur the same finite waiting costs and remain in the system until matched. In contrast, we introduce unbalanced patience by setting one side's waiting cost to be effectively infinite, resulting in immediate departure if unmatched. \citet{baccara2020optimal} find that the centralized optimal policy and the decentralized equilibrium share a similar threshold-based structure. However, under an FCFS protocol, the decentralized system cannot be fully coordinated to achieve the centralized optimum solely by adjusting the matching payoff allocation proportion. We depart from these findings in two key respects. First, we demonstrate that under unbalanced patience, the centralized and decentralized policies differ fundamentally in both matching priorities and structural thresholds. Second, we show that the decentralized equilibrium in our setting can be perfectly coordinated with the centralized optimum by appropriately tuning the payoff allocation proportion. Thus, the distinction between symmetric and asymmetric patience leads to qualitatively different implications for market design and efficiency. 

\citet{LOERTSCHER2022105383} study an infinite-horizon dynamic trading model (inspired by stock exchanges) with balanced arrivals. Unlike our specific waiting-cost structure, they model patience via a common discount factor, yet they similarly focus on long-lived agents on both sides. In addition, their setting restricts trade feasibility (e.g., low-value buyers cannot trade with high-cost sellers), whereas we assume matching occurs on a complete bipartite graph. \citet{hu2022dynamic} develop a finite-horizon framework for dynamic matching, characterizing centralized optimal policies and heuristics. In contrast, our analysis explicitly characterizes the decentralized equilibrium driven by agents' strategic incentives. \cite{aouad2025centralized} analyze how delegating pricing power to suppliers affects matching efficiency using a fluid approximation. They model agent impatience via exogenous departure rates on both sides. Their matching process follows a uniform random matching rule, whereas we adopt an FCFS protocol. Additionally, while they focus on pricing, we treat the payoff allocation proportion as an exogenous lever to coordinate the decentralized system. 

There is a growing stream of literature that investigates optimal matchmaking strategies and assortment planning to improve equilibrium outcomes in two-sided markets \citep{ashlagi2022assortment, aouad2023online, shi2023optimal, shi2025optimal}.
These studies assume a static pool of agents or finite sequential interactions, focusing on managing agent choices and search costs. In contrast, we study a dynamic matching process with stochastic agent arrivals and patience imbalance on the two sides.

\subsubsection*{Dynamic matching for barter exchange.} A distinct stream of literature studies dynamic matching for barter exchange. These studies can be categorized into models with homogeneous agents \citep{anderson2017efficient, akbarpour2020thickness, ashlagi2023matching} and heterogeneous agents \citep{herbst2016dynamic, ashlagi2019matching, aouad2022dynamic, freund2024two}. This stream of literature typically assumes that matching is probabilistic and that any two agents can trade. In contrast, we study a two-sided matching market in which same-side matches are prohibited. We further depart from this stream of literature by considering an unbalanced patience model: agents on one side are patient and wait in queues, while agents on the other side are impatient and leave immediately if unmatched.

\subsubsection*{Dynamic matching in queueing systems.}

Queueing provides a natural framework for modeling dynamic matching, though computing stationary matching rates across customer and server types is analytically challenging. Consequently, much of the literature relies on algorithmic approximations. \citet{ozkan2020dynamic} model a two-sided dynamic matching process through a queueing model. The planner maximizes the cumulative number of matchings over a finite horizon. They solve this problem via a continuous linear program (CLP) and establish the asymptotic optimality of the CLP-based policy in large markets. \citet{castro2020matching} study a parallel matching queue in which all agents are impatient and will abandon at exogenous rates. Unlike our model, they impose a constrained compatibility structure in which high-type demand can only be served by high-type supply. They provide explicit product forms of the steady-state distributions of this system.

A stream of literature utilizes queueing networks to study dynamic matching. \citet{afeche2022optimal} analyze multi-class, multi-server systems in which customers have preferences over servers, focusing on the design of matching topologies to balance waiting costs and matching rewards. \citet{caldentey2025designing} extend \citet{afeche2022optimal} by allowing consumers to choose the service class they wish to join. Other dynamic matching papers using queuing networks consider either infinitely patient agents \citep{gurvich2014dynamic, buke2015stabilizing, nazari2016optimal} or impatient agents with exogenous abandonment distributions \citep{kohlenberg2024greedy, kohlenberg2024cost, aveklouris2025matching}. These works primarily focus on centralized control and asymptotic optimality in large markets. They do not address the strategic waiting behavior of agents in a decentralized game, which is a central focus of our analysis. 

Some papers utilize queueing models to investigate the strategic behavior of agents in dynamic matching; for example, the cherry-picking behavior in ride-hailing platforms \citep{chu2018harnessing, castro2021randomized}. In their settings, drivers are strategic and prefer to match with high-value riders. However, there is a single driver type; consequently, riders are not strategic and have no driver preferences. \citet{leshno2022dynamic} studies the allocation of stochastically arriving items to a waiting list of strategic agents with heterogeneous preferences. A key distinction is that, in \citet{leshno2022dynamic}, items are passive and are assigned upon arrival, whereas our model features strategic agents on both sides, and agents on the demand side may leave without matching due to impatience. 


\subsubsection*{Resource allocation problems with multiple types.}
Our work also relates to resource allocation with multiple types. \citet{shumsky2009dynamic} and \citet{yu2015dynamic} analyze capacity allocation with upgrading, providing optimal policies and heuristics. Recent work extends this to resource pooling \citep{zhong2018resource} and multi-product inventory systems \citep{tang2025multiproduct, hao2025robust}. For a unifying perspective on a broad class of dynamic resource allocation problems, see \citet{balseiro2024survey}.
Our model differs in two aspects. First, in these studies, capacity is fixed at the beginning of the horizon, whereas our supply and demand evolve dynamically through stochastic arrivals. Second, they typically model one-way substitution (upgrading), which corresponds to an incomplete compatibility graph, whereas we consider a complete bipartite graph in which cross-matching is possible in both directions.


\section{Model}\label{sec:model}

Consider a two-sided matching platform with a supply side and a demand side. An agent arriving on the supply side is called a supply agent (he), while an agent arriving on the demand side is called a demand agent (she). Following \citet{baccara2020optimal}, we assume that one supply agent and one demand agent arrive in each period. On each side, there are two types of agents: H-type and L-type. H-type agents are of higher quality than L-type agents. 
In each period, a newly arrived supply agent is H-type with probability $p$ and L-type with probability $1-p$. Similarly, a newly arrived demand agent is H-type with probability $q$ and L-type with probability $1-q$, where $p, q\in[0,1]$.


Supply agents are long-lived and stay on the platform until they are matched. They incur a waiting cost of $h$ per period if unmatched. These agents are categorized into two distinct queues: H-type and L-type. In contrast, demand agents are impatient and will leave the platform if they are unmatched upon arrival.  When a match is formed, a matching payoff is generated, and both agents exit the platform. 

The matching payoff is determined by the types of agents in the matching pair. Let $r_{ij}$ denote the matching payoff between an $i$-type supply agent and a $j$-type demand agent for $i,j\in\{H, L\}$, which are also referred to as an $(i, j)$ pair, with the first argument referring to a supply type and the second argument referring to a demand type. We use $\mathbf{r}=(r_{HH}, r_{HL}, r_{LH}, r_{LL})$ to represent the payoff vector and assume that these matching payoffs satisfy the following two properties.

\begin{assumption}\textsc{(Homogeneous Preferences)}\label{as:hp}
	$r_{HH}\ge r_{HL}\ge r_{LL}$ and  $r_{HH}\ge r_{LH}\ge r_{LL}$.
\end{assumption}

The assumption of homogeneous preferences states that all agents share the same preferential ranking: matching an (H, H) pair yields the highest payoff, while matching an (L, L) pair yields the lowest. Matches involving  (H, L) or (L, H) result in an intermediate payoff. In other words, for any type of supply or demand agent, matching with an H-type agent generates a higher payoff than matching with an L-type agent. Additionally, we assume that the matching payoffs satisfy the following supermodular property.

\begin{assumption}\textsc{(Supermodularity)}\label{as:sm}
	$r_{HH}+r_{LL}\ge r_{HL}+r_{LH}$.
\end{assumption}

The supermodular payoff property is a common assumption in the matching literature, e.g., \citet{becker1991assortative},  \citet{LOERTSCHER2022105383}, and \citet{baccara2020optimal}. 
This property applies to situations in which the matching payoff grows increasingly rapidly as the quality of the supply agent or the demand agent increases.  
As an example, let us consider an additive payoff structure, with $r_{ij}=r_i^s+r_j^d$ for $i,j\in\{L,H\}$ and  $r_i^s$ and $r_j^d$ being increasing in $i$ and $j$, respectively.
Under the additive payoff structure, the matching payoff clearly grows linearly as the quality of the supply or demand agent increases.
We now modify the payoffs as follows.
Let $r_{LH}\leftarrow r_{LH}+\delta^d_H$ and $r_{HL}\leftarrow r_{HL}+\delta^s_H$, where $\delta_{H}^d$ ($\delta_H^s$) is nonnegative and can be considered as a ``bonus" payoff contributed by the high-quality demand (supply) agent. 
Further, let $r_{HH}\leftarrow r_{HH}+\gamma^s_H+\gamma^d_H$, where  $\gamma_{H}^d\ge \delta_H^d$ ($\gamma_H^s\ge \delta_H^s$) is the higher bonus payoff contributed by the high-quality demand (supply) agent when two type-H agents are matched. It is easy to see that the modified payoff structure (which includes linear payoffs as a special case) satisfies both homogeneous preferences and supermodularity.


For another example, if the matching payoff has a multiplicative form  $r_{ij}= r^s_i r^d_j$, where $r^s_i$ and $r^d_j$ represent the marginal contributions to the matching payoff from the $i$-type supply agent and the $j$-type demand agent, respectively, where $r^s_{H}\ge r^s_{L}$ and $r^d_{H}\ge r^d_{L}$. One can readily verify that those payoffs also satisfy both homogeneous preferences and the supermodular property. 

We assume that the waiting costs for H-type and L-type supply agents are identical and equal to $h>0$ per unit time, since the heterogeneity of supply agents is characterized by type-dependent matching payoffs.
In the following sections, we first examine a centralized platform that makes all matching decisions to maximize long-run average social welfare (defined as the total matching payoffs minus the total waiting costs) over an infinite horizon.  Then, we explore a decentralized platform where supply and demand agents can accept or reject matches to maximize their own surplus. Finally, we will compare the outcomes between the centralized and decentralized platforms.

\section{Centralized Matching}\label{sec:centralized model}

In the centralized setting, matching decisions are made by the platform. The state of the system in each period, after a pair of supply and demand agents arrive in the platform, can be represented by a vector $\mathbf{s}_{t}=(x_{H}, x_{L}, y_{H}, y_{L})$ with $x_{H}, x_{L} \in \mathbb{Z}_{+}$ indicating the number of type-H and type-L supply agents and $y_{H}, y_{L}\in \{0, 1\}$ indicating the number of type-H and type-L demand agents, where each element represents the number of each type available for matching. Specifically, $x_{H}$ and $x_{L}$ represent the number of H-type and L-type agents on the supply side, while $y_{H}$ and $y_{L}$ represent the number of H-type and L-type agents on the demand side, respectively. We use $\mathbf{a}_{t}\in\{u_{\phi}, u_{HH}, u_{HL}, u_{LH}, u_{LL}\}$ to denote the matching decision. 
In particular,  $u_{\phi}$ denotes no matching and $u_{ij}$ denotes matching an $i$-type supply with a $j$-type demand, for $i,j\in\{L,H\}$.

Since one demand agent arrives in each period, at most one match can occur. 
Therefore, the matching payoff generated in each period can be formulated as $\mathbf{r}\cdot \mathbf{a}_{t}:=\sum_{i,j\in\{L,H\}}r_{ij}\mathbf{1}_{\mathbf{a}_t=u_{ij}}$. The total waiting costs incurred by unmatched supply agents are $h(x_{H}+x_{L})$ if no match is formed ($\mathbf{a}_{t}=u_{\phi}$) or $h(x_{H}+x_{L}-1)$ if a match occurs ($\mathbf{a}_{t}=u_{ij}$ for some $i,j\in\{L,H\}$). Then, social welfare generated in each period is given by 
\begin{eqnarray*}
	R(\mathbf{s}_{t}, \mathbf{a}_{t})=\mathbf{r}\cdot \mathbf{a}_{t}-h(x_{H}+x_{L}-\mathbf{1}_{\mathbf{a}_{t}\neq u_{\phi}}).
\end{eqnarray*}

Under a matching policy $\pi$ which maps the state $\mathbf{s}_t$ into a matching decision $\mathbf{a}_t=\mathbf{\pi}(\mathbf{s}_{t})$, the long-run average profit of the central platform, which we define as social welfare, is given by 
\begin{eqnarray*}
	W(\pi)\equiv \lim_{T\rightarrow \infty}\frac{1}{T}\mathbf{E}\left[\sum^{T}_{t=1}R(\mathbf{s}_{t}, \mathbf{\pi}(\mathbf{s}_{t}))\right],
\end{eqnarray*}
where $\pi(\mathbf{s}_{t})$ specifies the matching policy in state $\mathbf{s}_{t}$ under $\pi$. The objective is to characterize the optimal matching policy that maximizes $W(\pi)$.

Without loss of generality, we assume that the system is empty at time 0. Matching decisions are made after the first pair of agents arrive at the platform, starting from period 1. Note that the state $\mathbf{s}_{t+1}$ in period $t+1$ only depends on $\mathbf{s}_{t}$, the matching decision in period $t$, and the types of arrived agents in period $t+1$. Thus, we can formulate the dynamic matching problem as a Markov Decision Process (MDP) and use the framework to characterize the optimal matching policy (see the Online Appendix).

\subsection{Optimal Dynamic Matching Policy}\label{subsec:optimal policy}
In this subsection, we describe the structure of the optimal matching policy. In Lemma \ref{lem: centralized threshold type} below, we demonstrate that it is not optimal to hold a large number of supply agents on the supply side.

\begin{lemma}\label{lem: centralized threshold type} If the total number of supply agents exceeds a certain threshold, it is optimal to match the arrived demand agent with one of the supply agents.
\end{lemma}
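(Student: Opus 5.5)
Our plan is a sample-path interchange argument on the average-reward MDP. The claim to establish is the following: there exists $N<\infty$ such that in any state with $x_H+x_L\ge N$, some optimal policy matches the arriving demand agent. We will take $N$ of order $\lceil (r_{HH}-r_{LL})/h\rceil+1$, possibly adjusted by constants. First we fix an optimal stationary policy $\pi$. Existence follows from the MDP formulation in the Online Appendix together with the fact that the "always match" policy keeps the queue bounded. Then we take a state $\mathbf{s}$ with $x_H+x_L\ge N$ in which $\pi$ chooses $u_\phi$. The goal is to construct a policy $\pi'$ that matches now and does at least as well.

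The construction is as follows. In state $\mathbf{s}$, policy $\pi'$ matches the arriving demand agent of type $j$ with a supply agent of type $i$. We pick $i=j$ when that type is available, and otherwise the only available type. Afterwards $\pi'$ mimics $\pi$ on the coupled arrival sequence, treating the removed supply agent as a ``phantom'' still in the queue. The first time $\pi$ would match the phantom (some agent of type $i$ that $\pi'$ no longer has), $\pi'$ substitutes an available supply agent of another type. If none is available, it simply does not match, which is the coupling moment: from then on the two queues coincide. Up to that moment, $\pi'$ saves exactly $h$ per period. The cost of the interchange is a payoff difference in the current match plus at most one later substituted or forgone match. By Assumption \ref{as:hp}, the payoff loss is at most $r_{HH}-r_{LL}$, or at most $r_{HH}$ if a future match is forgone. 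Since a forgone match is $\pi'$ skipping exactly when $\pi$ used the phantom, the queues then coincide, and the net effect is that $\pi'$ earned $r_{ij}$ earlier instead of $r_{ij'}$ later. Assumption \ref{as:sm} lets us bound this swap in the $(H,L)$ versus $(L,H)$ cases.

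The key quantitative step is to show that the saved waiting cost dominates. Because only one demand agent arrives per period, the queue can decrease by at most one per period. Hence, starting from $x_H+x_L\ge N$, the phantom's type class or the whole queue stays large for a while. More precisely, under FCFS within type, the phantom can be taken to be the last-in agent of its type. It is therefore not used by $\pi$ for at least as many periods as there are agents ahead of it, which is at least about $x_i-1$. This gives a saving of at least $h\,(x_i-1)$ against a loss of at most $r_{HH}-r_{LL}$.

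The main obstacle is that $x_i$ may be small even though $x_H+x_L$ is large, for instance when the arriving agent is $H$, $x_H=1$, and $x_L$ is huge. We will handle this by choosing $i$ as the type with the larger queue when the same-type queue is short. Doing so is still within ``matching with one of the supply agents'', and it guarantees $x_i\ge N/2$. The loss stays bounded by $r_{HH}-r_{LL}$, so $N=2\lceil (r_{HH}-r_{LL})/h\rceil+2$ will suffice. Finally, we take expectations over the coupled arrivals, which is legitimate because the coupling is pathwise. This shows that $\pi'$ weakly improves the relative value. Iterating over all such states, the modified policy remains optimal, which yields the lemma.
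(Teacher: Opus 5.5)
Your overall mechanism is the right one: remove one agent now, shadow $\pi$, bank $h$ per period until the discrepancy resolves, and pick the type with the larger queue so that $x_i\ge N/2$. The gap is the substitution rule at the phantom moment, which breaks the argument.

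\textbf{Why substitution fails.} If $\pi'$ replaces the missing type-$i$ agent by one of the other type, the queues do \emph{not} coincide afterwards. $\pi$ now holds one extra agent of the other type, so the discrepancy merely changes type and further phantom events can follow. Your claim of ``at most one later substituted or forgone match'' is therefore false.

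Worse, the forgo branch never fires. A supply agent arrives every period, so $\pi'$ always holds at least one agent after arrivals. Whenever it lacks the phantom's type, it therefore has an agent of the other type available. Hence the two processes never couple, and $\pi'$ runs one agent short of $\pi$ forever.

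Concretely, take $p=q$ and $p(1-p)r/6<h<p(1-p)r/2$, so $k^{ce}=1$. Once $\pi$ is in its recurrent class, it carries one agent over and matches every period. Your $\pi'$ then carries nobody and matches every arriving pair, earning the no-backlog welfare $pr_{HH}+(1-p)r_{LL}-p(1-p)r<W^{ce}$. Its cumulative reward relative to $\pi$ tends to $-\infty$, so no pathwise or relative-value dominance can be concluded.

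\textbf{The repair.} Drop substitution entirely. At the first time $\pi$ uses a type-$i$ agent while $\pi'$ has none (that is, $\pi$'s type-$i$ count is $1$), $\pi'$ forms no match. The states then coincide, and the net gain is $h(\tau-t)+r_{ij}-r_{ij'}$. Moreover $\tau-t\ge x_i$, because the type-$i$ count under $\pi$ falls by at most one per period. So $x_i\ge N/2$ with your choice of $N$ makes the net gain positive.

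This is exactly the paper's construction. There, $\pi'$ is identical to $\pi$ except that $\omega$ is matched at $t$, so $\pi'$ simply skips the match that $\pi$ later makes with $\omega$.

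Once repaired, your version is slightly more careful than the paper's. The paper's $\omega$ (the agent $\pi$ matches last, or never) is determined by future arrivals. Your $i$ is read off the current state, so your $\pi'$ is a genuinely non-anticipative policy.
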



Lemma \ref{lem: centralized threshold type} implies that the state space of the dynamic matching problem is finite. Based on this lemma, we characterize the optimal matching policy as follows.

\begin{theorem}[{\sc Optimal Dynamic Matching Policy}] \label{thm:optimal mechanism of centralized system} 
The optimal dynamic matching policy can be characterized by a threshold $k^{ce}\in \mathbb{Z}_+$ such that
	\begin{enumerate}
		\item (Greedy Matching) When an H-type demand agent arrives, it is optimal to match her with an H-type supply agent if one is available. Otherwise, it is optimal to match her with an L-type supply agent.		
		\item (Threshold Matching) When an L-type demand agent arrives, it is optimal to match her with an L-type supply agent if one is available. Otherwise, it is optimal to match her with an H-type supply agent only if the number of H-type supply agents is larger than $k^{ce}$. 
	\end{enumerate}
\end{theorem}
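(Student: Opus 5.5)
We will work with the average-reward optimality equation of the MDP. By Lemma \ref{lem: centralized threshold type}, the set of recurrent states under an optimal policy is finite, so the average-reward problem reduces to a finite unichain MDP. For this problem there is a constant $g^*$ (the optimal welfare) and a relative value function $v$ on the post-decision supply state $(x_H,x_L)$ such that
\begin{eqnarray*}
g^*+v(x_H,x_L)=-h(x_H+x_L)+\mathbf{E}\big[\max_{a}\{r\cdot a+v((x_H,x_L)+e_s-\text{removed}(a))\}\big],
\end{eqnarray*}
where $e_s$ is the arriving supply type and the demand type is drawn independently. We will obtain $v$ as the limit of discounted value functions $v_\beta$ as $\beta\to1$, or of finite-horizon value functions $V_n-V_n(0,0)$. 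The plan is to prove structural properties of $V_n$ by induction on $n$ and then pass them to the limit.

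The properties to propagate are these, with differences $\Delta_H V(x_H,x_L)=V(x_H+1,x_L)-V(x_H,x_L)$ and $\Delta_L V$ defined analogously.

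(i) \emph{Concavity in $x_H$}: $\Delta_H V$ is decreasing in $x_H$.

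(ii) \emph{Bounded marginal values}: $\Delta_H V\le r_{HH}-r_{LH}+\Delta_L V$ and $\Delta_L V\le r_{LL}$. Dropping one supply agent changes future welfare by at most what he can ultimately earn. A cleaner form is $\Delta_H V-\Delta_L V\le r_{HH}-r_{LH}$ together with $\Delta_H V-\Delta_L V\ge r_{HL}-r_{LL}$, up to the first-period caveat.

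(iii) \emph{Submodularity or monotone cross-differences}, enough to make comparisons between removing an H agent and removing an L agent well behaved.

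Given these properties, the two claims of Theorem \ref{thm:optimal mechanism of centralized system} read off the Bellman comparisons as follows.

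For an H-type demand agent, matching H beats matching L iff $r_{HH}-\Delta_H V\ge r_{LH}-\Delta_L V$, which is property (ii). Matching at all beats not matching iff $r_{\cdot H}\ge \Delta V$, which follows from $\Delta V\le r_{LH}$-type bounds. These bounds hold because holding an agent costs $h>0$ and he can earn at most $r_{HH}$ later via an H demand that could be served now. For this step we will use an interchange (coupling) argument rather than pure induction: any policy that skips an H demand, or uses L when H is available, can be modified by swapping the roles of the two supply agents along the sample path, and supermodularity (Assumption \ref{as:sm}) together with homogeneity (Assumption \ref{as:hp}) show that the modification weakly increases payoff while not increasing waiting cost.

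For an L-type demand agent, serving with L when available is shown by the same swap argument. The coupled path uses the kept H agent where the original used the L agent. If the original later matched that H with an L demand, the payoffs swap exactly. If it matched H with H demand, the coupled path gives $r_{LH}+r_{HL}$-type comparisons favorable by supermodularity.

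It remains to show that matching L demand with H supply when no L supply is present, i.e.\ $r_{HL}\ge \Delta_H V(x_H-1,x_L)$, holds exactly for $x_H>k^{ce}$. This follows from concavity (i), since $\Delta_H V$ decreases in $x_H$, together with a check that the relevant threshold does not depend on $x_L$. We will then show that when $x_L=0$ the comparison uses only the $x_L=0$ slice, so $k^{ce}$ can be defined on that slice alone; whenever $x_L>0$ we use L supply anyway. The fact that $k^{ce}$ is finite comes from Lemma \ref{lem: centralized threshold type}.

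The main obstacle is propagating concavity in $x_H$ through the Bellman operator, because the maximization switches between actions. We will handle it by writing the post-decision operator on the slice $x_L=0$ as a one-dimensional inventory-type recursion: H demand removes one unit, while L demand removes one unit only if it is beneficial. The maximum $\max\{V(x),\,r_{HL}+V(x-1)\}$ preserves concavity, as the standard single-product admission-control argument shows. States with $x_L>0$ never require the H-versus-no-match comparison, so concavity is needed only on that slice.
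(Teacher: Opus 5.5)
\paragraph{The main gap: concavity cannot be propagated on the slice $x_L=0$ alone.} The step you single out as the main obstacle is where the proposal breaks down. The slice $x_L=0$ is not closed under the dynamics, even under the greedy rules you establish first. From a post-decision state $(x_H,0)$ with $x_H\ge 1$, an L-type supply arrival paired with an H-type demand leads, after the greedy $(H,H)$ match, to $(x_H-1,1)$. Further such pairs push the state to $(x_H-2,2)$ and beyond.

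Concretely, for this arrival pair the expectation defining $V_{n+1}(x_H,0)$ contains the term $r_{HH}+V_n(x_H-1,1)$ if $x_H\ge 1$, and $r_{LH}+V_n(0,0)$ if $x_H=0$ (up to terms linear in $x_H$). For $x_H\ge1$ its second difference is that of $V_n(\cdot,1)$. At $x_H=0$ it equals $\Delta_H V_n(0,1)-\Delta_L V_n(0,0)-(r_{HH}-r_{LH})$.

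Your one-dimensional recursion (H demand removes a unit, L demand removes one only if beneficial) silently discards the L-supply arrivals. It is the value function of a different, restricted problem, and its concavity says nothing about $\Delta_H V(\cdot,0)$ in the true one. ``Concavity is needed only on that slice'' is correct for reading off the policy but false for the induction. You must carry concavity in $x_H$ on every slice $x_L\ge 0$ through value iteration on the whole truncated two-dimensional state space. This has to be done jointly with the marginal bounds and the two-sided cross-difference bound $r_{HL}-r_{LL}\le \Delta_H V-\Delta_L V\le r_{HH}-r_{LH}$. This is how the paper proceeds with its properties (i)--(iv).

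Even then, at the boundary $x_H=0$ the upper cross bound only bounds the term above by $\Delta_L V_n(0,1)-\Delta_L V_n(0,0)$. Splitting the difference another way, it bounds it by $\Delta_L V_n(1,0)-\Delta_L V_n(0,0)$. So you also need a monotonicity property of $\Delta_L V$, either concavity in $x_L$ or the submodularity your item (iii) vaguely alludes to. It must be stated precisely and propagated along with the rest.

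\paragraph{Two further gaps, and a simpler route to the threshold.} First, $\Delta_L V\le r_{LL}$ is the inequality that decides between serving an L demand from L supply and leaving her unmatched. Neither of your justifications delivers it. ``At most what he can ultimately earn'' only gives a bound in terms of $r_{LH}$, since a retained L agent may later serve an H demand. Your swap argument compares L supply with H supply, not with no match. The missing idea is that any H demand the retained L agent would later serve arrives together with a fresh supply agent who could serve her instead. The paper gets this as its property (ii), $\Delta_{x_L}V_t\le r_{LL}+h$ in its timing, inside the same induction.

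Second, the MDP is not unichain. Exactly one supply agent arrives and at most one match occurs per period, so the total number of waiting supply agents never decreases. The optimal gain therefore depends on the initial state. Hence a constant-$g^*$ optimality equation on all states fails, and $V_n-V_n(0,0)$ can diverge at states whose total exceeds $k^{ce}$. Instead, pass the structure to the limit through convergence of the value-iteration decision rules (as the paper does via Theorem 9.4.5 of Puterman) or through a Blackwell-optimal policy.

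This same monotonicity also gives a cleaner route to the threshold, once the greedy rules hold for an optimal stationary policy. The only remaining decision is at post-arrival states $(x_H,0)$ with an L demand, where $x_H$ equals the current total. So, starting from the empty system, the policy's long-run welfare depends only on the level $k$ at which it stops letting the total grow. That welfare equals the $W(k)$ computed in the proof of Proposition~\ref{cen: prop optimal threshold}, and maximizing over $k$ yields $k^{ce}$ without any concavity of $V$.
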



Theorem \ref{thm:optimal mechanism of centralized system} shows that when an H-type demand agent arrives, the platform prefers to match her with an H-type supply rather than an L-type supply because the payoff is higher. Furthermore, the platform has no incentive to withhold any supply agent in the face of an H-type demand, since they would not receive an even higher type than a high type.

When an L-type demand agent arrives, the platform prefers matching her with an L-type supply rather than an H-type supply, if both are available. This is because, due to the supermodular payoff structure (Assumption \ref{as:sm}), it is more profitable to withhold an H-type supply agent (than an L-type supply agent) for a better match in the future. 
In fact, the platform does not have an incentive to withhold any L-type supply agent for a future H-type demand agent, as that demand agent will arrive with a supply agent, which she can potentially match with to receive a matching payoff no less than the payoff from matching with the withheld L-type supply agent. 
However, if at any point only H-type supply agents are available, the platform will (resp., will not) perform a match when the number of supply agents exceeds (resp., falls short of) a threshold.
Intuitively, with many (resp., few) H-type supply agents, it takes a long (resp., short) time for a withheld supply agent to get his turn to match with an H-type demand agent, making the option of withholding a supply agent in the current period less (resp., more) attractive.

We note that the structure of the optimal matching policy under one-sided backlog differs from that under full backlog, as discussed in \citet{baccara2020optimal}. In their study, it is optimal to perform greedy matching of (H, H) and (L, L) pairs as much as possible and to cross-match H-type supply with L-type demand (or L-type supply with H-type demand) until a threshold is reached. However, in their paper, the thresholds for matching the pairs (L, H) and (H, L) are the same (i.e., due to symmetric payoffs and costs as well as symmetric arrival rates), whereas in our model, the threshold for matching a pair (L, H) is zero, while the threshold for matching a pair (H, L) is a non-negative value $k^{ce}$.  More critically,  the matching priorities are different: under our one-sided backlog setting, it is optimal to first greedily match the (H, H) pairs, then greedily cross-match the (L, H) pairs, followed by greedily matching the (L, L) pairs, and finally, by cross-matching the (H, L) pairs down to the threshold on the number of H-type supply, if possible.

Under any policy of the threshold-level structure in Theorem \ref{thm:optimal mechanism of centralized system}, the underlying Markov chain of the centralized matching system is ergodic since the state space is finite. Thus, for a given threshold, we can compute the unique stationary distribution of the Markov Chain. We can then determine the optimal threshold $k^{ce}$ by trading off the cost and benefit of waiting. We provide the closed-form expression for the optimal threshold in the following proposition.

\begin{proposition}[{\sc{Optimal Threshold on H-Type Supply}}]\label{cen: prop optimal threshold}  Under the optimal dynamic matching policy, the optimal threshold $k^{ce}$ is given by	
\begin{align*}
	k^{ce}=\left\{
\begin{array}{ll}
 \left\lfloor\frac{\ln\left(h(1+\delta)+p(1-q)r(1-\delta)^2-\sqrt{\left(h(1+\delta)+p(1-q)r(1-\delta)^2\right)^2-4h^2\delta}\right)-\ln\left(2h\delta\right)}{\ln\delta}\right\rfloor& \hbox{if $p>q$,}\vspace{2mm} \\ 
 \left\lfloor\frac{-h+\sqrt{h^2+4hp(1-q)r}}{2h}\right\rfloor& \hbox{if $p=q$,}\vspace{2mm}\\
 \left\lfloor\frac{\ln\left(h(1+\delta)+p(1-q)r(1-\delta)^2+\sqrt{\left(h(1+\delta)+p(1-q)r(1-\delta)^2\right)^2-4h^2\delta}\right)-\ln\left(2h\delta\right)}{\ln\delta}\right\rfloor& \hbox{if $p<q$,}
\end{array}	
	\right.
\end{align*}
where $\delta:=\frac{q(1-p)}{p(1-q)}$ and $r:=r_{HH}+r_{LL}-r_{LH}-r_{HL}$. Moreover, $k^{ce}$ increases in $r$ and decreases in $h$.
\end{proposition}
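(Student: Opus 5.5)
The plan is to fix the policy structure from Theorem \ref{thm:optimal mechanism of centralized system}, write long-run welfare as an explicit function $W(k)$ of the threshold $k$, and choose $k^{ce}$ as the largest $k$ at which raising the threshold still weakly improves welfare. I expect $W(k)$ to be unimodal (discretely concave) in $k$, which would make this first-order condition sufficient.

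\emph{Step 1: reduce to a one-dimensional chain.} Under the policy, the only period in which no match occurs is one where an L-type demand arrives, no L-type supply is present, and $x_H \le k$. Otherwise exactly one supply agent leaves per period, so the total queue stays bounded, by roughly $k+1$.

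I will then argue that under the greedy rules the relevant recurrent states are captured by a single integer:
\begin{itemize}
\item an L-type supply agent is consumed immediately by the next demand unless an H-type supply agent is present to absorb an H-type demand;
\item consequently, a net ``H-surplus'' index $n$ (the number of H-type supply agents, or minus one when a single L-type supply agent is waiting) evolves as a birth--death chain on $\{-1,0,\dots,k\}$ or $\{0,\dots,k\}$.
\end{itemize}
The one-step transitions are:
\begin{itemize}
\item up with probability $p(1-q)$ (H-type supply arrives with L-type demand);
\item down with probability $q(1-p)$ (L-type supply arrives with H-type demand);
\item otherwise stay.
\end{itemize}
The boundary at $k$ is reflected by the $(H,L)$ cross-match. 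Detailed balance then gives a truncated geometric stationary distribution $\pi_n \propto \delta^{-n}$ with $\delta = q(1-p)/(p(1-q))$. When $p=q$ (so $\delta=1$) it is uniform.

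\emph{Step 2: write welfare in terms of $r$.} Count matches by type. Because every period carries exactly one demand agent and, in steady state, supply inflow equals outflow, the match counts satisfy linear flow identities. These should let the total payoff be written as a constant (independent of $k$) plus $r$ times the long-run rate of $(H,H)$ matches (equivalently of $(L,L)$ matches). Here $r = r_{HH}+r_{LL}-r_{LH}-r_{HL}$.

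In the chain, the $(H,H)$ match rate gained relative to the no-waiting system is the rate at which an H-type demand meets stored H-type supply. Using the stationary distribution, this rate is $p(1-q)\bigl(1-\pi_k\bigr)$ up to a boundary term. Combining,
\[
W(k) = C + r\,p(1-q)\bigl(1-\pi_k(k)\bigr) - h\sum_n n\,\pi_n(k),
\]
where $C$ does not depend on $k$ and $\pi_n(k)$ makes explicit that the stationary distribution depends on the threshold. All sums are finite geometric series in $\delta$.

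\emph{Step 3: the first-order condition.} I will compute $W(k+1)-W(k)$ and show its sign equals the sign of a quadratic in $z=\delta^{k+1}$ (or in $k$ when $\delta=1$). For $p=q$ I expect the condition to reduce to $h(k+1)^2 + h(k+1) \le p(1-q)r$ in form. This gives the threshold $\lfloor(-h+\sqrt{h^2+4hp(1-q)r})/(2h)\rfloor$.

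For $\delta\neq1$ the condition should read
\[
h\delta z^2 - \bigl(h(1+\delta)+p(1-q)r(1-\delta)^2\bigr)z + h \ge 0 .
\]
Solving this quadratic, taking logarithms base $\delta$, and noting that dividing by $\ln\delta$ flips the inequality according to whether $\delta<1$ ($p>q$) or $\delta>1$ ($p<q$), selects the minus or plus root respectively.

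The main obstacle is this algebra together with showing unimodality: the increment must change sign exactly once, so that the floor of the root is the maximizer. I will handle this by showing the increment, normalized by the positive denominator product, is a quadratic in $z$ that is positive at small $k$ and has exactly one root in the admissible range of $z$.

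\emph{Step 4: comparative statics.} Increasing $r$ raises the linear coefficient of the quadratic, which enlarges the region where the increment is nonnegative. Increasing $h$ scales the cost terms up. In the explicit formulas this shows up as $k^{ce}$ depending only on $r/h$, and being increasing in $r/h$. Hence $k^{ce}$ increases in $r$ and decreases in $h$.
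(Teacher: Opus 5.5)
Your overall route (fix the threshold policy, compute a truncated-geometric stationary distribution, maximize $W(k)$ through its increments) is the paper's. However, Step~1 has a genuine gap, and it carries into the cost term of Step~2.

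Under the threshold policy, L-type supply agents do accumulate. When an L-type supply agent arrives together with an H-type demand agent while $x_H>0$, the demand is matched $(H,H)$ and the L-type agent stays, e.g.\ $(k,0)\to(k-1,1)$. So the recurrent states are not indexed by ``the number of H-type agents, or $-1$ when a single L-type agent waits.'' The right observation is that the total number of waiting supply agents, counted before arrivals, never decreases. Every period brings one supply agent, and a period without a match occurs only when $x_L=0$, the arrivals are an H-type supply and an L-type demand agent, and $x_H+1\le k$. Hence the chain is absorbed into $\{(x_H,x_L):x_H+x_L=k\}$. On this set $x_H\in\{0,\dots,k\}$ performs exactly the birth--death walk you describe: up w.p.\ $p(1-q)$, down w.p.\ $q(1-p)$, an $(H,L)$ cross-match at $x_H=k$, and an $(L,H)$ cross-match at $x_H=0$. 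This gives $\phi(k-i,i)\propto\delta^i$. The key consequence is that exactly $k$ agents wait in every recurrent state, so the average waiting cost is $hk$. It is not $h\sum_n n\pi_n(k)$, which counts only H-type agents and, with your $-1$ state, even assigns a negative count to a waiting L-type agent.

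This matters because the stated closed forms depend on the marginal cost of raising the threshold being exactly $h$. With your cost term and $p=q$, you get $\pi_n=1/(k+1)$ and $\sum_n n\pi_n=k/2$. Hence $W(k+1)-W(k)=\frac{p(1-p)r}{(k+1)(k+2)}-\frac{h}{2}$, which yields $\lfloor(-h+\sqrt{h^2+8hp(1-p)r})/(2h)\rfloor$ rather than the stated threshold. The condition you ``expect'' in Step~3 therefore does not follow from your own $W(k)$.

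With the correct cost, your revenue accounting is fine: the $(H,L)$ rate is $p(1-q)\phi(k,0)$. For $\delta\neq1$ this gives $W(k)=\mathrm{const}-p(1-q)r\frac{1-\delta}{1-\delta^{k+1}}-kh$ and $W(k)-W(k-1)=p(1-q)r(1-\delta)^2\frac{\delta^k}{(1-\delta^k)(1-\delta^{k+1})}-h$.

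Note also the sign in Step~3. With $z=\delta^k$, this increment is nonnegative iff $h\delta z^2-\bigl(h(1+\delta)+p(1-q)r(1-\delta)^2\bigr)z+h\le 0$, i.e.\ iff $z$ lies between the roots $z_-\le z_+$. At $z=1$ the quadratic equals $-p(1-q)r(1-\delta)^2\le 0$, so this interval contains $1$. The condition is therefore $\delta^k\ge z_-$ for $\delta<1$ and $\delta^k\le z_+$ for $\delta>1$. Since $\delta^k$ moves monotonically away from $1$, the set of $k$ with a nonnegative increment is an initial segment. This gives the single crossing you need and the correct root selection. Because the increment is increasing in $r$ and decreasing in $h$, it also gives your Step~4.
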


Proposition \ref{cen: prop optimal threshold} shows that the optimal threshold increases with the degree of supermodularity $r=r_{HH}+r_{LL}-r_{HL}-r_{LH}$. For example, if the payoffs satisfy the multiplicative form, i.e., $r_{ij}=r_{i}^{s}r_{j}^{d}$,  the degree of supermodularity can be expressed as $r_{HH}+r_{LL}-r_{HL}-r_{LH}=(r^{s}_{H}-r^{s}_{L})(r^{d}_{H}-r^{d}_{L})$. In this case, $r^{s}_{H}-r^{s}_{L}$ and $r^{d}_{H}-r^{d}_{L}$ represent the benefits of waiting for a better matching as an L-type supply agent and an L-type demand agent, respectively. Therefore, a higher degree of supermodularity implies that the platform planner has a stronger incentive to retain more H-type supply agents for potentially more rewarding matches. The optimal threshold $k^{ce}$ also decreases with the unit waiting cost $h$. If the unit waiting cost is extremely high, $k^{ce}$ will be zero, all supply agents will be matched upon arrival, and the platform planner will be reluctant to retain any supply agents under the optimal matching policy.

Note that when the probability of an H-type supply agent arriving is equal to the probability of an H-type demand agent arriving,  i.e., $p=q$,  the optimal threshold can be simplified as 
\begin{equation}\label{eq:centralized threshold}
    k^{ce}=\left\lfloor\frac{-h+\sqrt{h^2+4hp(1-p)(r_{HH}+r_{LL}-r_{LH}-r_{HL})}}{2h}\right\rfloor.
\end{equation}
In this case, the optimal threshold increases in $p(1-p)$, which represents the probability of having an incongruent pair, i.e., (H, L) or (L, H), arriving within a given period. Moreover, the optimal threshold reaches its maximum value when $p=1-p=1/2$. This implies that if the arrival processes of all types are perfectly balanced, the centralized planner exhibits the greatest patience and accommodates the longest waiting queue of supply agents.

With the established optimal threshold expression, we can also derive the optimal social welfare expression.
\begin{proposition}[{\sc{Optimal Social Welfare}}]\label{prop: centralized welfare} In the centralized setting, the optimal social welfare is given by 
\begin{align}\label{eq:optimal sw}
	W^{ce}=\left\{
\begin{array}{ll}
 qr_{HH}+(1-q)r_{LL}+p(1-q)(r_{HH}-r_{LL})(1-\delta)-p(1-q)r\frac{1-\delta}{1-\delta^{k^{ce}+1}}-k^{ce}h& \hbox{if $p\neq q$,}\vspace{2mm} \\ 
 pr_{HH}+(1-p)r_{LL}-\frac{p(1-p)r}{k^{ce}+1}-k^{ce}h& \hbox{if $p=q$.}
 \end{array}	
	\right.
 \end{align}
 Moreover, $W^{ce}$ is decreasing in $h$.
\end{proposition}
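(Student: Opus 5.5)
We plan to compute the long-run average welfare of the threshold policy of Theorem \ref{thm:optimal mechanism of centralized system}, with a generic threshold $k$, from its stationary distribution, and then substitute $k=k^{ce}$.

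\textbf{Reduction to a birth--death chain.} Under this policy L-type supply never accumulates. An arriving L supply agent either is matched immediately (to H demand if no H supply waits, or to L demand), or he becomes the lowest-priority option and is matched before any H supply. So we track $x_H$ at the end of each period, which lies in $\{0,\dots,k\}$. It increases by one exactly when an $(H,L)$ arrival pair occurs (probability $p(1-q)$) and $x_H<k$. It decreases by one when an $(L,H)$ arrival pair occurs (probability $q(1-p)$) and $x_H\ge1$, since the H demand takes a waiting H supply and the L supply waits and is then used by a later L demand. We must check this carefully, because an L supply may wait until an L demand arrives. A cleaner way is to track the net state $z=x_H-x_L$, which the policy keeps as a single birth--death chain with up-rate $p(1-q)$ and down-rate $q(1-p)$. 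We verify that $x_L>0$ only transiently, which it is not: at most one of $x_H,x_L$ is positive, since an L supply with waiting H supply is matched to the next demand. Detailed balance then gives $\pi_j\propto\delta^{-j}$ on $\{0,\dots,k\}$ when $p\neq q$, and the uniform distribution $1/(k+1)$ when $p=q$.

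\textbf{Welfare accounting.} We write $W=\sum$ (match rates $\times$ payoffs) $-h\,\mathbb{E}[\text{queue}]$. All H demand is served. Flow balance of H supply gives the rate of $(H,H)$ matches, and total matches equal the demand served. Equivalently, we start from the ``ideal'' value $qr_{HH}+(1-q)r_{LL}+\ldots$ and subtract $r$ times the rate of cross-matches beyond the unavoidable ones. The mismatched rate is $p(1-q)\pi_k$, namely $(H,L)$ matches forced at the boundary, so the loss term is $p(1-q)r\pi_k$ plus the fixed imbalance term. With $\pi_k=(1-\delta)/(1-\delta^{k+1})$ for $p\ne q$ and $1/(k+1)$ for $p=q$, this reproduces the matching terms in \eqref{eq:optimal sw}. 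The holding term, $h$ times the mean queue length, should reduce to $k^{ce}h$. We will check this against the stated formula, since the mean queue under the stationary law is not literally $k$. Reconciling this term (i.e., confirming the paper's timing convention, under which the displayed formula is exact) is where we expect the main bookkeeping obstacle. If needed, we will recompute the mean via $\sum j\pi_j$ and show that it matches the paper's expression through the relation with the optimality condition of Proposition \ref{cen: prop optimal threshold}.

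\textbf{Monotonicity in $h$.} We argue by an envelope argument rather than the formula. For each fixed stationary policy $\pi$, $W(\pi)=(\text{expected payoff rate})-h\,\mathbb{E}_\pi[\text{queue}]$ is affine and decreasing in $h$, because the queue is nonnegative. Hence $W^{ce}(h)=\max_\pi W(\pi)$ is a maximum of decreasing functions and is therefore decreasing. Lemma \ref{lem: centralized threshold type} ensures that the maximum is over a finite family of policies, so this step is straightforward.
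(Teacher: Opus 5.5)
Your stationary law for $x_H$ (a birth--death chain on $\{0,\dots,k\}$ with $\pi_k=(1-\delta)/(1-\delta^{k+1})$, or $1/(k+1)$ when $p=q$) is sound. So is your flow-balance accounting of the payoff rate, which is a tidier substitute for the paper's state-by-state summation. Your monotonicity argument, a maximum of decreasing affine functions of $h$, is exactly the paper's envelope argument.

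The gap is the holding-cost term, and it comes from a wrong picture of the dynamics. It is not true that L-type supply never accumulates, nor that at most one of $x_H,x_L$ is positive. From $(x_H,x_L)=(k,0)$ with $k\ge 1$, suppose an L supply and an H demand arrive. The greedy rule sends the H demand to a waiting H supply, and the new L supply waits, giving $(k-1,1)$. L supply leaves only when an L demand arrives or when an H demand finds no H supply. Hence the queue is not $x_H$, and $h\sum_j j\pi_j$ (which equals $hk/2$ when $p=q$) is not the waiting cost. Your fallback of reconciling this through the optimality condition for $k^{ce}$ cannot work. That condition only supplies inequalities that pin down $k^{ce}$, whereas the welfare expression is an identity that must hold for every threshold $k$.

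The missing observation is that the total number of waiting supply agents is pinned at $k$. Let $N$ be the pre-arrival total.
\begin{itemize}
\item An H demand is always matched, since a supply agent has just arrived.
\item An L demand goes unmatched only if, after arrivals, $x_L=0$ and $x_H\le k$. Then all $N+1$ supply agents present are H-type, so this requires $N\le k-1$.
\end{itemize}
Thus the total rises by one in such periods and is otherwise unchanged, so it never exceeds $k$. The set $\{(k-i,i):0\le i\le k\}$ is therefore the unique recurrent class. On it exactly one match occurs every period, and the chain is your birth--death chain with $x_L=k-x_H$.

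In each recurrent period, $k+1$ supply agents are present after arrival and one is matched. So the period cost $h(x_H+x_L-\mathbf{1}_{\mathbf{a}_t\neq u_{\phi}})$ equals $hk$ identically, which is how the paper gets $C(k)=kh$. Combining this with your accounting yields $W(k)=qr_{HH}+(1-q)r_{LL}+(p-q)(r_{HH}-r_{LH})-p(1-q)r\,\pi_k-kh$, with $p-q=p(1-q)(1-\delta)$.

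Note that the imbalance term carries $r_{HH}-r_{LH}$, as in the paper's appendix derivation, rather than the $r_{HH}-r_{LL}$ printed in the statement. The case $k=0$, where $W=pqr_{HH}+p(1-q)r_{HL}+q(1-p)r_{LH}+(1-p)(1-q)r_{LL}$, confirms that the printed version is a typo. It matters only when $p\neq q$.
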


Note that the supermodular property suggests that assortative matching is more desirable than cross-matching. We can express the optimal social welfare in four parts. The first part, $q r_{HH}+(1-q) r_{LL}$,  is the revenue generated from assortative matching (i.e., if any demand agent arrives matched with a supply agent of the same type). The second part, $p(1-q)(r_{HH}-r_{LL})(1-\delta)$, represents revenue adjustments due to demand and supply mismatches: if $p>q$, this term is positive, indicating that more revenue can be generated by matching demand with additional H-type supply agents; whereas if $p<q$, this term is negative and indicates a revenue loss since extra H-type demand agents can only be matched with L-type supply agents. We note that this term vanishes when $p=q$. The third part, $-p(1-q)r\frac{1-\delta}{1-\delta^{k^{ce}+1}}$ if $p\neq q$ or $-\frac{p(1-p)r}{k^{ce}+1}$ if $p=q$, accounts for revenue loss due to cross-matching to avoid waiting costs. The final part, $k^{ce} h$, represents the average waiting cost incurred by unmatched supply agents. As the unit waiting cost increases, the optimal threshold decreases. While the average waiting cost may rise or fall, reducing the optimal threshold increases cross-matching and results in greater revenue loss. This additional revenue loss outweighs any potential gains from lower waiting costs. Consequently, overall social welfare decreases. In Figure \ref{figure:centralized sw}, we display an example illustrating how the optimal social welfare changes as the unit waiting cost increases, where $p=0.5$, $q\in\{0.1, 0.3, 0.5, 0.7, 0.9\}$, $\alpha=0.5$, and $\mathbf{r}=(800, 50, 50, 0)$. As plotted, the optimal social welfare decreases continuously with respect to $h$. 
	
 \begin{figure}[pt] 
 \centering
 \includegraphics[width=0.5\textwidth]{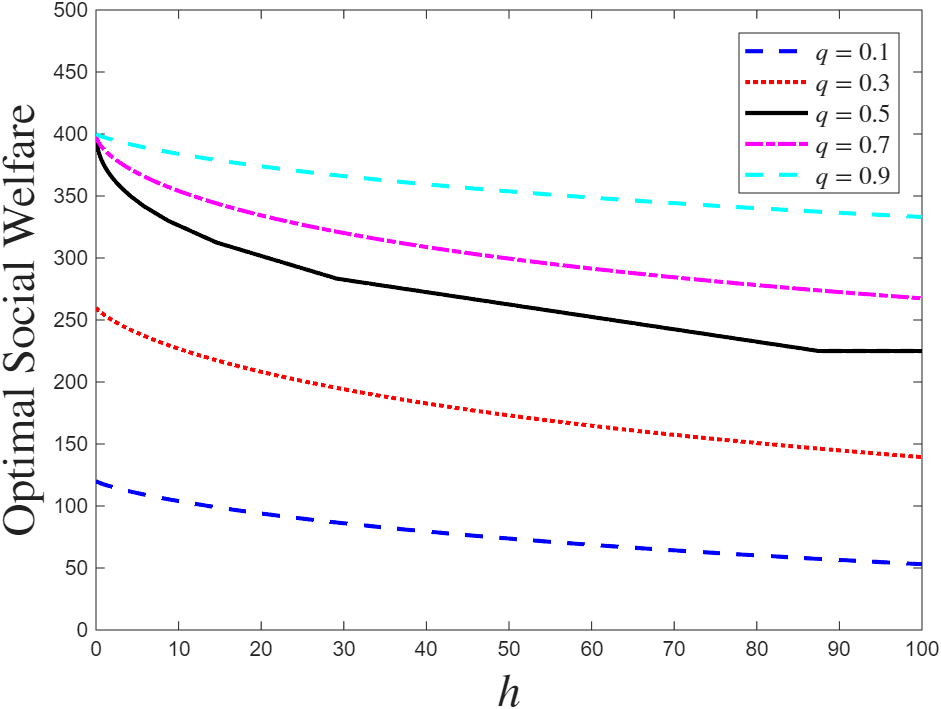}
 \caption{Impact of $h$ on the optimal social welfare in centralized system when $p=0.5$}\label{figure:centralized sw}
 \end{figure}

\section{Decentralized Matching}\label{sec:decentralized model}
In this section, we consider a decentralized setting in which matching decisions are made by self-interested agents. The arrival processes of the supply and demand agents are the same as in the centralized matching setting. We first introduce the decentralized matching setting. Then, we characterize the equilibrium behaviors of the supply and demand agents.

\subsection{Decentralized Matching Model}\label{subsec: decentralized model}

In the decentralized matching setting, agents on the platform can freely declare their matching preferences and willfully accept or reject a matching invitation from the other side. We analyze the matching process over an infinite time horizon and assume that the arrival processes of supply and demand agents under the decentralized model are identical to those under the centralized model.  In each period, a pair of supply and demand agents arrives on the platform, and the type of each agent is revealed upon arrival. We assume that the types of agents on both sides are publicly observable by all participants, including the platform and all agents. 

Consider an online freelancer platform (e.g., Upwork or Fiverr) that matches freelancers (the supply side) with projects posted by clients (the demand side). Agents arrive sequentially and randomly over time: in each period, new projects may be posted, and new freelancers may become available. Quality is publicly observable on both sides. Each project can be labeled as high or low based on a posted attribute, such as task complexity (e.g., a complex software-development project versus a routine data-entry task), and each freelancer can be labeled as high or low based on a publicly visible skill tier (e.g., with or without skill certifications). Demand is relatively impatient: a project is short-lived; if the client does not fill it quickly (or fills it off-platform), the posting is withdrawn. Freelancers are long-lived and can remain available while awaiting opportunities. Matching is decentralized: freelancers choose which projects to apply for, clients select among applicants (and may reject all low-type applications), and a match forms only upon mutual acceptance. In addition, the matching payoffs exhibit supermodularity. Equivalently, task complexity and freelancer skill are complementary; thus, assortative matches (high-skilled freelancers with complex tasks and low-skilled freelancers with routine tasks) are more efficient and generate higher match payoffs than cross-tier matches. 
Furthermore, we assume that agents are matched according to the first-come-first-served (FCFS) discipline within each type. The platform labels each agent within a type based on their entry time into the waiting line. Examples of the FCFS discipline within groups can be found in the matching cases of organ transplants, such as hearts and intestines \citep{bloch2017dynamic, baccara2020optimal}.
  
To be consistent with the notation of the centralized model, we use $r_{ij}$ to represent the total payoff of a match between a type $i$ supply agent and type $j$ demand agent. The payoff is split between the supply agent and the demand agent such that the supply agent receives a proportion $\alpha$ and the demand agent receives $(1-\alpha)$, where $0\le \alpha\le 1$. When $\alpha=1/2$, it means that the total payoff of a matching is equally shared by the two agents. In addition, we assume that $\alpha$ is common knowledge and remains constant over the entire planning horizon. Each unmatched supply agent will wait in the system and incur a waiting cost of $h\ge 0$ per period. In contrast, each unmatched demand agent will leave the system and receive a utility of 0 (without loss of generality). For example, on a freelancer marketplace, a client who does not receive an acceptable proposal within a short time window may withdraw the posting and hire elsewhere (e.g., on another platform or through their own network) without ever forming a match on the platform. Furthermore, we assume that if a supply agent is indifferent between waiting in line and matching immediately, he will choose to match immediately. Similarly, if a demand agent is indifferent between matching and not matching, she will also choose to match.

The sequence of events is as follows. At the beginning of each period, a pair of agents arrives on the platform. After observing the system state, i.e., the number of each type of agent being present, each agent decides whether to match with an H-type or L-type agent on the other side, or not to match with anyone at all. A match will only be formed when both the supply and demand agents agree to match with each other. In this case, the demand agent will be matched with the first willing supply agent of the chosen type in the waiting line. Otherwise, no match will be formed. At the end of each period, unmatched supply agents are carried over to the next period and incur waiting costs, while the unmatched demand agent departs from the market.

We now formally describe the dynamic game between the supply and demand agents in the decentralized matching process.  We use the concept of the Markov perfect equilibrium, which, in our case, is the set of strategies of the supply and demand agents at a Nash equilibrium that describes their matching decisions. For simplicity, we focus on pure strategy equilibrium. In particular, given the system state $\mathbf{s}_{t}=(x_{H}, x_{L}, y_{H}, y_{L})$, let $\mathbf{a}_i(\mathbf{s}_{t})=(a^1_i(\mathbf{s}_{t}),a^2_i(\mathbf{s}_{t}),a^3_i(\mathbf{s}_{t} ), \dots)$, where $a^n_i(\mathbf{s}_{t})\in\{H, L, \phi\}$ denotes the decision made by the $n$th $i$-type supply agent in the waiting line for $i\in\{H, L\}$. With a slight abuse of notation, $H$ and $L$ represent the decisions to request a match with an H-type or L-type demand agent, respectively, while $\phi$ represents the decision not to match. Let $\mathbf{a}^{-n}_i(\mathbf{s}_{t})=(a^1_i(\mathbf{s}_{t}),\dots,a^{n-1}_i(\mathbf{s}_{t}), a^{n+1}_i(\mathbf{s}_{t} ), \dots)$. Similarly, let $\mathbf{b}(\mathbf{s}_{t})=(b_H(\mathbf{s}_{t}),b_L(\mathbf{s}_{t}))$, where $b_j(\mathbf{s}_{t})\in\{H, L, \phi\}$ denotes the decision made by the the $j$-type demand agent for $j\in\{H,L\}$. A match between an $i$-type supply agent and a $j$-type demand agent will be formed only if there exists a $k\leq x_i$ such that $a^k_i(\mathbf{s}_{t})=j$ and $b_j(\mathbf{s}_{t})=i$. Moreover, the match will be formed between the $k^{*}$th $i$-type supply agent in the waiting line and the only $j$-type demand agent, where $k^*=\min\{k:a^k_i(\mathbf{s}_{t})=j\}$.

Let $v^n_{i}$ be the best expected total payoff of the $n$th $i$-type supply agent in the waiting line, which depends on the system state $\mathbf{s}_{t}$ and the matching decisions of all other agents.  This payoff consists of three parts: the waiting cost during the periods before he gets matched, a portion of the matching payoff in the period when he gets matched, and zero afterward. 
Similarly, let $w_{j}$ be the best payoff of the $j$-type demand agent. Note that she will leave the system and get the payoff of 0 if unmatched. Therefore,  $w_{j}=(1-\alpha) r_{Hj}$ as long as there exists an H-type supply agent willing to be matched with her. If no H-type supply agents are willing to match with her but an L-type supply agent is willing to do so, we have $w_{j}=(1-\alpha) r_{Lj}$; otherwise, $w_{j}=0$. Let $\sigma_{ij}$ be the indicator denoting whether a match is formed between an $i$-type supply agent and a $j$-type demand agent.
Note that $\sigma_{ij}=1$ if and only if the demand agent's type is $j$ and this demand agent and an $i$-type supply agent both choose to match with each other. Thus, $\sigma_{ij}$ is a function of the decision rules of the supply and demand agents $a^n_i(\mathbf{s}_{t})$ and $b_j(\mathbf{s}_{t})$.  

We now define the pure strategy Markov perfect equilibrium.

\begin{definition}[Markov Perfect Equilibrium]
    We say that the decision rules of the supply and demand agents $a^n_i(\mathbf{s}_{t})$ and $b_j(\mathbf{s}_{t})$ with $i,j=H, L$ and $n=1,2,\dots$ form a Markov perfect equilibrium if the following conditions are satisfied:\\
    \textit{(i)} there exists payoff functions $v^n_{i}$ and $w_{j}$ such that   $a^n_i(\mathbf{s}_{t})$ and $b_j(\mathbf{s}_{t})$ are the maximizers of the following two optimization problems, respectively:
    \begin{align}\label{eq:bellman1}
    v^n_{i}(\mathbf{s}_{t},&\mathbf{a}^{-n}_i(\mathbf{s}_{t}),\mathbf{a}_{-i}(\mathbf{s}_{t}),\mathbf{b}(\mathbf{s}_{t}))=\max\limits_{a^n_i\in\{H,L,\phi\}}\bigg\{\bold{1}_{\{\sigma_{i,a^{n}_i}=1\}} \bold{1}_{\{n=\min\{k:a^k_i(\mathbf{s}_{t})=a^n_i\}\}} \alpha r_{i,a^n_i} \nonumber\\ 
    &+(1-\bold{1}_{\{\sigma_{i,a^{n}_i}=1\}}\bold{1}_{\{n=\min\{k:a^k_i(\mathbf{s}_{t})=a^n_i\}\}})[-h+Ev^{n_{t+1}}_{i}(\mathbf{s}_{t+1},\mathbf{a}^{-n_{t+1}}_i(\mathbf{s}_{t+1}),\mathbf{a}_{-i}(\mathbf{s}_{t+1}),\mathbf{b}(\mathbf{s}_{t+1}))]\bigg\}, \
\end{align}
and \begin{equation}\label{eq:bellman2}
    w_{j}(\mathbf{s}_{t},\mathbf{a}_{H}(\mathbf{s}_{t}),\mathbf{a}_{L}(\mathbf{s}_{t}), b_{-j}(\mathbf{s}_{t}))=\max\limits_{b_j\in\{H,L,\phi\}}\bigg\{(1-\alpha) r_{b_j,j}\bold{1}_{\{\sigma_{b_j,j}=1\}}, 0\bigg\};
\end{equation}
    \textit{(ii)} $\mathbf{s}_{t+1}$ and $n_{t+1}$ in (\ref{eq:bellman1}) are updated from $\mathbf{s}_{t}$ and $n$ according to $a^n_i(\mathbf{s}_{t})$ and $b_j(\mathbf{s}_{t})$.
\end{definition}

Equations (\ref{eq:bellman1}) and (\ref{eq:bellman2}) describe the decision rules for the supply and demand agents, respectively. Specifically, for an $i$-type supply agent, if he is matched with a $j$-type demand agent, he receives a payoff of $\alpha r_{ij}$ and leaves the system. If he is not matched during this period, he incurs a holding cost of $h$ and transits to the next period with a new state $\mathbf{s}_{t+1}$ and a new position $n_{t+1}$ in the waiting line. Given the decisions of all other agents, he will choose the option that maximizes his expected total payoff. For a $j$-type demand agent, since she will not wait, she will select the option that maximizes her current payoff. Moreover, the updates to the state and position must be consistent with the decision rules for both supply and demand agents. 

\subsection{Equilibrium in Decentralized Model}
H-type supply is the most desirable partner for any demand agent, and H-type demand is the most desirable partner for any supply agent; thus,  (H, H) matches are given priority. Accordingly, when an H-type demand agent arrives and at least one H-type supply agent is waiting, the arriving demand agent is matched with the first H-type supply agent in the FCFS queue. By contrast, when an L-type demand agent arrives, supply agents may be reluctant to match immediately: they trade off the immediate payoff from matching the L-type demand agent against the option value of waiting for a future H-type demand arrival, net of the holding costs incurred while remaining unmatched.



First, consider an H-type supply agent. The incremental payoff from matching with an H-type demand agent rather than an L-type demand agent is the constant $r_{HH}-r_{HL}$. In contrast, the expected waiting cost is position-dependent and increases with the number of H-type supply agents ahead of the individual in the queue. Under the FCFS discipline, any arriving H-type demand agent matches with the first waiting H-type supply agent, if one exists. Consequently, when the H-type supply queue becomes sufficiently long, a later-arriving H-type supply agent may prefer to match immediately with an L-type demand agent rather than incur the waiting cost of holding out for an H-type demand agent. This implies that, in equilibrium, the H-type supply queue length must be bounded, as formalized in the following lemma.

\begin{lemma}\label{lem: dec maxium queue length}
	In the decentralized setting, the queue length of H-type supply agents is bounded by $k^{de}$, given by 
 \begin{equation}\label{eq:threshold in decentralized setting}
     {k}^{de}=\left\lfloor \frac{q \alpha (r_{HH}-r_{HL})}{h}\right\rfloor.
 \end{equation}
\end{lemma}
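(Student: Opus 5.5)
The plan is to bound the continuation value of an H-type supply agent from position $n$ in the H-queue. From there I would show that, whenever an L-type demand agent is present and willing, the agent at position $n=k^{de}+1$ (or later) strictly prefers matching now to waiting. This forces the queue to be cut whenever it would exceed $k^{de}$.

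\textbf{Step 1 (upper bound on waiting).} Fix an H-type supply agent at position $n$ in the FCFS H-queue. Under FCFS within type, every arriving H-type demand agent who matches with H-type supply takes the first willing H-type supply agent. In any equilibrium, H-type supply always accepts H-type demand, since $r_{HH}$ is the best possible payoff and waiting costs $h>0$. Likewise, H-type demand always prefers H-type supply. Hence the queue ahead of our agent shrinks by at most one per period, and only when an H-type demand agent arrives, which happens with probability $q$. The one other way he can leave the queue is by matching with an L-type demand agent, at payoff $\alpha r_{HL}$. So if he refuses all L-type demand, the number of periods until he reaches the head and is served is at least a negative-binomial time with mean $n/q$. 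More precisely, he needs $n$ H-demand arrivals, so the expected waiting time is at least $n/q - 1$ periods before matching, after accounting for the convention that the current period's arrival is already present. I would handle the timing convention carefully so that the comparison comes out as $\alpha r_{HH} - h\cdot n/q$ versus $\alpha r_{HL}$. In short, the best value from holding out is at most $\alpha r_{HH} - h\,\mathbb{E}[\text{wait}]$.

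\textbf{Step 2 (comparison with deviating to the L-type match).} Any continuation strategy of the agent yields either an eventual match with H-demand, worth at most $\alpha r_{HH}$ minus accrued waiting costs, or a later match with L-demand, worth at most $\alpha r_{HL}$ minus accrued costs. A later L-match is dominated by taking the L-match now, provided one is available now. So waiting is strictly better than matching now only if $\alpha r_{HH} - h\,n/q > \alpha r_{HL}$, that is, $n < q\alpha(r_{HH}-r_{HL})/h$. Using the tie-breaking rule (indifferent supply agents match immediately), any agent at position $n > k^{de}$, with $k^{de} = \lfloor q\alpha(r_{HH}-r_{HL})/h\rfloor$, chooses $L$ whenever an L-type demand agent arrives.

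\textbf{Step 3 (the L-demand accepts, and the queue never exceeds $k^{de}$).} The L-type demand agent gets $(1-\alpha)r_{HL} \ge (1-\alpha)r_{LL}$ and also $\ge 0$, so she accepts H-type supply. By the demand agent's rule, she prefers H-type supply whenever some H-type supply agent is willing. I would then argue by induction on time, starting from the empty system. The H-queue grows only in a period with an H-supply arrival and L-demand arrival. If it were to reach $k^{de}+1$ after such an arrival, then the agent at position $k^{de}+1$ (the newcomer, or someone earlier who is willing) requests $L$, the match forms, and the queue returns to $k^{de}$.

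\textbf{Main obstacle.} The delicate point is Step 1. I must show that the expected time to reach the head is at least of order $n/q$ regardless of the other agents' equilibrium behavior, since agents ahead might leave early by taking L-demand and thereby shorten his wait. In that case the naive bound fails as an upper bound on his value. I would resolve this by noting that departures ahead also occur at most once per period, and an L-departure ahead consumes the L-demand of that period. Alternatively, I would define $k^{de}$ as the first position whose agent prefers to match and show, by backward reasoning on position, that agents ahead of him never cross-match when he is at position $\le k^{de}+1$. This makes the $n/q$ wait exact and yields the threshold.
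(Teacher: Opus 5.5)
Your proposal is correct and is essentially the paper's argument: an H-type supply agent at position $n$ who holds out needs $n$ H-type demand arrivals, so his expected waiting cost is $hn/q$, which he weighs against $\alpha(r_{HH}-r_{HL})$; and under FCFS his position only moves forward, so the comparison never reverses once he chooses to wait.

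Of your two fixes for the obstacle, only the induction on position from the front works. It is exactly what the paper implicitly uses when it asserts that positions $1,\dots,k^{de}$ never cross-match. The one-departure-per-period remark gives a lower bound on the wait of order $n$, not $n/q$. Separately, to make Step 2 rigorous against continuations that wait for a while and then cross-match, apply Wald's identity at the matching time $\sigma$: $q\,\mathbf{E}[\sigma]\ge n\Pr(\text{H-match})$. This bounds the value of declining by $\alpha r_{HL}+\Pr(\text{H-match})\bigl(\alpha(r_{HH}-r_{HL})-hn/q\bigr)$.
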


\noindent Note that the time until the next $H$-type demand arrival is geometrically distributed with parameter $q$, and hence has a mean $1/q$. Let $x_H$ denote the current queue length of $H$-type supply agents, and consider an $H$-type supply agent in position $k \in \{1,\dots,x_H\}$ in the $H$-type supply queue. If this agent waits to match with an $H$-type demand agent, he must wait for $k$ $H$-type demand arrivals to be served; thus, his expected waiting time is $k/q$, implying an expected waiting cost of $hk/q$. The expected incremental benefit from waiting—relative to matching immediately with an $L$-type demand agent—is $\alpha\,(r_{HH}-r_{HL})$. Therefore, the agent prefers to continue waiting if and only if the expected benefit exceeds the expected cost, i.e.,
\begin{equation*}
\frac{hk}{q}<\alpha\,(r_{HH}-r_{HL}).
\end{equation*}
Moreover, under the FCFS discipline, the agent's position $k$ weakly decreases over time as earlier agents are matched, which reduces their expected remaining waiting time in subsequent periods, while the incremental benefit $\alpha(r_{HH}-r_{HL})$ remains unchanged. Hence, once an $H$-type supply agent chooses to wait, they will not later deviate to match with an $L$-type demand agent. It follows that any $H$-type supply agent in positions $k=1,\ldots,k^{de}$ is willing to match only with an $H$-type demand agent. Conversely, for an $H$-type supply agent at position $k^{de}+1$, the expected waiting cost exceeds the expected incremental benefit; thus,  we must have
\begin{equation*}
\frac{h(k^{de}+1)}{q}>\alpha\,(r_{HH}-r_{HL}).
\end{equation*}

Second, the behavior of an L-type supply agent is more nuanced. When an H-type demand agent arrives, an L-type supply agent would request a match with her, as it would be less profitable for the supply agent to wait for a future demand agent of either type. 
When an L-type demand agent arrives, an L-type supply agent must decide whether to match immediately or continue waiting in hopes of matching with a future H-type demand agent. The incremental payoff from waiting for such an H-type demand match, relative to matching now with the L-type demand agent, is $r_{LH}-r_{LL}$. However, the associated expected waiting cost is not straightforward. To be matched with an H-type demand agent, an L-type supply agent must wait not only for the L-type supply agents ahead of him, but also for the clearance of the H-type supply queue, since any arriving H-type demand agent is prioritized to match with the first available H-type supply agent. Moreover, the H-type supply queue may expand over time due to continued H-type supply arrivals, which complicates the calculation of the L-type supply agent’s expected waiting time.

Nevertheless, a similar logic applies as before. When the number of L-type supply agents ahead of a given L-type supply agent exceeds a threshold $k_L$, the agent is more likely to stop holding out for a future H-type demand arrival and instead match with the current L-type demand agent. This threshold $k_L$ depends on the state of the H-type supply queue: as the number of H-type supply agents in the system increases, the expected delay until an L-type supply agent can access an H-type demand match rises, which lowers $k_L$ and induces L-type supply agents to be more likely to match with L-type demand.

\begin{lemma}\label{lemma:threshold property}
     The threshold $ {k}_{L}(x_{H})$ has the following properties. 
     \begin{enumerate}
         \item[(i)] $ {k}_{L}(x_{H})$ decreases in $x_{H}$; 
         \item[(ii)] $ {k}_L(x_{H})\leq  {k}^{de}-x_{H}$; 
         \item[(iii)] When $p\geq q$, ${k}_L(x_{H})=0$.
     \end{enumerate}
\end{lemma}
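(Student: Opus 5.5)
We first pin down $k_L(x_H)$ precisely. Fix the equilibrium behavior of H-type supply given by Lemma \ref{lem: dec maxium queue length}: H-type supply agents in positions $1,\dots,k^{de}$ accept only H-type demand, and H-type demand is always served by the first H-type supply agent if one exists, otherwise by the first L-type supply agent. Consider an L-type supply agent at position $n$ in the L-queue when the H-queue has length $x_H$. If he refuses L-type demand from now on, he is matched only when an H-type demand agent arrives while the H-queue is empty and he is first in the L-queue. Let $T_n(x_H)$ be this (random) waiting time. Waiting is worthwhile iff $h\,\mathbf{E}[T_n(x_H)] < \alpha(r_{LH}-r_{LL})$. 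We then define $k_L(x_H)$ as the largest $n$ for which this inequality holds, set to $0$ if it holds for no $n$. As in the argument after Lemma \ref{lem: dec maxium queue length}, a monotonicity observation (his position only moves forward and the H-queue is capped) shows that once he prefers to wait, he keeps preferring it. This makes the threshold well defined.

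For (i), we use a sample-path coupling. Drive two systems, started at $x_H$ and $x_H'>x_H$ with the same L-queue position $n$, by the same arrival sequence. The H-queue is a reflected random walk: it goes up on an H-supply arrival together with L-demand (capped at $k^{de}$) and down on H-demand. A standard comparison shows the queue started higher stays weakly higher path-wise. Every period the lower system has an empty H-queue at an H-demand arrival, the higher system either also has an empty H-queue or serves an H-supply agent instead. Hence the L-queue in the higher system is served later, pathwise. It follows that $T_n(x_H')\ge T_n(x_H)$ a.s., so $\mathbf{E}T_n$ increases in $x_H$ and the threshold decreases.

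For (ii), we compare the L-agent's payoff with the H-agent's threshold. Any L-agent at position $n$ must wait at least until all $x_H$ H-agents ahead and the $n-1$ L-agents ahead are served, each requiring a distinct H-demand arrival. So $\mathbf{E}T_n\ge (x_H+n)/q$. Waiting therefore requires $h(x_H+n)/q<\alpha(r_{LH}-r_{LL})\le \alpha(r_{HH}-r_{HL})$, where the last inequality is exactly supermodularity (Assumption \ref{as:sm}). This gives $x_H+n\le k^{de}$, i.e.\ $k_L(x_H)\le k^{de}-x_H$.

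For (iii), which we expect to be the main obstacle, we need $\mathbf{E}T_1(x_H)$ to be large enough when $p\ge q$ that even the first L-agent does not wait. This holds even at $x_H=0$, since by (i) that case suffices. When the H-queue is empty, the L-agent is served only if H-demand arrives with H-queue empty. But in the same period an H-supply agent may also arrive, and H-demand can take him. With the arrival timing (arrive, then match), an H-type demand agent in a period with an H-supply arrival is matched to H-supply. Thus the L-agent is served only in periods with H-demand and L-supply arrival and an empty H-queue, which happen with probability $q(1-p)$ per such period. When $p\ge q$ the H-queue has nonnegative drift (up-rate $p(1-q)\ge q(1-p)$ down-rate) until the cap $k^{de}$, so the fraction of time it is empty is small. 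We would compute $\mathbf{E}T_1(0)$ via a gambler's-ruin / first-passage computation on the birth–death chain on $\{0,\dots,k^{de}\}$ with parameter $\delta=q(1-p)/(p(1-q))\le1$, and show $h\,\mathbf{E}T_1(0)\ge \alpha(r_{LH}-r_{LL})$. The idea is to bound $\mathbf{E}T_1(0)\ge 1/(q(1-p)\pi_0)$-type quantities and use $k^{de}$'s definition together with $r_{HH}-r_{HL}\ge r_{LH}-r_{LL}$. Carrying out this first-passage bound cleanly is the delicate step. If it fails in general, we would instead argue directly that the H-supply agents' cutoff makes the expected delay at least $(k^{de}+1)/q$ when $p\ge q$.
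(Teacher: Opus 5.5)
Your arguments for (i) and (ii) are correct, and they take a different route from the paper. The paper solves the first-step linear equations for the first L-agent's expected waiting cost $EW_L^1(z)$ in closed form and reads all three properties off the formulas. It gets (i) from $EW_L^k(z+1)-EW_L^k(z)>0$ when $\delta>1$, and (ii) from $EW_L^1(z)\ge (z+1)h/(q(1-p))$.

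Your monotone coupling of the reflected H-queue gives (i) for all $p,q$ with no computation. The key observation is that the L-queue is served only in periods with an H-demand/L-supply arrival and an empty H-queue, and $z'_t=0$ forces $z_t=0$. For (ii), you count the $x_H+n$ distinct H-demand arrivals that must occur. This gives $h(x_H+n)/q<\alpha(r_{LH}-r_{LL})\le\alpha(r_{HH}-r_{HL})$, hence $x_H+n\le k^{de}$, which is cleaner than the formula-based route.

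Your side claim that a waiting L-agent never later regrets waiting is not justified: unlike for H-agents, the H-queue ahead of him can grow while he waits. Nothing in the lemma depends on it, however, because $k_L$ is defined (here as in the paper) by the committed-waiting comparison.

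The genuine gap is (iii). You name the quantity to bound but do not bound it, and ``nonnegative drift'' is not a proof. The counting bound from (ii) only gives $\mathbf{E}T_1(0)\ge 1/q$; what you need is $\mathbf{E}T_1(0)\ge (k^{de}+1)/q$. The missing step is short:

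1. The H-queue evolves independently of whether the L-agent is served. It is a birth--death chain on $\{0,\dots,k^{de}\}$ with up-probability $p(1-q)$ and down-probability $q(1-p)$, so its stationary law is $\pi_z=\pi_0\delta^{-z}$.
2. When $\delta\le 1$ (i.e.\ $p\ge q$), $\pi_z$ is nondecreasing in $z$, so $\pi_0\le 1/(k^{de}+1)$.
3. The L-queue is served exactly in periods with $z_t=0$ and an H-demand/L-supply arrival, and after each such period the chain restarts at $z=0$. By renewal--reward, $\mathbf{E}T_1(0)=1/(q(1-p)\pi_0)=\frac{1}{q(1-p)}\sum_{z=0}^{k^{de}}\delta^{-z}$. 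This is exactly the paper's closed form for $EW_L^1(0)/h$, which the paper obtains by solving the recursion.
4. Hence $h\,\mathbf{E}T_1(0)\ge (k^{de}+1)h/(q(1-p))>\alpha(r_{HH}-r_{HL})\ge\alpha(r_{LH}-r_{LL})$. The strict inequality uses $k^{de}+1>q\alpha(r_{HH}-r_{HL})/h$ (definition of the floor), and the last uses supermodularity.

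So even the first L-agent at $x_H=0$ does not wait. By your (i) and the monotonicity of $\mathbf{E}T_n$ in $n$, $k_L(x_H)=0$ for every $x_H$. With this step supplied, your proof is complete.
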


On the demand side, consider an arriving demand agent. Because demand agents are short-lived, she accepts a match whenever at least one supply agent in the system is willing to match with her. Moreover, whenever both types are available, she strictly prefers to match with an H-type supply agent rather than an L-type supply agent. 

We summarize the decision rules of the agents in equilibrium in the following proposition. 
\begin{proposition}\label{prop:matching action in equilibrium} There exists a Markov perfect equilibrium in which \\
(i) \quad  $a^n_i(x_{H}, x_{L}, 1, 0)=H$, for $i=H,L$ and $n=1,2,\dots$; \\
(ii) \begin{align*}
    a^n_H(x_{H}, x_{L}, 0, 1)=\begin{cases}
        H  &\mbox{ if $n\leq k^{de}$,}\\
        L  &\mbox{ otherwise,}
    \end{cases} \quad \mbox{and} \quad a^n_L(x_{H}, x_{L}, 0, 1)=\begin{cases}
        H  &\mbox{ if $n\leq  {k}_L(x_{H})$,}\\
        L  &\mbox{ otherwise;}
    \end{cases}
\end{align*}
(iii) \begin{align*}
    b_H(x_{H}, x_{L}, 1, 0)=\begin{cases}
        H  &\mbox{ if $x_H>0$,}\\
        L  &\mbox{ otherwise, }
    \end{cases} \quad \mbox{and} \quad b_L(x_{H}, x_{L}, 0, 1)=\begin{cases}
        H  &\mbox{ if $x_H> k^{de}$,}\\
        L  &\mbox{ otherwise. }
    \end{cases}
\end{align*}
Additionally, this equilibrium is welfare-maximizing, achieving the highest long-run social welfare among all possible Markov perfect equilibria.
\end{proposition}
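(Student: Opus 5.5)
The argument has two parts. First we verify that the stated profile is a Markov perfect equilibrium. Then we show that it is welfare-maximizing among all pure-strategy MPEs.

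\textbf{Equilibrium verification.} Fix the profile in (i)--(iii). It induces a Markov chain on $(x_H,x_L)$. On the H-queue, $x_H\le k^{de}$ by construction: once $x_H>k^{de}$, the agent at position $k^{de}+1$ offers $L$ and $b_L$ selects $H$. We check unilateral deviations agent by agent.

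For demand agents the check is a one-shot comparison. $b_H$ picks H supply whenever $x_H>0$ because $r_{HH}\ge r_{LH}$, and every supply agent offers $H$ in state $(x_H,x_L,1,0)$. For $b_L$, when $x_H>k^{de}$ some H-supply agent offers $L$ and $r_{HL}\ge r_{LL}$, so the demand agent takes him. Otherwise she takes the first L-supply agent past position $k_L(x_H)$, if one exists. Otherwise no match forms, and any deviation yields $0$.

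For H-type supply, part (i) holds because waiting can never produce more than $\alpha r_{HH}$ and costs $h\ge0$. For part (ii) we use the position argument given after Lemma \ref{lem: dec maxium queue length}. Under the profile, an H agent at position $n\le k^{de}$ is served only by H demand. His continuation payoff is therefore $\alpha r_{HH}-hn/q$ minus nothing further, since his position never increases. This is at least $\alpha r_{HL}$ exactly when $n\le k^{de}$ by (\ref{eq:threshold in decentralized setting}). Hence he strictly prefers (or weakly, with tie-breaking toward matching) the prescribed action. At $n>k^{de}$ the inequality reverses, and this is consistent with the definition of the floor.

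For L-type supply, we \emph{define} $k_L(x_H)$ as the largest position $n$ at which an L agent's continuation value of holding out (offering $H$) under the profile weakly exceeds $\alpha r_{LL}$. This continuation value is computed as the expected discounted-by-cost value in the chain where he moves forward as L-supply ahead of him clears and is served only after the H-queue empties. We must then show that the resulting cutoff structure is self-consistent: the continuation value must be monotone in the agent's position and in $x_H$. This is where Lemma \ref{lemma:threshold property}(i)--(ii) is used, and it is the fixed-point issue. I would argue by a coupling. An L agent at position $n$ in state $x_H$ is stochastically served no earlier than one at position $n-1$ or in state $x_H-1$, so his value is decreasing in both. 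This yields a well-defined threshold, and the profile is consistent with it. The existence of such a self-consistent $k_L$ function, obtained for instance by iterating from $k_L\equiv0$ and using monotonicity together with finiteness of the state space (bounded by $k^{de}$ via Lemma \ref{lemma:threshold property}(ii)), is what I expect to be the main obstacle.

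\textbf{Welfare maximality.} First we show that in any MPE the H-type constraints hold. H supply always matches H demand when available. Every H supply agent beyond position $k^{de}$ accepts L demand, by the same cost-benefit argument, which holds irrespective of other agents' strategies since his expected wait is at least $n/q$ periods. So in every MPE, $x_H\le k^{de}$, and (H,H) and (L,H) matches occur greedily. The remaining freedom lies in (L,L) and (H,L) matches below the threshold. An L-supply agent at a position below $k_L(x_H)$ refuses L demand in any equilibrium in which the others behave at least as aggressively, and our profile has the maximal set of (L,L) matches compatible with incentives. Comparing with Theorem \ref{thm:optimal mechanism of centralized system}, the centralized optimum prefers more (L,L) matches and fewer waiting agents given the H-queue. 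A sample-path coupling then shows that any other MPE has pathwise weakly larger queues and weakly fewer assortative matches, and hence lower long-run average welfare. Making this coupling rigorous, namely that matching L agents as early as incentives allow dominates, is the second delicate step. I would handle it by comparing stationary queue distributions of the two induced birth--death-type chains.
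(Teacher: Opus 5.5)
Your checks of H-supply and demand-side incentives are essentially the paper's position argument. There are, however, two genuine gaps.

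\emph{Welfare maximality.} You claim that in every MPE each H-supply agent beyond position $k^{de}$ ``accepts'' L demand, so that $x_H\le k^{de}$ and (H,H), (L,H) matches form greedily. This is false.

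\begin{itemize}
\item Matches require mutual consent. If the L-demand agent's strategy selects L-type supply, an H agent who switches to offering $L$ creates no match, so refusing is a best response. Your cost--benefit comparison says what he would like, not what equilibrium delivers.
\item Refusal profiles are therefore MPEs (the no-trade profile is one). They are not necessarily worse, so no coupling of the form ``other MPEs have larger queues, hence lower welfare'' can work. Longer H-queues buy more (H,H) matches, and $W(k)$ is concave with peak at $k^{ce}$, which may exceed $k^{de}$.
\end{itemize}

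Concretely, take $\alpha=0$. Then $k^{de}=0$, and every supply agent strictly prefers any immediate match to waiting. Specify:
\begin{itemize}
\item all supply offer $H$ to H demand, with $b_H$ as in (iii);
\item L supply always offers $L$;
\item H supply offers $L$, and $b_L=H$, only when $x_L=0$ and $x_H>k^{ce}$; otherwise H supply offers $H$ and $b_L=L$.
\end{itemize}
Every agent best-responds. No supply agent can create a match the demand agent has not selected, and an L-demand agent facing no H-supply offer loses nothing by choosing $L$. This MPE reproduces the centralized threshold-$k^{ce}$ policy of Theorem~\ref{thm:optimal mechanism of centralized system}. It earns $W^{ce}>W(0)=W^{de}$ whenever, e.g., $p=q$ and $p(1-p)r>2h$.

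So welfare maximality over all MPEs cannot be reached by your route without an equilibrium refinement that rules out such refusals (e.g., eliminating weakly dominated choices). The paper handles this part only with a one-line assertion that the profile is the unique equilibrium with greedy H-demand matching.

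\emph{L-supply thresholds.} Here your route differs from the paper's, and the step you flag as the main obstacle is left open.

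The paper avoids any fixed point. It prices holding out as holding out until an H-demand match:
\begin{itemize}
\item it solves $EW^1_L(z)$ in closed form from a linear difference equation in $z$, with boundary conditions at $z=0$ and $z=k^{de}$;
\item it sets $EW^k_L(z)=(k-1)EW^1_L(0)+EW^1_L(z)$;
\item it takes $k_L(z)$ as the largest $k$ with $EW^k_L(z)\le\alpha(r_{LH}-r_{LL})$.
\end{itemize}
This yields Lemma~\ref{lemma:threshold property} directly.

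Your definition, via the continuation value under the profile, is the correct one-shot-deviation test. But then you may not invoke Lemma~\ref{lemma:threshold property} for boundedness, because your $k_L$ is a different object. At $x_H=0$, consider a first-in-line L agent who declines one L demand and accepts at his next opportunity. This deviation pays if and only if $h<(1-p)q\,\alpha(r_{LH}-r_{LL})$, a condition that forces $k^{de}\ge1$. So your $k_L(0)\ge1$ there even when $p\ge q$, contrary to part (iii).

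Iterating from $k_L\equiv0$ is also not supported by monotonicity. Agents ahead holding out longer slow your advance, so best responses decrease in others' thresholds.

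A workable route is recursion by queue position:
\begin{itemize}
\item the $x_H$ dynamics under the profile do not depend on L-supply behavior;
\item the first L agent therefore faces a single-agent stopping problem;
\item the agent at position $m$ depends only on those ahead of him.
\end{itemize}
Under the first-willing-agent rule, the threshold form is then automatic. What remains is a bound on $k_L$.
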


Proposition \ref{prop:matching action in equilibrium} establishes the existence of a welfare-maximizing Markov perfect equilibrium with the following matching behavior. When an H-type demand agent arrives, all supply agents request to match with her; the demand agent matches with an H-type supply agent if one is available, and otherwise matches with an L-type supply agent. When an L-type demand agent arrives, the first $k^{de}$ H-type supply agents in the waiting line choose to wait for a future H-type demand agent, while any remaining H-type supply agents (if any) offer to match with the arriving L-type demand agent. Likewise, among L-type supply agents, the first $k_L(x_H)$ agents in the waiting line wait for a future H-type demand agent, while all others offer to match with the arriving L-type demand agent. On the demand side, an L-type demand agent requests to match with an H-type supply agent if the number of H-type supply agents exceeds $k^{de}$; otherwise, she offers to match with an L-type supply agent.
Given the above agents’ action strategies, the equilibrium matching outcomes can be summarized as follows.
\begin{theorem} \label{thm:stable matching of decentralized system} 
In the welfare-maximizing equilibrium, matches are formed as follows. 
	\begin{enumerate}
		\item (Greedy Matching) When an H-type demand agent arrives, a match is formed between her and the first H-type supply agent in the waiting queue, if one is available. If there are no H-type supply agents available, a match is formed with the first L-type supply agent in the queue.
		\item (Threshold Matching) When an L-type demand agent arrives, one of two possible cases occurs:
		\begin{description}
			\item[(a)] When the number of H-type supply agents exceeds the threshold, i.e., $x_{H}> k^{de}$, a match is formed with the $(k^{de}+1)$st H-type supply agent in the queue.
			\item[(b)] When the number of H-type supply agents is equal to or less than the threshold, i.e., $x_{H}\le k^{de}$, a match is formed with the $ ({k}_{L}(x_{H})+1)$st L-type supply agent in the queue, if such an agent exists. 
		\end{description}
	\end{enumerate}
\end{theorem}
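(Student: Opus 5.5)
Theorem \ref{thm:stable matching of decentralized system} follows from Proposition \ref{prop:matching action in equilibrium}: in the welfare-maximizing equilibrium, I will trace which pair forms under the strategy profile it specifies, in each state. By the matching rule, an $(i,j)$ match forms only if the arriving $j$-type demand agent chooses $b_j=i$ and some $i$-type supply agent chooses $a^k_i=j$. The partner is the $k^*$th $i$-type supply agent with $k^*=\min\{k:a^k_i(\mathbf{s}_t)=j\}$. So I only need to evaluate $b_j$ and the minimal index of a willing supply agent, case by case on the demand type.

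\emph{Case 1: H-type demand, state $(x_H,x_L,1,0)$.} By part (i) of Proposition \ref{prop:matching action in equilibrium}, every supply agent of either type sets $a^n_i=H$.
\begin{itemize}
\item If $x_H>0$, part (iii) gives $b_H=H$. The minimal willing index is $k^*=1$, so she matches with the first H-type supply agent.
\item If $x_H=0$, then $b_H=L$. If $x_L>0$, the first L-type supply agent is willing, and $k^*=1$ again.
\end{itemize}
This gives the greedy-matching statement.

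\emph{Case 2: L-type demand, state $(x_H,x_L,0,1)$.}
\begin{itemize}
\item \emph{Subcase (a), $x_H>k^{de}$.} Part (iii) gives $b_L=H$. By part (ii), $a^n_H=H$ for $n\le k^{de}$ and $a^n_H=L$ for $n>k^{de}$. The set $\{k:a^k_H=L\}$ is nonempty because $x_H\ge k^{de}+1$, and its minimum is $k^{de}+1$. So the match is with the $(k^{de}+1)$st H-type supply agent.
\item \emph{Subcase (b), $x_H\le k^{de}$.} Now $b_L=L$, and by part (ii) $a^n_L=L$ exactly when $n>k_L(x_H)$. If $x_L\ge k_L(x_H)+1$, the minimal willing index is $k_L(x_H)+1$, so she matches with that L-type supply agent. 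If $x_L\le k_L(x_H)$, no L-type supply agent is willing, and no match forms, which is what ``if such an agent exists'' means. No H-type match can form here either, since the demand agent does not request one.
\end{itemize}

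There is no real obstacle, since existence and welfare maximality come from Proposition \ref{prop:matching action in equilibrium}. The one point needing care is the FCFS indexing. The thresholds in part (ii) are stated in terms of queue position $n$, and I must check that the willing supply agents form a contiguous tail of the queue (positions $n>k^{de}$, respectively $n>k_L(x_H)$). That makes the minimal index exactly the threshold plus one, which is immediate from the form of $a^n_i$.
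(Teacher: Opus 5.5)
Your proposal is correct and follows essentially the paper's route: the paper proves Lemma~\ref{lemma:threshold property}, Proposition~\ref{prop:matching action in equilibrium}, and Theorem~\ref{thm:stable matching of decentralized system} jointly, and reads the matching outcomes directly off the equilibrium strategy profile, as you do. The substantive work there is the waiting-cost analysis that pins down $k^{de}$ and $k_L(x_H)$; your explicit tracing of $b_j$ and $k^*=\min\{k:a^k_i(\mathbf{s}_t)=j\}$ through each case just makes precise the final step that the paper states without detail.
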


Under the optimal centralized matching policy, an L-type demand agent is matched with an L-type supply agent whenever one is available (see Theorem \ref{thm:optimal mechanism of centralized system}). In contrast, in the decentralized equilibrium, an L-type demand agent may be matched with an H-type supply agent when the H-type supply queue is sufficiently large, even if an L-type supply agent is available.


The matching process characterized in Theorem \ref{thm:stable matching of decentralized system} differs fundamentally from decentralized matching under full backlog, as studied in \citet{baccara2020optimal}. In a full backlog environment, the welfare-maximizing equilibrium forms an (L, L) match whenever both L-type agents are present on the two sides. As a result, the system cannot simultaneously have an L-type supply agent and an L-type demand agent waiting.

Without loss of generality, we assume that the system is empty at time 0, i.e., no agents are waiting on the platform at the beginning of the planning horizon. Following the matching process in the welfare-maximizing equilibrium, the number of supply agents of both types in the waiting lines at the beginning of each period, before new agents arrive, forms a Markov chain. The transition matrix of this Markov chain can be derived based on the arrival and matching processes, which can be quite complex. However, we can show that the steady states of the Markov chain consist of only $k^{de}+1$ states, and the corresponding steady state distribution can be computed by solving the balance equations. 

\begin{proposition}[{\sc{Decentralized Steady States}}] \label{prop: steady states in equilibrium} In the welfare-maximizing equilibrium, the number of supply agents present at the beginning of each period before new agents arrive follows a Markov chain with the steady states $\{(0, k^{de}), (1, k^{de}-1),\dots, (k^{de}, 0)\}$ and a steady state distribution
\begin{equation*}
    \phi^{de}(i, {k}^{de}-i)=\begin{cases}
        \frac{(1-\delta)\delta^{{k}^{de}-i}}{1-\delta^{k^{de}+1}}, \quad \quad 0\le i\le  k^{de}, & \mbox{ if $p\neq q$,}\\
        \frac{1}{k^{de}+1},  & \mbox{ if $p= q$,}
    \end{cases}
\end{equation*}
for $i=0,\dots, k^{de}$, where state $(n,k^{de}-n)$ represents that $n$ H-type supply agents and $k^{de}-n$ L-type supply agents are waiting before new agents arrive.
\end{proposition}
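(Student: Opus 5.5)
The plan is to track the pair $(x_H,x_L)$ of waiting supply agents at the start of each period, before arrivals, under the matching rules of Theorem \ref{thm:stable matching of decentralized system}. First I show that the total $x_H+x_L$ is absorbed at the level $k^{de}$. Second I show that on that level the chain reduces to a birth--death chain in $x_H$. The stationary distribution then follows from detailed balance. Throughout, write $k=k^{de}$.

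\emph{Step 1 (total queue length is absorbed at $k$).} Fix a start-of-period state with total $T=x_H+x_L$. Exactly one supply agent arrives, so the total rises to $T+1$. At most one match forms, so the next start-of-period total is either $T$ or $T+1$; it therefore changes by at most one per period.

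Consider first an H-type demand arrival. By the greedy rule (Theorem \ref{thm:stable matching of decentralized system}(1)) it always matches, because at least the newly arrived supply agent is present. Hence the total stays at $T$.

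Now consider an L-type demand arrival when $T=k$. Let $i=x_H$.
\begin{itemize}
\item If the arriving supply agent is H-type, the post-arrival state is $(i+1,k-i)$.
\begin{itemize}
\item If $i=k$, then $x_H=k+1>k$, and rule 2(a) forms an (H, L) match, returning the chain to $(k,0)$.
\item If $i<k$, then rule 2(b) applies. Lemma \ref{lemma:threshold property}(ii) gives $k_L(i+1)\le k-i-1<k-i=x_L$, so an (L, L) match forms, and the chain moves to $(i+1,k-i-1)$.
\end{itemize}
\item If the arriving supply agent is L-type, the post-arrival state is $(i,k-i+1)$. Since $i\le k$, rule 2(b) applies. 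By Lemma \ref{lemma:threshold property}(ii), $k_L(i)\le k-i<k-i+1$, so an (L, L) match forms, and the chain returns to $(i,k-i)$.
\end{itemize}
Thus the level $\{x_H+x_L=k\}$ is closed.

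It remains to show that, starting from the empty state, the chain reaches this level with positive probability and never overshoots it. The chain never overshoots because the total moves by at most one per period and cannot exceed $k$ once it hits $k$.

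To reach the level, consider repeated (H supply, L demand) arrivals while $T<k$. These occur with probability $p(1-q)>0$ each period; I assume $0<p,q<1$ here, since degenerate cases are immediate. Each such period does one of two things:
\begin{itemize}
\item no match forms, so the total increases; or
\item an (L, L) match forms, so $x_L$ drops by one while $x_H$ rises by one.
\end{itemize}
Note that $x_H\le T+1\le k$, so no (H, L) match can form. Since $x_L$ cannot decrease forever, the total increases within finitely many such periods. Hence $k$ is reached. All states off the level $k$ are therefore transient, and the recurrent class is $\{(i,k-i):0\le i\le k\}$.

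\emph{Step 2 (birth--death structure and stationary law).} Collect the transitions on the recurrent level from Step 1, together with the H-demand cases.
\begin{itemize}
\item \textbf{Up-move, $i\to i+1$ (for $i<k$):} occurs exactly under (H supply, L demand), with probability $p(1-q)$.
\item \textbf{Down-move, $i\to i-1$ (for $i>0$):} occurs under (L supply, H demand), with probability $q(1-p)$. The H-demand agent takes the first H-type supply agent, leaving $(i-1,k-i+1)$.
\item \textbf{Stay:} every other arrival pair leaves $i$ unchanged.
\begin{itemize}
\item (H supply, H demand): the demand agent takes an H-type supply agent.
\item (L supply, L demand): an (L, L) match forms.
\item At the boundaries: $i=0$ with (L supply, H demand) gives an (L, H) match; $i=k$ with (H supply, L demand) gives an (H, L) match.
\end{itemize}
\end{itemize}
This is a finite irreducible birth--death chain. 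Detailed balance gives
\[
\phi(i+1)\,q(1-p)=\phi(i)\,p(1-q),
\]
so $\phi(i)\propto \delta^{-i}\propto\delta^{k-i}$. Normalizing by the geometric sum $\sum_{j=0}^{k}\delta^j=(1-\delta^{k+1})/(1-\delta)$ gives the formula in the statement when $p\neq q$. When $p=q$, $\delta=1$ and the distribution is uniform, $\phi(i)=1/(k+1)$.

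\emph{Main obstacle.} The delicate step is showing that on the level $x_H+x_L=k$, an L-type demand arrival always produces a match. This requires the inequality $k_L(x_H)<x_L$ after arrival in every case. It hinges entirely on Lemma \ref{lemma:threshold property}(ii), applied with the correct post-arrival value of $x_H$: $i+1$ when the arriving supply agent is H-type, and $i$ when it is L-type. Getting this case split right, together with the boundary case $i=k$ where rule 2(a) takes over, is where the argument must be checked carefully. The rest is routine bookkeeping.
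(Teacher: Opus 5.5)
Your proof is correct and follows the paper's route: show that the level $x_H+x_L=k^{de}$ is the unique closed class, then solve the balance equations of the resulting birth--death chain, whose up-probability is $p(1-q)$ and down-probability is $q(1-p)$; where the paper reads the closed class off transition diagrams after splitting into $p\ge q$ and $p<q$, you derive it uniformly from Lemma~\ref{lemma:threshold property}(ii) and the fact that the total queue length is non-decreasing. One small correction: the boundary cases are not ``immediate'' (for $q=1$ and $k^{de}\ge 1$, for instance, the chain stays at $(0,0)$ forever), so the statement should be read, as the paper's proof implicitly does by taking $0<\delta<\infty$, for $0<p,q<1$.
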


Proposition \ref{prop: steady states in equilibrium} shows that in a steady state, there will be a total of $k^{de}$ supply agents. Therefore, as long as the initial number of supply agents is below $k^{de}$, it will stay bounded by $k^{de}$. 


We can derive the steady-state social welfare in the welfare-maximizing equilibrium.

\begin{proposition}[{\sc{Social Welfare in Equilibrium}}] \label{prop:decentralized sw}
In the decentralized setting, social welfare in the welfare-maximizing equilibrium is 
\begin{align}\label{eq:decentralized sw}
	W^{de}=\left\{
\begin{array}{ll}
 qr_{HH}+(1-q)r_{LL}+p(1-q)(r_{HH}-r_{LL})(1-\delta)-p(1-q)r\frac{1-\delta}{1-\delta^{k^{de}+1}}-k^{de}h,& \hbox{if $p\neq q$,}\vspace{2mm} \\ 
 pr_{HH}+(1-p)r_{LL}-\frac{p(1-p)r}{k^{de}+1}-k^{de}h,& \hbox{if $p=q$,}
 \end{array}	
	\right.
 \end{align}
 where $k^{de}$ is given by (\ref{eq:threshold in decentralized setting}).  Moreover, $W^{de}$ is non-monotonic in $h$.
\end{proposition}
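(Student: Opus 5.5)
The plan is to compute long-run average welfare directly from the stationary distribution in Proposition \ref{prop: steady states in equilibrium}, combined with the matching rules in Theorem \ref{thm:stable matching of decentralized system}. Writing $W^{de}$ as expected per-period matching payoff minus $h$ times the expected number of supply agents left unmatched at the end of a period, I will evaluate both terms under $\phi^{de}$.

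\emph{Step 1: waiting cost.} At the start of each period in steady state there are exactly $k^{de}$ supply agents. One supply agent then arrives, and we must check that exactly one match is formed in every steady-state transition. Take state $(i,k^{de}-i)$ and condition on the four arrival pairs. If an H demand arrives, a greedy match always exists, since after arrival the total number of supply agents is $k^{de}+1\ge 1$. If an L demand arrives, either $x_H=k^{de}+1$ after an H supply arrival, so case (a) fires, or $x_H\le k^{de}$. In the latter case I will use Lemma \ref{lemma:threshold property}(ii), $k_L(x_H)\le k^{de}-x_H$, together with $x_L=k^{de}+1-x_H>k_L(x_H)$, to guarantee that a $(k_L(x_H)+1)$st L-type supply agent exists. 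Hence $k^{de}$ agents carry over, and the waiting cost is $k^{de}h$ per period.

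\emph{Step 2: transitions and payoff.} From the case analysis above, the state $i=x_H$ moves up by one with probability $p(1-q)$ (H supply, L demand, with an (L,L) match), and down by one with probability $q(1-p)$ (L supply, H demand, with an (H,H) match) when $i\ge1$; otherwise it stays. The boundary cases are as follows. At $i=0$, an L supply with H demand gives an (L,H) match and no move. At $i=k^{de}$, an H supply with L demand gives an (H,L) match via case (a) and no move. The two same-type arrival pairs give (H,H) or (L,L) matches with no move. Note that the within-interior (L,L) match occurs via case (b) even when $x_H>0$, so I will confirm the match type by checking which queue the partner comes from. This birth--death structure reproduces $\phi^{de}$. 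The expected payoff is therefore
\[
pq\,r_{HH}+(1-p)(1-q)r_{LL}+p(1-q)\bigl[(1-\phi_{k^{de}})r_{LL}+\phi_{k^{de}}r_{HL}\bigr]+q(1-p)\bigl[(1-\phi_0)r_{HH}+\phi_0 r_{LH}\bigr].
\]
Using the balance relation $p(1-q)\phi_{k^{de}}\,\delta^{k^{de}}=q(1-p)\phi_0$ with the given $\phi$, one gets $q(1-p)\phi_0=p(1-q)\frac{(1-\delta)\delta^{k^{de}}}{1-\delta^{k^{de}+1}}$ and $p(1-q)\phi_{k^{de}}=p(1-q)\frac{1-\delta}{1-\delta^{k^{de}+1}}$. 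Regrouping around $qr_{HH}+(1-q)r_{LL}$ and collecting the $r$ terms should give the displayed formula; the case $p=q$ follows similarly with $\phi\equiv 1/(k^{de}+1)$. The expression is structurally identical to \eqref{eq:optimal sw} with $k^{ce}$ replaced by $k^{de}$, which serves as a check on the algebra.

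\emph{Step 3: non-monotonicity.} I expect this to be the main obstacle, and I will handle it by explicit example. Write $W^{de}=G(k^{de})-k^{de}h$. Since $k^{de}=\lfloor q\alpha(r_{HH}-r_{HL})/h\rfloor$ is a step function of $h$, within each step $W^{de}$ is strictly decreasing in $h$ whenever $k^{de}\ge1$. At a breakpoint $h_0=q\alpha(r_{HH}-r_{HL})/m$, $k^{de}$ drops from $m$ to $m-1$, and the jump in welfare is $mh_0-(m-1)h_0-[G(m)-G(m-1)]=h_0-[G(m)-G(m-1)]$. This jump is positive whenever $G(m)-G(m-1)<h_0$. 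For example, take $p=q$ with $\alpha$ close to $1$ and large $r_{HH}-r_{HL}$, so that $h_0$ is large relative to $p(1-p)r/(m(m+1))$. Exhibiting such parameters, for instance those of Figure \ref{figure:centralized sw}, shows that $W^{de}$ increases at that breakpoint, and together with the decrease within each step this establishes non-monotonicity.
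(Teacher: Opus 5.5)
Your overall route is the paper's. You take the recurrent class and stationary law from Proposition \ref{prop: steady states in equilibrium}, observe that exactly one match forms per period so the holding cost is $k^{de}h$, and average the match payoff over the four arrival pairs; the paper simply reuses its Appendix C computation of $R(k^\pi)$ with $k^\pi=k^{de}$. Your Step 1 check via Lemma \ref{lemma:threshold property}(ii) is sound. Step 3 goes beyond the paper, which only illustrates non-monotonicity with Figure \ref{figure:decentralized sw}. With $p=q$ and $m=1$, the upward jump $h_0-\tfrac{p(1-p)r}{2}$ is positive whenever $\alpha>\frac{(1-p)r}{2(r_{HH}-r_{HL})}$, and this holds for the parameters of Figure \ref{figure:centralized sw}.

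However, Step 2 does not go through as written. Your birth--death chain has detailed balance $p(1-q)\phi_i=q(1-p)\phi_{i+1}$, so $\phi_0=\delta^{k^{de}}\phi_{k^{de}}$ and
\[
q(1-p)\phi_0=p(1-q)\frac{(1-\delta)\delta^{k^{de}+1}}{1-\delta^{k^{de}+1}},
\]
not the value with $\delta^{k^{de}}$ that you wrote. Your relation $p(1-q)\phi_{k^{de}}\delta^{k^{de}}=q(1-p)\phi_0$ contradicts the very distribution you cite. At $k^{de}=0$ it would make the (L,H) rate $p(1-q)$ instead of $q(1-p)$.

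With the correct value, your payoff expression becomes
\[
qr_{HH}+(1-q)r_{LL}+p(1-q)\frac{1-\delta}{1-\delta^{k^{de}+1}}\bigl[(r_{HL}-r_{LL})-\delta^{k^{de}+1}(r_{HH}-r_{LH})\bigr].
\]
Substituting $\delta^{k^{de}+1}=1-(1-\delta^{k^{de}+1})$ gives $qr_{HH}+(1-q)r_{LL}+p(1-q)(r_{HH}-r_{LH})(1-\delta)-p(1-q)r\frac{1-\delta}{1-\delta^{k^{de}+1}}$.

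The constant term carries $r_{HH}-r_{LH}$, not the $r_{HH}-r_{LL}$ printed in the statement. The two differ by $(p-q)(r_{LH}-r_{LL})$. The $k^{de}=0$ case, where welfare is plainly $pqr_{HH}+p(1-q)r_{HL}+q(1-p)r_{LH}+(1-p)(1-q)r_{LL}$, confirms the $r_{HH}-r_{LH}$ version. It is also what the paper's own Appendix C derivation produces; the displayed propositions carry a misprint.

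So your claim that regrouping ``should give the displayed formula'' cannot be completed literally when $p\neq q$ and $r_{LH}\neq r_{LL}$. You need to fix the stationary-probability slip and state the corrected constant. The $p=q$ case is unaffected, since that term vanishes.
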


Proposition \ref{prop:decentralized sw} shows that social welfare in the decentralized setting takes the same functional form as in the centralized setting, with the centralized threshold $k^{ce}$ replaced by the equilibrium threshold $k^{de}$.

%
However, unlike in the centralized setting, social welfare in the decentralized setting does not always decrease with respect to the unit waiting cost $h$. Figure \ref{figure:decentralized sw} illustrates how social welfare in the decentralized system varies as the unit waiting cost increases, with $(p,q)\in\{(0.4, 0.6), (0.5, 0.5), (0.6, 0.4)\}$, $\mathbf{r}=(800, 50, 50, 0)$, and $\alpha=0.2$. More numerical examples with different parameter combinations can be found in Appendix \ref{appendix: numberical results}. As shown in the figure, the welfare curve is piecewise decreasing, with upward jumps at discrete cutoff points. This implies that decentralized social welfare declines in $h$ as long as the equilibrium threshold remains fixed, but changes discontinuously when the threshold adjusts. Within each segment, a higher $h$ increases average waiting costs and therefore reduces welfare. At some cutoff point, the threshold decreases by one, sharply lowering average waiting costs and producing the upward jump in welfare. When $h$ is sufficiently large, no agent is willing to wait, and social welfare becomes constant.

\begin{figure}[htbp]
    \centering
\subfigure[$p=0.4$ and $q=0.6$\label{subfig:main_a}]{
\includegraphics[width=0.3\textwidth]      {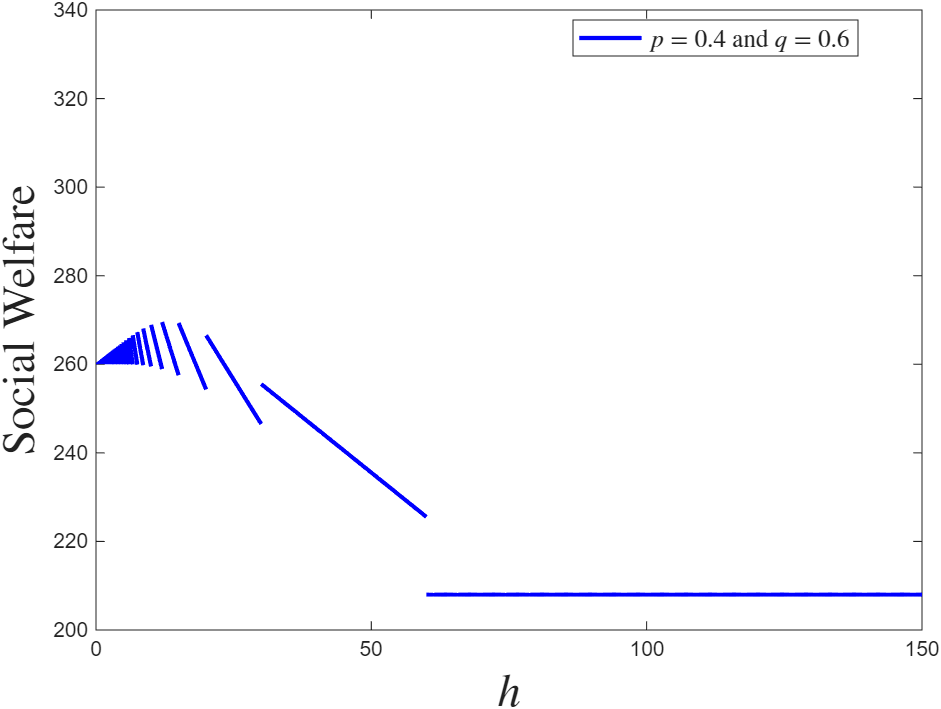}}
\hfill
\subfigure[$p=0.5$ and $q=0.5$\label{subfig:main_b}]{    \includegraphics[width=0.3\textwidth]{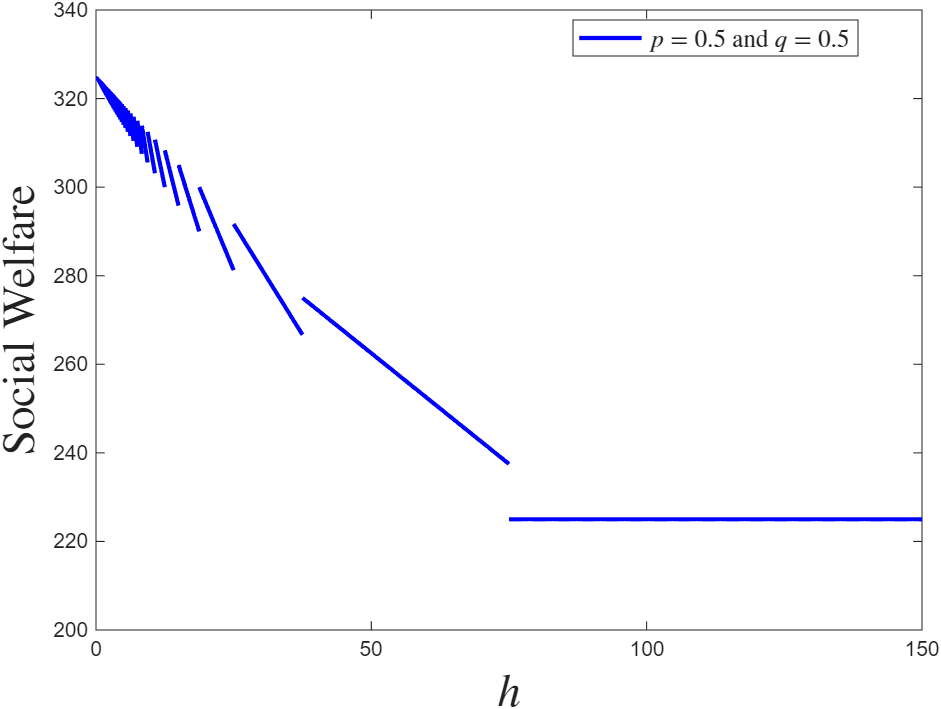}}
\hfill
\subfigure[$p=0.6$ and $q=0.4$\label{subfig:main_c}]{
\includegraphics[width=0.3\textwidth]{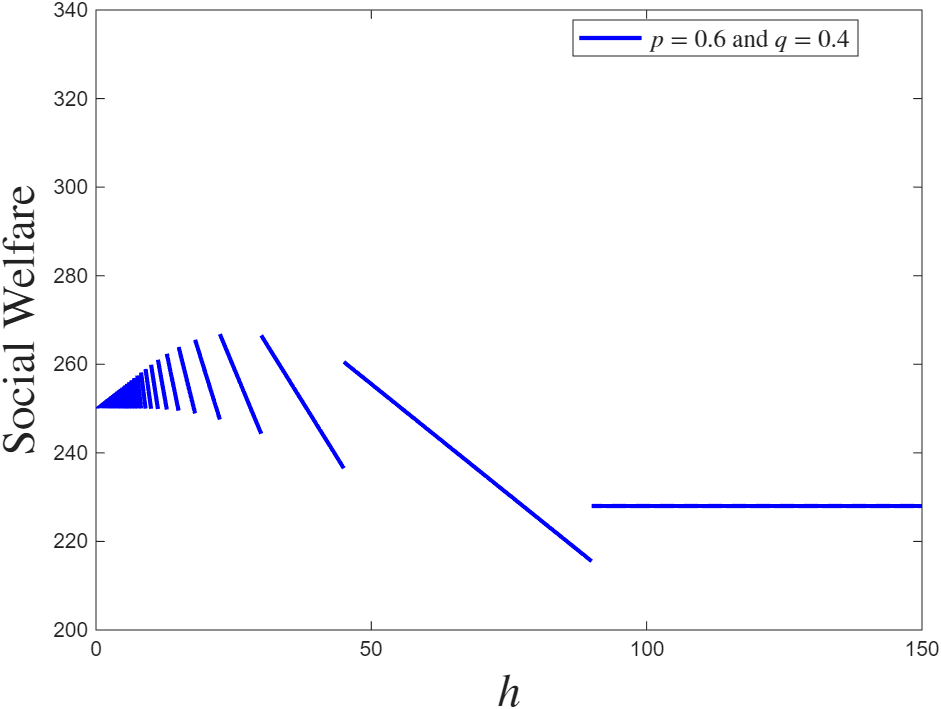}}
\caption{Impact of $h$ on social welfare in the decentralized system}\label{figure:decentralized sw}
 \end{figure}



\subsection{Comparison Between Centralized and Decentralized Settings}

In the centralized and decentralized settings, the matching processes are governed by the thresholds $k^{ce}$ and $k^{de}$, respectively. We will begin by comparing these two thresholds.

\begin{proposition}\label{prop:threshold comparison}
When $\alpha<\frac{hk^{ce}}{q(r_{HH}-r_{HL})}$, $k^{de}<k^{ce}$; when $\alpha\geq \frac{h(k^{ce}+1)}{q(r_{HH}-r_{HL})}$, $k^{de}>k^{ce}$; and when $\alpha\in\left[\frac{hk^{ce}}{q(r_{HH}-r_{HL})}, \frac{h(k^{ce}+1)}{q(r_{HH}-r_{HL})}\right)$, $k^{de}=k^{ce}$.
\end{proposition}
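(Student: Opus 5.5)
The plan is to work directly from the closed form of the decentralized threshold in Lemma \ref{lem: dec maxium queue length}, namely $k^{de}=\lfloor q\alpha(r_{HH}-r_{HL})/h\rfloor$, and compare it with the integer $k^{ce}$ from Proposition \ref{cen: prop optimal threshold}. Nothing beyond elementary properties of the floor function should be required. Set $c:=q(r_{HH}-r_{HL})/h$, so that $k^{de}=\lfloor c\alpha\rfloor$ and the three cutoffs in the statement become $\alpha<k^{ce}/c$, $\alpha\ge (k^{ce}+1)/c$, and $\alpha\in[k^{ce}/c,(k^{ce}+1)/c)$.

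The key fact is that for any real $x\ge 0$ and any integer $m$, we have $\lfloor x\rfloor=m$ if and only if $m\le x<m+1$. We also have $\lfloor x\rfloor<m$ if and only if $x<m$, and $\lfloor x\rfloor\ge m+1$ if and only if $x\ge m+1$. Applying these with $x=c\alpha$ and $m=k^{ce}$ gives the three cases.
\begin{itemize}
\item If $\alpha<k^{ce}/c$, then $c\alpha<k^{ce}$, so $k^{de}\le k^{ce}-1<k^{ce}$.
\item If $\alpha\ge (k^{ce}+1)/c$, then $c\alpha\ge k^{ce}+1$, so $k^{de}\ge k^{ce}+1>k^{ce}$.
\item If $\alpha$ lies in the middle interval, then $k^{ce}\le c\alpha<k^{ce}+1$, which yields $k^{de}=k^{ce}$.
\end{itemize}

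Two degenerate situations need brief comment. The first is $r_{HH}=r_{HL}$ or $q=0$, where $c=0$. Then $k^{de}=0$, and the cutoffs should be read with $k^{ce}/c$ interpreted as $+\infty$ when $k^{ce}>0$. In that case Proposition \ref{cen: prop optimal threshold} with $r$ small, or a direct argument, should give $k^{ce}=0$, so the statement still holds. The second is a first interval that is empty when $k^{ce}=0$, which is harmless.

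The only step I would check with care is that the integer $k^{ce}$ can be treated as fixed and independent of $\alpha$. This holds because the centralized problem does not involve the payoff split. I would also confirm that the intervals in the statement are exactly the preimages under $\alpha\mapsto\lfloor c\alpha\rfloor$. I expect no real obstacle.
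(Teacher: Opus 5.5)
Your proposal is correct and matches the paper's argument. Both treat $k^{ce}$ as an integer independent of $\alpha$ and apply the elementary floor-function equivalences $\lfloor x\rfloor<m \iff x<m$ and $\lfloor x\rfloor\ge m+1 \iff x\ge m+1$ to $k^{de}=\lfloor q\alpha(r_{HH}-r_{HL})/h\rfloor$; your remark on the degenerate case $q(r_{HH}-r_{HL})=0$, where the cutoffs are undefined but $k^{ce}=k^{de}=0$, is a small addition the paper does not make.
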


Note that the thresholds $k^{ce}$ and $k^{de}$ correspond to the maximum total number of supply agents waiting at the beginning of each period in the centralized and decentralized settings, respectively. Proposition \ref{prop:threshold comparison} implies that the maximum total number of supply agents waiting in the decentralized process might be either inefficiently longer or shorter than that under the optimal central policy. More specifically, in the decentralized setting, the threshold $k^{ce}$ depends on the payoff allocation proportion $\alpha$ assigned to supply agents. If $\alpha=0$, supply agents would avoid waiting since they would incur costs without receiving any matching payoff. As a result, the queue lengths of supply agents would be zero. Conversely, if $\alpha=1$, supply agents receive all matching payoffs, which provides a strong incentive for them to wait for better matches and may result in inefficiently longer waiting lines. For example,  when $\alpha=1$, $p=q=1/2$, and $h=1$, we have 
\begin{eqnarray*}
	k^{de}&=&\left\lfloor\frac{\alpha q(r_{HH}-r_{HL})}{h}\right\rfloor=\left\lfloor\frac{1}{2}(r_{HH}-r_{HL})\right\rfloor\\
	k^{ce}&=&\left\lfloor\frac{-h+\sqrt{h^2+4hp(1-p)r}}{2h}\right\rfloor=\left\lfloor\frac{-1+\sqrt{1+(r_{HH}-r_{HL})-(r_{LH}-r_{LL})}}{2}\right\rfloor.
\end{eqnarray*}
It is easy to verify that $\frac{1}{2}(r_{HH}-r_{HL})>\frac{-1+\sqrt{1+(r_{HH}-r_{HL})-(r_{LH}-r_{LL})}}{2}$, and thus $k^{de}\ge  k^{ce}$.

We also examine the impact of the unit waiting cost on the two thresholds in Figure \ref{fig:threshold}, where $p=q=0.5$ and $\mathbf{r}=(800, 50, 50, 0)$. We can see that when $\alpha$ is large, $k^{de}> k^{ce}$ for all values of $h$. When $\alpha$ is small, $k^{de}> k^{ce}$ when $h$ is low, and $k^{de}\leq  k^{ce}$ when $h$ is high. Moreover, we note that the threshold of the decentralized system decreases more rapidly than that of the centralized system as $h$ increases. This is because, as shown in equations (\ref{eq:centralized threshold}) and (\ref{eq:threshold in decentralized setting}), $k^{ce}$ decreases at the rate of $1/\sqrt{h}$, while $k^{de}$ decreases at the rate of $1/h$.

\begin{figure}[htpb]
	\centering
	\subfigure[$\alpha=0.5$]{
		\label{fig:threshold:a} 
		\includegraphics[width = 0.45\linewidth]{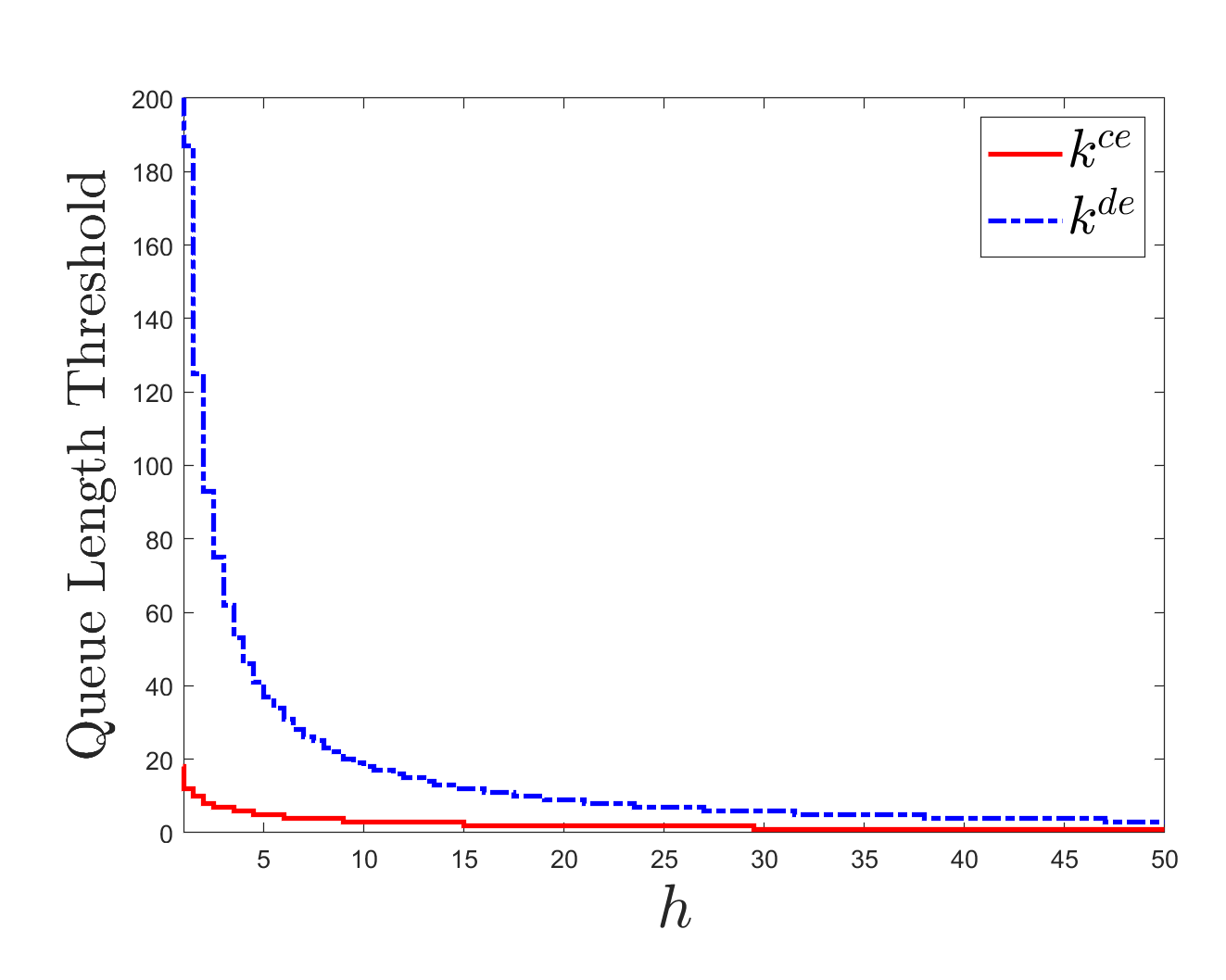}}
	\subfigure[$\alpha=0.1$]{
		\label{fig:threshold:b} 
		\includegraphics[width = 0.45\linewidth]{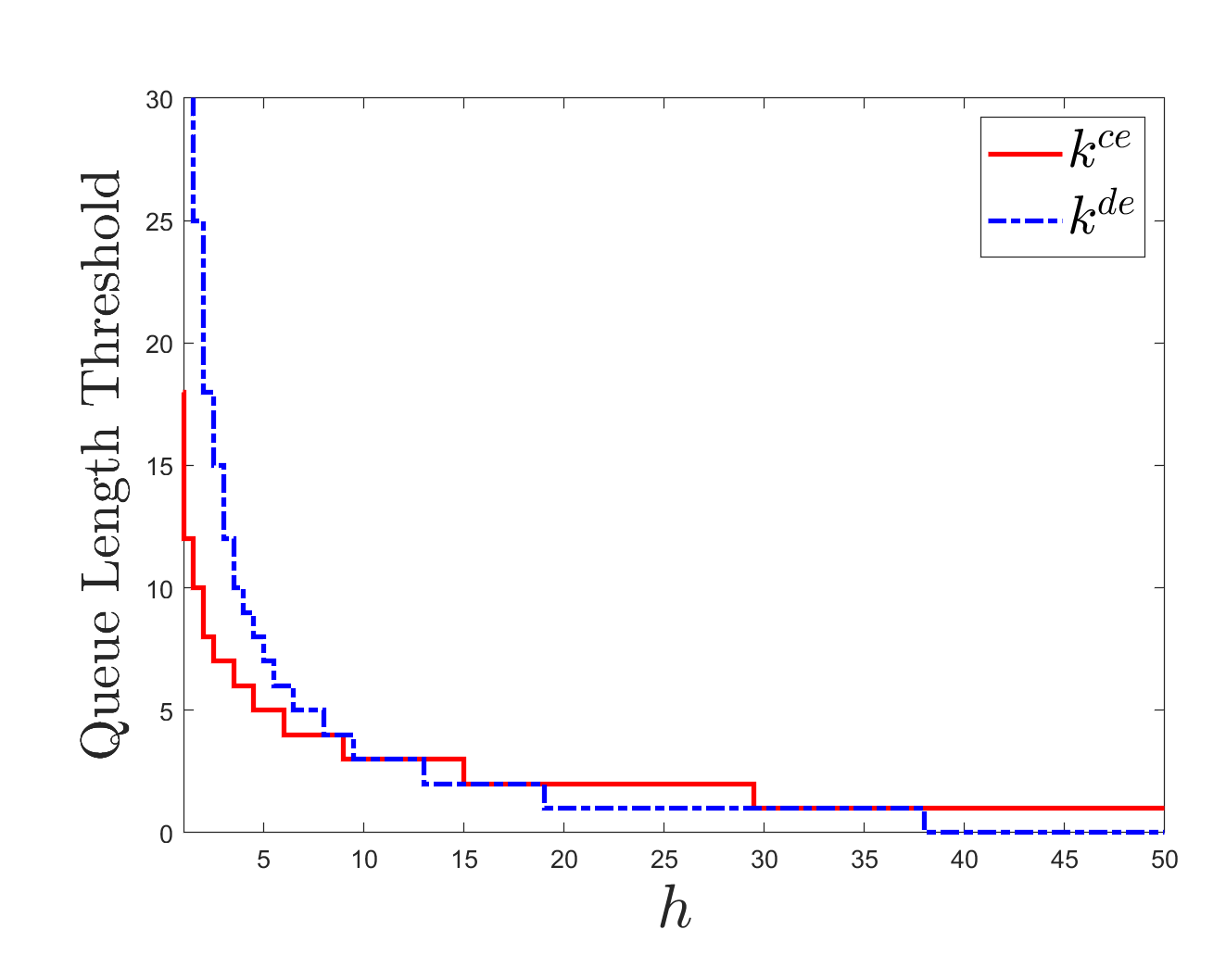}}
	\caption{Thresholds of centralized and decentralized systems}\label{fig:threshold}
\end{figure}

When $\alpha$ is set to be the value that makes $k^{de}=k^{ce}$, the two settings become identical, and the decentralized matching process is coordinated.

\begin{proposition}[{\sc{Social Welfare Coordination}}] \label{prop: coordination} When $\alpha\in\left[\frac{hk^{ce}}{q(r_{HH}-r_{HL})}, \frac{h(k^{ce}+1)}{q(r_{HH}-r_{HL})}\right)$, the decentralized matching process is coordinated and behaves similarly to the centralized system.
\end{proposition}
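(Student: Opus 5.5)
The plan is to reduce coordination to equality of thresholds, and then to equality of social welfare and matching behavior. First, by Proposition \ref{prop:threshold comparison}, whenever $\alpha\in\left[\frac{hk^{ce}}{q(r_{HH}-r_{HL})}, \frac{h(k^{ce}+1)}{q(r_{HH}-r_{HL})}\right)$ we have $k^{de}=k^{ce}$. Concretely, I would verify this directly from (\ref{eq:threshold in decentralized setting}): the condition is equivalent to
\[
k^{ce}\le \frac{q\alpha(r_{HH}-r_{HL})}{h}<k^{ce}+1,
\]
so taking the floor gives $k^{de}=k^{ce}$. The interval is nonempty, with positive length $h/(q(r_{HH}-r_{HL}))$. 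It lies inside $[0,1]$ whenever $h(k^{ce}+1)\le q(r_{HH}-r_{HL})$, and otherwise it intersects $[0,1]$ at least at its left endpoint provided $hk^{ce}\le q(r_{HH}-r_{HL})$. I would note that feasibility of $\alpha\le 1$ is implicitly assumed in the statement.

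Second, I would compare welfare. Substituting $k^{de}=k^{ce}$ into Proposition \ref{prop:decentralized sw} gives exactly the expression (\ref{eq:optimal sw}) of Proposition \ref{prop: centralized welfare}, case by case for $p\neq q$ and $p=q$. Hence $W^{de}=W^{ce}$, so the welfare-maximizing equilibrium attains the centralized optimum.

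Third, to support the claim that the process "behaves similarly," I would compare the long-run matching behavior. By Proposition \ref{prop: steady states in equilibrium}, the decentralized recurrent states are $\{(i,k^{de}-i)\}$, so the total backlog is always $k^{de}$. I would argue that under the centralized policy of Theorem \ref{thm:optimal mechanism of centralized system} the recurrent states are $\{(i,k^{ce}-i)\}$ with the same distribution. In a recurrent state with $x_H+x_L=k$, the two policies coincide:
\begin{itemize}
\item H-demand is matched greedily in both.
\item For L-demand with $x_H<k$, there is an L-type supply agent, and the centralized planner matches it. In the decentralized system, Lemma \ref{lemma:threshold property}(ii) gives $k_L(x_H)\le k^{de}-x_H-1$ after the arrival accounting, so an L-supply agent also offers.
\item For L-demand with $x_H=k$ plus an arriving H-supply, both systems match (H,L), since $x_H>k$.
\end{itemize}
The main obstacle is this arrival-accounting check at the boundary, namely verifying precisely that $k_L(x_H)$ in Theorem \ref{thm:stable matching of decentralized system}(b) leaves an available L-supply agent in every recurrent state. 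If that check becomes delicate, I would fall back on the welfare identity of the second step as the operative meaning of coordination.
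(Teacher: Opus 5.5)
Your proposal is correct and follows the paper's own proof. The paper likewise observes that the interval condition is equivalent to $k^{ce}\le q\alpha(r_{HH}-r_{HL})/h<k^{ce}+1$, hence $k^{de}=k^{ce}$, and concludes that the two systems and their welfare formulas coincide. Your recurrent-state comparison goes beyond what the paper does. The boundary check you flag as the main obstacle follows directly from Lemma~\ref{lemma:threshold property}(ii) applied at the post-arrival H-count. It gives $k_L(x_H+1)\le k^{de}-x_H-1$ when an H-supply agent arrives and $k_L(x_H)\le k^{de}-x_H$ when an L-supply agent arrives. That second bound also covers the case $x_H=k$ with an arriving L-supply agent, which your bullets omit.
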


The above proposition shows that the decentralized matching process can be coordinated by adjusting the payoff allocation proportion $\alpha$. This coordination result contrasts sharply with the full backlog setting. \citet{baccara2020optimal} show that in systems with a full backlog,  the decentralized matching process {\it cannot} be coordinated under the FCFS discipline. When both sides can wait, changing $\alpha$ primarily reallocates waiting from one side to the other: increasing the supply share induces supply agents to wait longer and lengthens the supply queue, whereas increasing the demand share shifts waiting to the demand side and lengthens the demand queue. Consequently, regardless of how payoffs are split, the decentralized system exhibits a total queue that is inefficiently large relative to the centralized benchmark. In our model, by contrast, demand is short-lived and exits if unmatched, so only the supply side accumulates a queue. The queue length is therefore directly governed by $\alpha$: a higher $\alpha$ increases the supply’s incentive to wait and lengthens the queue, while a lower $\alpha$ reduces waiting incentives and shortens the queue. Because there is no offsetting demand-side backlog, the platform can tune the decentralized equilibrium queue length to coincide with the centralized optimum. In particular, by selecting $\alpha$ within the appropriate range, perfect coordination is achieved.

Next, we compare social welfare in the two settings. Clearly, social welfare in the decentralized system is no greater than that in the centralized system. However, as discussed above, the social welfare gap can be zero when the decentralized matching process is coordinated, even with a small unit waiting cost. In such cases, $W^{de}=W^{ce}$ and $k^{de}=k^{ce}>0$, i.e., the maximum queue length is positive. 

In Figure \ref{figure:sw gap}, we plot the social welfare gap as a function of the unit waiting cost $h$. In this example, we set $p=q=0.5$, $\mathbf{r}=(800, 50, 50, 0)$, and $\alpha=0.2$. As we can see, when 
$h$ is between 40 and 60, the social welfare gap is zero. This differs from the results obtained in \citet{baccara2020optimal}: in systems with full backlog, before $h$ becomes extremely large, at which point no agents choose to wait in either the centralized or the decentralized system, social welfare in the centralized system is always {\it strictly} greater than that in the decentralized system.

Moreover, we observe that the social welfare gap consists of increasing, decreasing, and constant segments in $h$. When the threshold in the decentralized system is larger than that in the centralized system, the social welfare gap has increasing segments in $h$. When thresholds are equal, the social welfare gap remains constant within a segment. When the threshold in the decentralized system is smaller than that in the centralized system, we observe decreasing segments. This observation again contrasts with the results in \citet{baccara2020optimal}. For systems with a full backlog, the social welfare gap consists solely of increasing segments, as the threshold in the decentralized system is always greater than that in the centralized system.

\begin{figure}[pt] 
 \centering
 \includegraphics[width=0.5\textwidth]{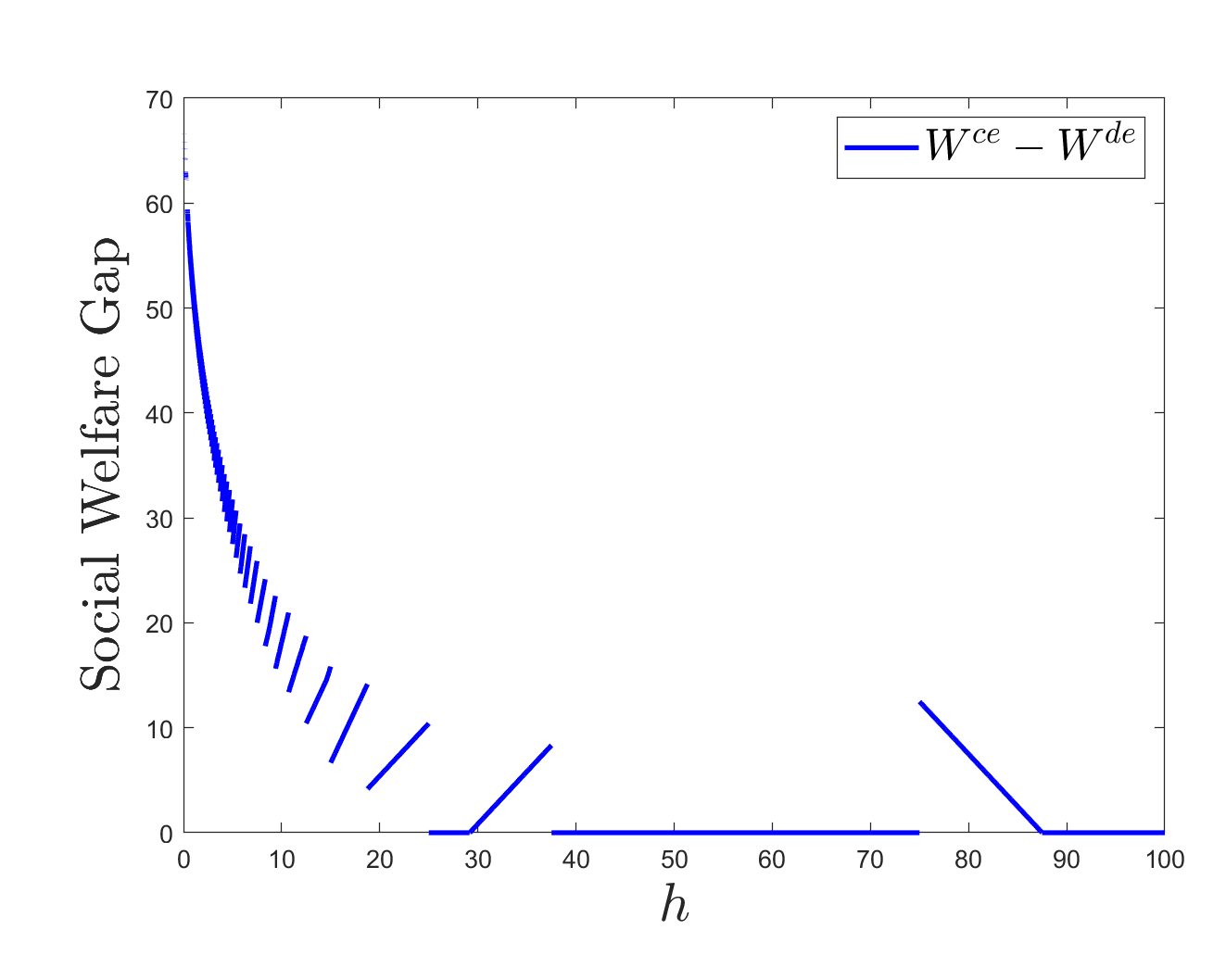}
 \caption{Impact of $h$ on the social welfare gap}\label{figure:sw gap}
 \end{figure}

\section{Impacts of Patience}\label{sec:impact}

In this section, we study the impacts of patience by comparing social welfare across three systems: the full-backlog system, the one-sided backlog system, and the no-backlog system. We refer to the full-backlog system as the FB system, the one-sided backlog system as the OB system, and the no-backlog system as the NB system for short. For tractability, we restrict attention to the symmetric-arrival case $p=q$. 

The full backlog system is analyzed in \cite{baccara2020optimal}, while the one-sided backlog system is discussed in Sections \ref{sec:centralized model} and \ref{sec:decentralized model}. In the no-backlog system, agents are impatient on both sides and cannot be carried over to the next period. Thus, the dynamic matching problem degenerates to a single-period problem. In the following sections, we compare under centralized and decentralized settings.

\subsection{Comparison in Centralized Setting}

In the centralized setting, social welfare in the FB and OB systems is expressed as follows: 
\begin{align*}
    W^{ce}_{FB}&=pr_{HH}+(1-p)r_{LL}-\frac{p(1-p)r}{2k^{ce}_{FB}+1}-\frac{2k^{ce}_{FB}(k^{ce}_{FB}+1)}{2k^{ce}_{FB}+1}h,\\
   W^{ce}_{OB}&=pr_{HH}+(1-p)r_{LL}-\frac{p(1-p)r}{k^{ce}+1}-k^{ce}h,
\end{align*}
respectively, where $k^{ce}_{FB}=\left\lfloor\sqrt{\frac{p(1-p)r}{2h}}\right\rfloor$ and $k^{ce}=\left\lfloor\frac{-h+\sqrt{h^2+4hp(1-p)r}}{2h}\right\rfloor$.

In the NB system, it is always optimal to match the arriving supply agent with the arriving demand agent. Otherwise, both will leave the system without generating any revenue. Thus, the optimal social welfare can be expressed as 
\begin{equation*}
    W^{ce}_{NB}=pr_{HH}+(1-p)r_{LL}-p(1-p)r.
\end{equation*}

We now compare social welfare across the three systems within the centralized setting. 
\begin{proposition}\label{prop:comparison in centralized setting}
    $W^{ce}_{FB}\geq W^{ce}_{OB}\geq W^{ce}_{NB}$ and $W^{ce}_{FB}-W^{ce}_{OB}\leq W^{ce}_{OB}-W^{ce}_{NB}$.
\end{proposition}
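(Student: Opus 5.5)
Throughout, write $c:=p(1-p)r\ge 0$ and $B:=pr_{HH}+(1-p)r_{LL}$. By the expressions displayed just before Proposition \ref{prop:comparison in centralized setting} (with Proposition \ref{prop: centralized welfare} for the OB system), each welfare level is $B$ minus a loss term. I will work with these losses instead of the welfare levels. Define
\begin{equation*}
f(m):=\frac{c}{m+1}+mh,\qquad g(k):=\frac{c}{2k+1}+\frac{2k(k+1)}{2k+1}h,\qquad m,k\in\mathbb{Z}_+ .
\end{equation*}
Then $W^{ce}_{NB}=B-c$, $W^{ce}_{OB}=B-f(k^{ce})$ and $W^{ce}_{FB}=B-g(k^{ce}_{FB})$. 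I will use that $k^{ce}$ minimizes $f$ over $\mathbb{Z}_+$ and $k^{ce}_{FB}$ minimizes $g$ over $\mathbb{Z}_+$. This holds because these thresholds are the optimal ones, as in Proposition \ref{cen: prop optimal threshold} and \citet{baccara2020optimal}; the displayed floors are exactly the integer minimizers. The only facts I extract from optimality are the first-order integer conditions. For $m=k^{ce}$, the condition $f(m)\le f(m\pm1)$ gives
\begin{equation*}
h\,m(m+1)\le c\le h(m+1)(m+2).
\end{equation*}
For $k=k^{ce}_{FB}\ge 1$, the condition $g(k)\le g(k-1)$ gives $2c/(4k^2-1)\ge 4k^2h/(4k^2-1)$, that is, $c\ge 2k^2h$.

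\textbf{Step 1: $W^{ce}_{OB}\ge W^{ce}_{NB}$.} This is immediate, since $f(k^{ce})\le f(0)=c$.

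\textbf{Step 2: $W^{ce}_{FB}\ge W^{ce}_{OB}$.} Set $m=k^{ce}$. It suffices to exhibit some $k$ with $g(k)\le f(m)$, and I split by the parity of $m$.
\begin{itemize}
\item \emph{$m=2k$ even.} Since $2k(k+1)/(2k+1)\le 2k$, we get $g(k)\le f(2k)$ directly.
\item \emph{$m=2k+1$ with $k\ge1$.} A direct computation gives
\begin{equation*}
g(k)-f(2k+1)=\frac{c}{(2k+1)(2k+2)}-\frac{2k^2+2k+1}{2k+1}h .
\end{equation*}
This is nonpositive iff $c\le h(2k+2)(2k^2+2k+1)$. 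The upper optimality bound gives $c\le h(2k+2)(2k+3)$, and $2k+3\le 2k^2+2k+1$ when $k\ge1$, so the required inequality holds.
\item \emph{$m=1$.} Compare instead with $g(1)=c/3+4h/3$. The condition $g(1)\le f(1)=c/2+h$ is equivalent to $c\ge 2h$, which is exactly the lower optimality bound for $m=1$.
\end{itemize}
This parity and case bookkeeping is the main obstacle. The naive feasibility argument ("patience enlarges the planner's choice set") does not apply directly, because in the FB system demand cannot be discarded. So the comparison has to go through the explicit formulas, and the odd case needs the optimality inequality for $k^{ce}$.

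\textbf{Step 3: diminishing returns.} The inequality $W^{ce}_{FB}-W^{ce}_{OB}\le W^{ce}_{OB}-W^{ce}_{NB}$ is equivalent to
\begin{equation*}
2f(k^{ce})\le c+g(k^{ce}_{FB}).
\end{equation*}
Let $k=k^{ce}_{FB}$.
\begin{itemize}
\item \emph{$k=0$.} Then $g(0)=c$, and the claim reduces to Step 1.
\item \emph{$k\ge1$.} Since $f(k^{ce})\le f(k)$, it suffices to show $2f(k)\le c+g(k)$. Collecting terms,
\begin{equation*}
c+g(k)-2f(k)=\frac{2k^2}{2k+1}\left(\frac{c}{k+1}-h\right).
\end{equation*}
This is nonnegative iff $c\ge (k+1)h$. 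The FB optimality condition gives $c\ge 2k^2h\ge (k+1)h$ for $k\ge1$, which completes the argument.
\end{itemize}
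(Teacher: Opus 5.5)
Your proof is correct, and it takes a genuinely different route from the paper's.

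\textbf{The paper's route.} The paper works directly with the closed-form floors. It first disposes of $c<2h$, where both thresholds are zero. For $c\ge 2h$, it uses square-root manipulations to establish $k^{ce}_{FB}\le k^{ce}\le 2k^{ce}_{FB}$ and $c\ge h(k^{ce}+1)$. It then writes $W^{ce}_{FB}-W^{ce}_{OB}$ and $W^{ce}_{OB}-W^{ce}_{NB}$ as sums of products of nonnegative factors. For diminishing returns it splits again into two cases: $2h\le c<6h$, where $k^{ce}=k^{ce}_{FB}=1$, and $c\ge 6h$, where it uses $c\ge 2h(k^{ce}+1)$.

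\textbf{Your route.} You treat each welfare level as $B$ minus the minimum of a loss over threshold policies and argue by test points. An OB threshold $2k$ is dominated by the FB threshold $k$, since both have the same mismatch term $c/(2k+1)$ and FB has the smaller waiting cost. Only odd $k^{ce}$ needs an optimality inequality. Diminishing returns collapses to the single identity $c+g(k)-2f(k)=\frac{2k^2}{2k+1}(\frac{c}{k+1}-h)$ evaluated at $k=k^{ce}_{FB}$.

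\textbf{What each buys.} Your route avoids the irrational threshold formulas and the three-way case split. It is also more robust: in the paper's middle case, the difference between the two welfare gains is actually $(c-2h)/3$, not zero as asserted there, and your argument never needs that computation. The paper's route, in turn, produces the explicit threshold sandwich $k^{ce}_{FB}\le k^{ce}\le 2k^{ce}_{FB}$, which has interpretive content that your argument does not give.

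\textbf{A correction to your bookkeeping.} You say the only facts you extract from optimality are first-order integer conditions, but that is not quite accurate.
\begin{itemize}
\item Step 2 uses that $k^{ce}_{FB}$ is a \emph{global} minimizer of $g$. You compare $g(k^{ce}_{FB})$ with $g(k)$ at a $k$ determined by $k^{ce}$, not by $k^{ce}_{FB}$.
\item Steps 1 and 3 use global minimality of $f$ at $k^{ce}$. This follows from convexity, since $f(m)-f(m-1)=h-c/(m(m+1))$ is nondecreasing in $m$.
\item For $g$, note that $g(k)-g(k-1)=h+(h-2c)/(4k^2-1)$. This is positive for all $k\ge 1$ when $2c<h$, and nondecreasing in $k$ when $2c\ge h$. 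Hence $\lfloor\sqrt{c/(2h)}\rfloor$ is indeed the global minimizer.
\end{itemize}
Adding that one-line check makes your argument self-contained rather than resting on the citation to the full-backlog analysis.
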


 \begin{figure}[pt]
 \centering
 \includegraphics[width=0.5\textwidth]{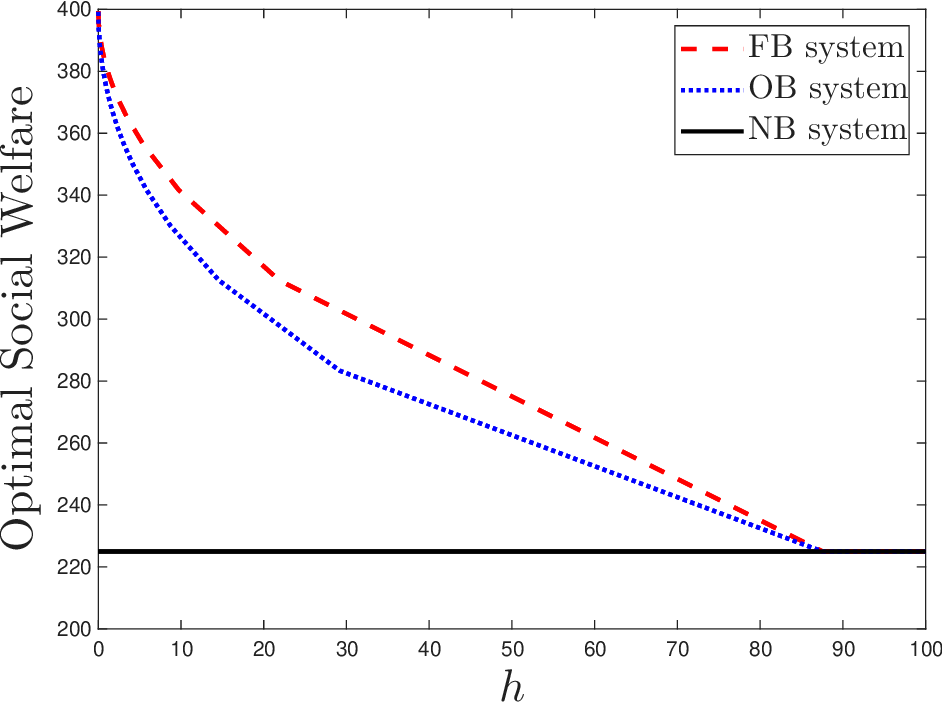}
 \caption{Comparison of the optimal social welfare ($p=0.5$ and $\mathbf{r}=(800, 50, 50, 0)$)} \label{figure:sw comparison}
 \end{figure}

As illustrated in Figure \ref{figure:sw comparison}, Proposition \ref{prop:comparison in centralized setting} shows that, under centralized control, social welfare is highest when both sides are patient and lowest when neither side can wait, which is consistent with intuition. Allowing agents to remain in the system expands the planner’s feasible set and increases the ability to defer matches in order to realize higher-value pairings, thereby improving welfare under the optimal policy. More importantly, Proposition \ref{prop:comparison in centralized setting} shows that the social welfare gain from introducing patience on only one side is at least as large as the incremental gain from making the second side patient as well, indicating diminishing marginal returns to patience. For a platform planner, this implies that most of the welfare benefits of dynamic matching can be achieved by securing patience on a single side, a feature that is already present in many platforms (see, e.g., Table \ref{Table.1}). Accordingly, extending patience from one side to both sides does not double welfare; the marginal contribution of making the second side patient is strictly smaller (at most equal) than that of the first.

This result points to an asymmetric market-design implication: rather than attempting to raise the patience on both sides symmetrically, a planner can leverage inherent asymmetries by ensuring that at least one side of the market is deep and patient. This implication helps explain the high efficiency of platforms and systems that naturally exhibit such a structure. For example, in many local ride-hailing markets, a centralized platform such as Uber relies on a relatively patient pool of drivers to absorb the demand of time-sensitive riders, thereby capturing most of the social welfare gains highlighted by our model. Similarly, in organ transplantation, much of the system’s value is generated by a long-lived recipient list that can be matched promptly to an extremely ``impatient'' perishable organ. The implication for central planners is clear: if the patience imbalance of both sides does not arise endogenously, the primary objective of the planner should be to cultivate patience on at least one side of the market. Doing so is the key to unlocking the vast majority of the system's potential social welfare. In essence, our model shows that, for centralized systems, a little patience on one side can yield substantial improvements in social welfare.


\subsection{Comparison in Decentralized Setting}

In the decentralized setting, social welfare in both the FB (full backlog) system and the OB (one-sided backlog) system is given by  
\begin{align*}
    W^{de}_{FB}&=pr_{HH}+(1-p)r_{LL}-\frac{p(1-p)r}{2k^{de}+1}-\frac{2k^{de}(k^{de}+1)}{2k^{de}+1}h,\\
    W^{de}_{OB}&=pr_{HH}+(1-p)r_{LL}-\frac{p(1-p)r}{k^{de}+1}-k^{de}h,
\end{align*}
where $k^{de}=\left\lfloor \frac{p \alpha (r_{HH}-r_{HL})}{h}\right\rfloor$. 

In the NB system, agents on both sides are impatient. Hence, in the decentralized setting, each agent accepts a match whenever a counterpart on the other side is willing to match. It follows that a welfare-maximizing equilibrium exists in which all feasible matches are formed, such that the decentralized social welfare coincides with that in the centralized benchmark. In particular, social welfare is given by:
\begin{equation*}
    W_{NB}^{de}=W^{ce}_{NB}=pr_{HH}+(1-p)r_{LL}-p(1-p)r.
\end{equation*}

Unlike in the centralized setting, the value of patience in the decentralized setting is shaped by agents’ self-interested incentives. As a result, greater patience can either increase social welfare, i.e., by enabling more efficient matches, or reduce it by distorting agents’ matching decisions.
\begin{proposition}\label{prop:comparison in decentralized setting}
There exist two thresholds $\alpha_1$ and $\alpha_2$ with $\alpha_1<\alpha_2$, such that \\
(1) if $\alpha\leq \alpha_1$, $W^{de}_{FB}\geq W^{de}_{OB}\geq W^{de}_{NB}$;\\
(2) if $\alpha\geq \alpha_2$, $W^{de}_{FB}\leq W^{de}_{OB}\leq W^{de}_{NB}$;\\
(3) if $\alpha\in(\alpha_1, \alpha_2)$, either $W^{de}_{FB}\geq W^{de}_{OB}\geq W^{de}_{NB}$ or $W^{de}_{FB}\leq W^{de}_{OB}\leq W^{de}_{NB}$, depending on the waiting cost $h$. 
\end{proposition}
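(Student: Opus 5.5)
The plan is to reduce all three comparisons to the sign of one scalar quantity. Write $A:=p(1-p)r\ge 0$ and $k:=k^{de}=\lfloor p\alpha(r_{HH}-r_{HL})/h\rfloor$, and use the closed forms of $W^{de}_{FB}$, $W^{de}_{OB}$ and $W^{de}_{NB}$ displayed just before the statement. The common term $pr_{HH}+(1-p)r_{LL}$ cancels in every difference, so the comparison is purely algebraic. A direct computation should give
\begin{align*}
W^{de}_{OB}-W^{de}_{NB}&=A\Big(1-\tfrac{1}{k+1}\Big)-kh=k\Big(\tfrac{A}{k+1}-h\Big),\\
W^{de}_{FB}-W^{de}_{OB}&=A\Big(\tfrac{1}{k+1}-\tfrac{1}{2k+1}\Big)-h\Big(\tfrac{2k(k+1)}{2k+1}-k\Big)=\tfrac{k}{2k+1}\Big(\tfrac{A}{k+1}-h\Big).
\end{align*}
Both differences therefore carry the sign of $g:=A-h(k+1)$, with nonnegative prefactors. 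If $k=0$, all three welfare levels coincide, so both chains hold trivially. This already yields a dichotomy for every $\alpha$ and $h$: either $W^{de}_{FB}\ge W^{de}_{OB}\ge W^{de}_{NB}$ (when $g\ge 0$) or the reverse chain (when $g\le 0$). That dichotomy is exactly part (3).

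Next I locate the thresholds on $\alpha$ using the floor bounds. Set $\Delta:=r_{HH}-r_{HL}$. By definition of the floor,
\[
hk\le p\alpha\Delta<h(k+1).
\]
\emph{Upper threshold.} The strict inequality gives $h(k+1)>p\alpha\Delta$. If $p\alpha\Delta\ge A$, then $h(k+1)>A$, so $g<0$ whenever $k\ge 1$, while the case $k=0$ is trivial. Hence I take
\[
\alpha_2:=\frac{A}{p\Delta}=\frac{(1-p)r}{r_{HH}-r_{HL}},
\]
which gives part (2).

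\emph{Lower threshold.} Whenever $k\ge 1$ we have $h(k+1)\le 2hk\le 2p\alpha\Delta$. So if $2p\alpha\Delta\le A$, then $g\ge 0$ for every $h$. Hence I take $\alpha_1:=\alpha_2/2$, which gives part (1).

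Some care is needed with the degenerate parameter cases. If $r=0$ or $p\in\{0,1\}$, then $A=0$ and the strict inequality $\alpha_1<\alpha_2$ fails. In that case I will either note that all three welfare levels are ordered trivially, or take $\alpha_1$ slightly negative. The case $\Delta=0$ gives $k=0$ throughout and is likewise trivial. The thresholds may exceed $1$, which is harmless for the statement as written.

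The main obstacle is the middle regime $(\alpha_1,\alpha_2)$. The algebra above already proves the "either/or" claim there. To justify the phrase "depending on $h$," I would also show that both signs of $g$ actually occur for suitable $h$.
\begin{itemize}
\item For $g<0$: take $h$ just below $p\alpha\Delta$, so that $k=1$ and $h(k+1)=2h\approx 2p\alpha\Delta>A$.
\item For $g>0$: take $h$ small, so that $h(k+1)\le p\alpha\Delta+h<A$ once $h<A-p\alpha\Delta$.
\end{itemize}
Together these make the dependence on the waiting cost explicit.
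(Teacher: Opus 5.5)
Your proposal is correct and follows essentially the same route as the paper: you factor both welfare differences by the sign of $p(1-p)r/(k^{de}+1)-h$, and you use the floor bounds $hk^{de}\le p\alpha(r_{HH}-r_{HL})<h(k^{de}+1)$ to obtain the same thresholds $\alpha_1=\frac{(1-p)r}{2(r_{HH}-r_{HL})}$ and $\alpha_2=\frac{(1-p)r}{r_{HH}-r_{HL}}$. Your handling of the degenerate cases, and your explicit check that both orderings actually occur for suitable $h$ when $\alpha\in(\alpha_1,\alpha_2)$, go slightly beyond the paper, which only asserts the dichotomy in the middle range.
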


The above results are illustrated in Figure \ref{fig:dsw comparison} with $p=0.5$ and $\mathbf{r}=(800, 50, 50, 0)$. When $\alpha$ is low ($\alpha=0.2$), social welfare in the full backlog system is greater than that in the one-sided backlog system, followed by social welfare in the no backlog system. When $\alpha$ is high ($\alpha=0.8$), the reverse is true. When $\alpha$ is in the middle range ($\alpha=0.4$), social welfare in the full backlog system is largest for certain values of $h$, while for other values of $h$, social welfare in the no backlog system is the largest.

\begin{figure}[htpb]
	\centering
	\subfigure[$\alpha=0.2$]{
		\includegraphics[width = 0.3\linewidth]{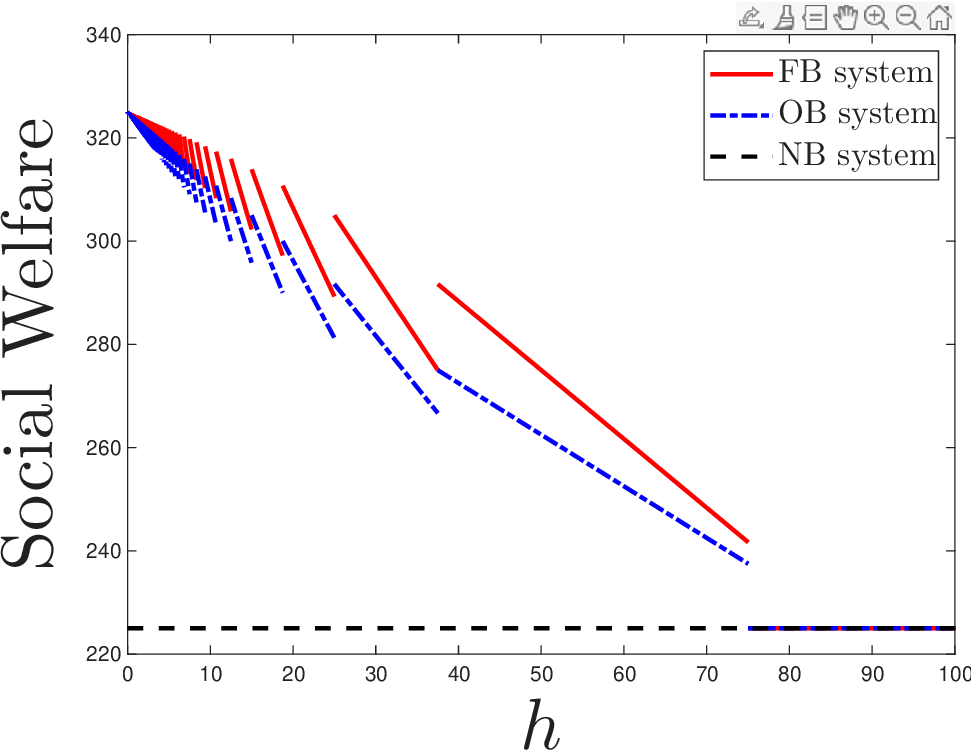}}
	\subfigure[$\alpha=0.4$]{
		\includegraphics[width = 0.3\linewidth]{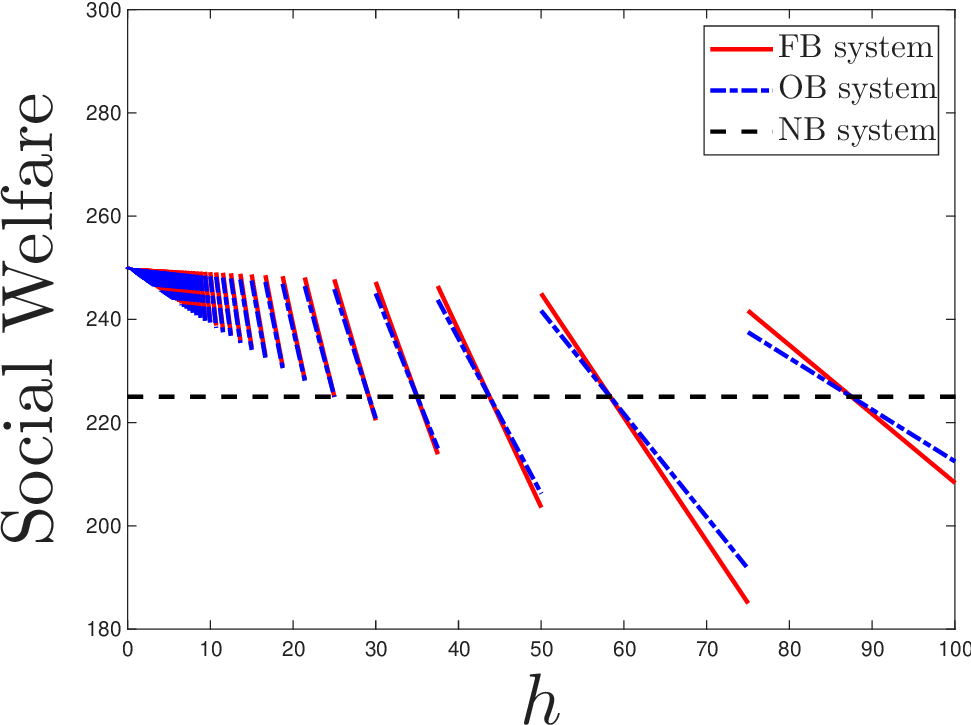}}
        \subfigure[$\alpha=0.8$]{
		\includegraphics[width = 0.3\linewidth]{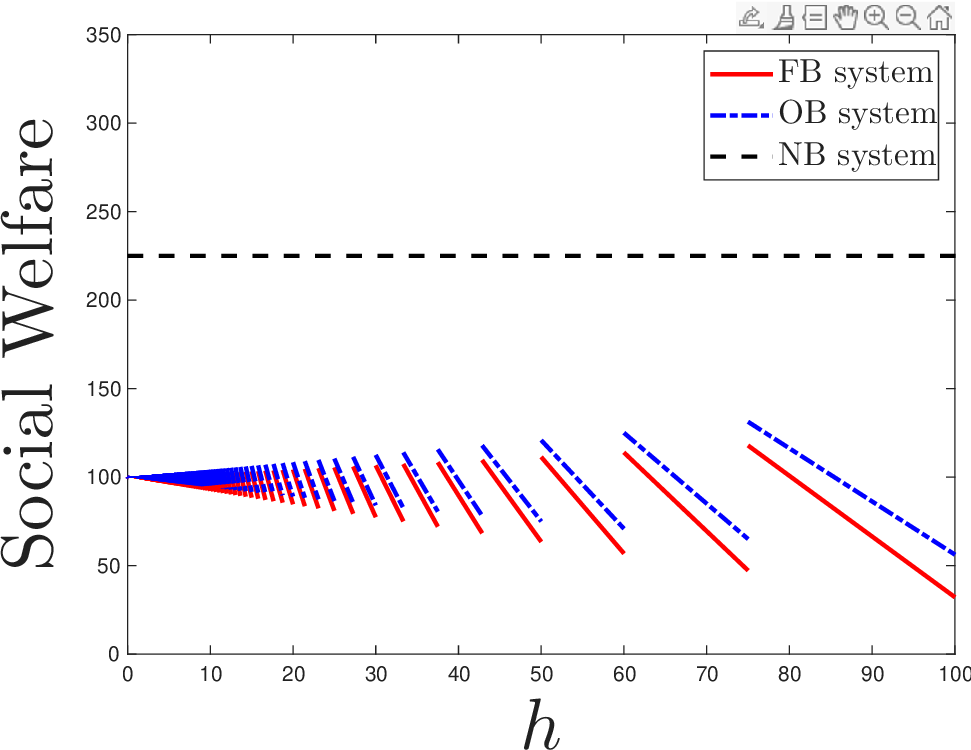}}
	\caption{Comparison of social welfare in the decentralized setting}\label{fig:dsw comparison}
\end{figure}

Proposition \ref{prop:comparison in decentralized setting} implies that social welfare of dynamic matching systems with different patience levels in a decentralized market depends critically on the payoff allocation rule. 
Recall that the equilibrium queue length of the H-type supply queue is $k^{de}=\left\lfloor \frac{q \alpha (r_{HH}-r_{HL})}{h}\right\rfloor$, which increases linearly in 
$\alpha$. Here, $r_{HH}-r_{HL}$ captures the incremental benefit from waiting for a future high-quality assortative match rather than accepting a lower value cross-match in the current period. Thus, $\alpha$ directly scales the private incentive of an H-type supply agent to remain in the system and wait for a better match. 

When $\alpha$ is small enough, H-type supply agents capture only a small share of the match payoff, making them less willing to wait. This leads to a short H-type supply queue in equilibrium. Anticipating a thin future pool of H-type supply, H-type demand agents are also less willing to wait for high-quality matches, if they can. Consequently, both sides behave relatively impatiently, matches form more quickly, and the steady-state population in the system is small. In such a thin market, the matching-quality benefit dominates: additional patience enables valuable opportunities for future assortative matches while the induced increase in waiting costs remains limited. Therefore, social welfare increases as the system moves from a no-backlog system to a one-sided backlog system to a full backlog system. This dynamic characterizes Case (1) of Proposition \ref{prop:comparison in decentralized setting}.

When $\alpha$ is large enough, H-type supply agents internalize a larger share of the surplus and become more willing to wait, which increases the population of H-type supply agents in the system. This thicker pool of patient H-type supply further raises the option value of waiting for H-type demand agents if they can, reinforcing assortative matching behavior. The resulting accumulation of patient H-type agents makes it harder for L-type agents to be matched, leading to substantial congestion and a large steady-state population in the system. In such a thick market, the congestion-cost effect dominates: aggregate waiting costs increase with the population, while the marginal improvement in match quality exhibits diminishing returns. Hence, social welfare declines as patience is expanded. This dynamic characterizes Case (2) of Proposition \ref{prop:comparison in decentralized setting}.

Finally, when $\alpha$ is in a middle range, the system balances between the ``thin" and ``thick" market dynamics described above, and the net welfare impact of patience depends on the unit waiting cost $h$. Within a range of $h$ where the equilibrium matching threshold remains constant, the steady-state population is fixed. As $h$ increases within this range, the total waiting cost increases while match quality remains unchanged; consequently, the net welfare impact of patience deteriorates, potentially shifting from positive to negative. However, as $h$ continues to increase, it triggers a discrete decrease in the matching thresholds. This reduces the steady-state population and makes the system transit to a ``thinner" market state. In such a thinner market, with additional patience, the marginal benefit of assortative matching typically outweighs the congestion cost, making additional patience beneficial again. This dynamic characterizes Case (3) of Proposition \ref{prop:comparison in decentralized setting}.

Proposition  \ref{prop:comparison in decentralized setting} offers a theoretical framework for analyzing the evolution of modern marriage markets, particularly in explaining complex social phenomena such as the rise of ``unmarried women" in China (i.e., ``the proportion of unmarried women between the ages of 25 to 29 in China soared from 9\% in 2000 to 33\% in 2020 and 43\% in 2023, and this trend continues to accelerate.''\footnote{``Why China’s marriage crisis matters''. (2025, Apr 2). The Korea Times. \url{https://www.project-syndicate.org/commentary/china-marriage-crisis-undermines-its-fertility-rate-by-yi-fuxian-2025-03}}). In this context, we can view men as the traditionally more patient ``supply side" and women as the ``demand side." Agent quality (H/L) represents socio-economic status, while the payoff allocation parameter, $\alpha$, can be interpreted as societal bargaining power. A high $\alpha$ reflects a traditional, patriarchal structure in which the match's value is disproportionately credited to the man, while a low $\alpha$ signifies a more progressive society in which female economic and social capital confer greater bargaining power.

 In a traditional societal structure, the marriage market resembled a system with unbalanced patience. Women, facing greater societal and biological pressures, constituted the impatient side, compelled to match quickly. Furthermore, the patriarchal context meant that $\alpha$ was high, favoring the patient (male) side. According to Case (2) of Proposition \ref{prop:comparison in decentralized setting}, in such a high-$\alpha$ regime, waiting would lead to lower social welfare. Therefore, men's strong incentive to wait would have created excessive ``queues" and market gridlock, resulting in detrimental outcomes. However, the market's stability was enforced by women's impatience. By matching quickly, they forced the market to clear, thus preventing the welfare loss that would have occurred had they been patient.

The dynamics shift dramatically in the contemporary era, especially among the highly educated. Increased female economic independence transforms the market into a full backlog system, where both sides are patient and can afford to wait for a desirable partner. This empowerment concurrently shifts bargaining power toward women, reducing $\alpha$. This creates a new market structure. While Case (1) Proposition \ref{prop:comparison in decentralized setting} suggests that a low-$\alpha$ environment benefits from patience, the ``unmarried women" phenomenon persists as a rational equilibrium in the market. This occurs because the low $\alpha$ reduces the net payoff for H-type men to wait for future H-type partners compared to matching immediately with L-type partners. This diminished incentive leads many H-type men to accept immediate cross-matches. Meanwhile, H-type women, now patient enough to select for quality, remain in the queue. 


Consequently, despite the shift toward equality, the decentralized equilibrium results in an ``inefficient" outcome. It fails to produce the high-value assortative matches celebrated as optimal in the marriage market literature (e.g., \citealt{becker1973theory}), and thus generates lower social welfare than a centralized planner could achieve. This observation extends beyond the marriage market: a similar ``efficiency gap" between the decentralized equilibrium and the centralized optimum exists in many modern platforms, from freelance marketplaces like Upwork to ride-hailing services like Uber in California, where drivers act as independent contractors.

This gap also suggests a powerful role for the platform as a central coordinator. If the platform can design mechanisms to properly allocate the surplus welfare gained from achieving the centralized optimum, for example, through targeted incentives, subsidies, or dynamic payoff structures, it can realign agent incentives with the system-wide optimum. Such an intervention could create a Pareto improvement in which the system as a whole is more efficient, and all agents are better off. Returning to the example of the marriage market, a government agency designed around these principles could respond to the ``unmarried women" phenomenon by creating specific incentives for H-type men to match with their H-type women counterparts, thereby correcting market failure and enabling the high-value matches that are currently being missed.



\section{Conclusion} \label{sec:conclusion}

In this paper, we study the dynamics of two-sided matching markets under the ubiquitous yet underexplored condition of unbalanced patience. By modeling a patient-supply side and an impatient-demand side, we derive strategic implications for market design in a two-sided dynamic matching platform and offer a new lens for interpreting complex market outcomes.

Our analysis of the system's structure highlights a fundamental distinction between centralized and decentralized markets. In the centralized benchmark, the optimal policy utilizes a threshold-based rule to ration high-quality supply, preserving it for future high-quality demand. In the decentralized setting, the welfare-maximizing Markov perfect equilibrium is characterized by thresholds that emerge from agents' forward-looking, self-interested incentives. More importantly, we demonstrate that, unlike in the full-backlog system, where inefficiency is inevitable, the decentralized one-sided backlog system can be perfectly coordinated with the centralized optimum. By appropriately tuning the matching payoff allocation, a platform can regulate the supply-side incentive to wait, thereby aligning private equilibrium queues with the social optimum.

We also investigate the value of patience across different system designs. In centralized settings, patience always improves welfare, though with diminishing marginal returns, suggesting that inducing patience on just one side captures most of the potential efficiency gains. In decentralized settings, however, the value of patience is contingent on market conditions determined by the matching payoff allocation proportion $\alpha$. A low $\alpha$ creates a ``thin" market where patience is beneficial, enabling high-quality matches with minimal congestion. Conversely, a high $\alpha$ induces a ``thick" market where the congestion costs of excessive queuing outweigh the benefits of assortative matching.

We note that these implications are robust and continue to hold when the matching payoff follows a horizontal structure: $r_{HH}\geq r_{LL}\geq \max\{r_{HL}, r_{LH}\}$. In this case, both H-type and L-type agents on one side would prefer to match with agents similar to themselves on the other side.

\bibliographystyle{ormsv080}
\bibliography{dm}
\newpage

\setcounter{definition}{0}
\setcounter{lemma}{0}
\setcounter{corollary}{0}
\setcounter{equation}{0}
\setcounter{theorem}{0}
\renewcommand\thedefinition{A.\arabic{definition}}
\renewcommand\thelemma{A.\arabic{lemma}}
\renewcommand\thecorollary{A.\arabic{corollary}}
\renewcommand\theequation{A.\arabic{equation}}
\renewcommand\thetheorem{B.\arabic{theorem}}

\pagenumbering{arabic}
\setcounter{page}{1}

\begin{center}
{\Large Online Appendix to \\
``Dynamic Matching Under Unbalanced Impatience''}
\end{center}
\numberwithin{lemma}{section}
\numberwithin{definition}{section}
\numberwithin{equation}{section}



\subsection*{A. Proof of Lemma \ref{lem: centralized threshold type}}

{
Suppose that there are $N$ supply agents in a period $t$.
We will show that when $N$ is sufficiently large, any policy $\pi$ that does not match any of the supply agents to the arrived demand agent in period $t$ is suboptimal.

Since there are $N$ supply agents in total, there exists a type $i\in\{H,L\}$ such that at least $N/2$ of the supply agents are of type $i$. 
Under policy $\pi$, let us consider a supply agent such that it is either never matched or matched after all other supply agents who are available in period $t$. We will refer to this supply agent as $\omega$.

To show that $\pi$ is suboptimal, we will modify it to obtain a policy with better performance.
Let the matching policy $\pi'$ be identical to $\pi$ except that $\pi'$ matches supply agent $\omega$ in period $t$.
Since there is exactly one demand arrival in each period and $\omega$ is matched only after all other supply agents under policy $\pi$, this will save a total waiting cost of at least $h \cdot N/2$ for holding $\omega$. 
On the other hand, compared with $\pi'$, policy $\pi$ may lead to an increase in the matching payoff by at most $r_{ij}-r_{iH}\le r_{HL}-r_{HH}$, where $j$ is the type of the demand agent who arrives in period $t$ and the inequality is due to Assumptions \ref{as:hp} and \ref{as:sm}.
It is easy to see that for sufficiently large $N$, the total waiting cost saved by policy $\pi'$, which is at least $h \cdot N/2$, outweighs the potential increase in the matching payoff.
Thus, $\pi$ is subpotimal.
%
}
$\Box$

\subsection*{B. Proof of Theorem \ref{thm:optimal mechanism of centralized system}}
In Lemma \ref{lem: centralized threshold type}, we demonstrated that it is always optimal to match the arriving demand agent with a supply agent when the total number of supply agents is sufficiently large. Therefore, we consider a bounded state space where the number of supply agents does not exceed some $M>\frac{2(r_{HH}-r_{HL})}{h}-1$. We will characterize the optimal policies for any finite $M$ and show that they have the same structure. 

For any finite $M$, we formulate this problem as the following Markov Decision Process (MDP): 
\begin{eqnarray*}
	{\rm MDP}=\{T, \mathbf{S}, \mathbf{s}^{0}, (A_\mathbf{s})_{\mathbf{s}\in \mathbf{S}}, R(\mathbf{s}, a), P(\mathbf{s}'|\mathbf{s}, a)\}
\end{eqnarray*}
where
\begin{enumerate}
	\item $T=\{1, 2, ...\}$ is the set of decision periods. 
	\item $\mathbf{S}=\mathbf{S}_{H}\cup\mathbf{S}_{L}$ is the set of possible states: $$\mathbf{S}_{H}=\{(x_{H}, x_{L}, y_{H}, y_{L})| 0\le x_{H}+x_{L} \le M, x_{H}, x_{L} \in \mathbb{Z}\cup \{0\}, y_{H}=1, y_{L}=0\},$$ $$\mathbf{S}_{L}=\{(x_{H}, x_{L}, y_{H}, y_{L})| 0\le x_{H}+x_{L} \le M, x_{H}, x_{L} \in \mathbb{Z}\cup \{0\}, y_{H}=0, y_{L}=1\},$$
	where $\mathbf{S}_{H}$ includes all possible states when an H-type demand agent arrives, and $\mathbf{S}_{L}$ includes all possible states when an L-type demand agent arrives. Following the discussion above, we assume that $M$ is sufficiently large (i.e., $M>\frac{2(r_{HH}-r_{HL})}{h}-1$) such that the possible states are indeed within in $\mathbf{S}$. When the total amount of supply agents reaches $M$, it is always optimal to match every arriving demand agent with some type of supply agent. As a consequence, the number of supply agents will not increase further. We use $\mathbf{s}$ to represent the state vector of a specific period. 
	\item $\mathbf{s}^{0}=(0, 0, 0, 0)$, i.e., we assume that there is no agent before the period $1$. Agents arrive in the system, and matching decisions are made starting in period $1$.  
	\item $A_{\mathbf{S}}=A_{\mathbf{S}_{H}}\cup A_{\mathbf{S}_{L}}$ is the set of possible actions: 
	$$A_{\mathbf{S}_{H}}=\{u_{\phi}, u_{HH}, u_{LH}\},~~A_{\mathbf{S}_{L}}=\{u_{\phi}, u_{HL}, u_{LL}\},$$ 
	where $u_{\phi}$ denotes the action of no matching, and $u_{ij}$ denotes the action of using a $i$-type supply agent to satisfy a $j$-type demand agent. In the state of $\mathbf{S}_{H}$, when an H-type demand agent arrives, the action set, $A_{\mathbf{S}_{H}}$, includes no matching, $u_{\phi}$, to use an H-type supply agent to match with the H-type demand agent, $u_{HH}$, and to use L-type supply agent to match with the H-type demand agent, $u_{LH}$. Similarly, when an L-type demand agent arrives, the action set, $A_{\mathbf{S}_{L}}$, includes no matching, $u_{\phi}$, to use an H-type supply agent to match with the L-type demand agent, $u_{HL}$, and to use a L-type supply agent to match with the L-type demand agent, $u_{LL}$. Specifically, we let $u_{\phi}=(0,0,0,0)$, $u_{HH}=(1, 0, 0, 0)$, $	u_{HL}=(0, 1, 0, 0)$, $u_{LH}=(0, 0, 1, 0)$, and $u_{LL}=(0, 0, 0, 1)$.
	\item $R(\mathbf{s}, a)$ is the immediate payoff  received in state $\mathbf{s}$ with the action $a\in \mathbf{A}_{\mathbf{S}}$. Specifically,
	\begin{eqnarray*}
		R(\mathbf{s}, a)=\left\{
		\begin{array}{ll}
			-h(x_{H}+x_{L}), & \mbox{ if $\mathbf{s}=(x_{H}, x_{L}, 1, 0)$ and $a=u_{\phi}$,} \\
			r_{HH}-h(x_{H}-1+x_{L}), & \mbox{ if $\mathbf{s}=(x_{H}, x_{L}, 1, 0)$ and $a=u_{HH}$,} \\  
			r_{LH}-h(x_{H}+x_{L}-1), &  \mbox{ if $\mathbf{s}=(x_{H}, x_{L}, 1, 0)$ and $a=u_{LH}$,} \\
			-h(x_{H}+x_{L}), & \mbox{ if $\mathbf{s}=(x_{H}, x_{L}, 0, 1)$ and $a=u_{\phi}$,} \\  
			r_{HL}-h(x_{H}-1+x_{L}), &  \mbox{ if $\mathbf{s}=(x_{H}, x_{L}, 0, 1)$ and $a=u_{HL}$,}   \\
			r_{LL}-h(x_{H}+x_{L}-1), & \mbox{ if $\mathbf{s}=(x_{H}, x_{L}, 0, 1)$ and $a=u_{LL}$.}  
		\end{array}
		\right.
	\end{eqnarray*} 
	\item $P(\mathbf{s}'|\mathbf{s}, a)$ is the transition probability function  that action $a\in A_{\mathbf{s}}$ in state $\mathbf{s}\in \mathbf{S}$ at period $t$ will lead to state $\mathbf{s}'\in \mathbf{S}$ at period $t+1$. Specifically,
	$$P(\mathbf{s}'\in \mathbf{S}_{H}|\mathbf{s}\in \mathbf{S}_{H}, a\in A_{\mathbf{S}_{H}})=\left\{
	\begin{array}{ll}
	pq,  & \mbox{ if $\mathbf{s}'=(x_{H}+1, x_{L}, 1, 0)$, $\mathbf{s}=(x_{H}, x_{L}, 1, 0)$, and $a=u_{\phi}$,}  \\
	(1-p)q, & \mbox{ if $\mathbf{s}'=(x_{H}, x_{L}+1, 1, 0)$, $\mathbf{s}=(x_{H}, x_{L}, 1, 0)$, and $a=u_{\phi}$,}  \\
	pq, & \mbox{ if $\mathbf{s}'=(x_{H}, x_{L}, 1, 0)$, $\mathbf{s}=(x_{H}, x_{L}, 1, 0)$, and $a=u_{HH}$,}\\
	(1-p)q, & \mbox{ if $\mathbf{s}'=(x_{H}-1, x_{L}+1, 1, 0)$, $\mathbf{s}=(x_{H}, x_{L}, 1, 0)$, and $a=u_{HH}$,}  \\
	pq, & \mbox{ if $\mathbf{s}'=(x_{H}+1, x_{L}-1, 1, 0)$, $\mathbf{s}=(x_{H}, x_{L}, 1, 0)$, and $a=u_{LH}$,}   \\
	(1-p)q, &   \mbox{ if $\mathbf{s}'=(x_{H}, x_{L}, 1, 0)$, $\mathbf{s}=(x_{H}, x_{L}, 1, 0)$, and $a=u_{LH}$.}
	\end{array}
	\right.$$
	The formulas of
	$P(\mathbf{s}'\in \mathbf{S}_{L}|\mathbf{s}\in \mathbf{S}_{H}, a\in A_{\mathbf{S}_{H}})$,  $P(\mathbf{s}'\in \mathbf{S}_{H}|\mathbf{s}\in \mathbf{S}_{L}, a\in A_{\mathbf{S}_{L}})$, and $P(\mathbf{s}'\in \mathbf{S}_{L}|\mathbf{s}\in \mathbf{S}_{L}, a\in A_{\mathbf{S}_{L}})$ can be written similarly. 
\end{enumerate}

Under a given policy $\pi$, the long-run average profit is given by:
\begin{eqnarray}\label{policy limit}
J(\pi)=\lim_{T\rightarrow \infty}\frac{1}{T}E\left[\sum^{T}_{t=1}R(\mathbf{s}^t, \pi(\mathbf{s}^t))\right].
\end{eqnarray}
Our objective is to find a policy $\pi^*$ such that the long-run average profit is maximized.

It is easy to verify that the MDP described above is a multichain model. Since the state space $\mathbf{S}$ and the action space $(A_\mathbf{s})_{\mathbf{s}\in \mathbf{S}}$ are both finite, by Theorem 9.1.4 of \citet{puterman2014markov}, the optimal policy $\pi^*$ exists. We characterize the structure of the optimal policy below. 

Consider the following recursive equations 
\begin{eqnarray}\label{appendix:dp.eq0}
H_{t+1}(\mathbf{s})=\max_{a\in A_{\mathbf{S}}}\left\{R(\mathbf{s}, a)+\sum_{\mathbf{s}'\in \mathbf{S}}P(\mathbf{s}'|\mathbf{s}, a)H_t(\mathbf{s}')\right\}, \forall \mathbf{s}\in\mathbf{S}, t=0,1,2,\dots,
\end{eqnarray}
with $H_0(\mathbf{s})=0$ for all $s\in\mathbf{S}$. Let 
$$d_t(\mathbf{s})\in \arg\max_{a\in A_{\mathbf{S}}}\left\{R(\mathbf{s}, a)+\sum_{\mathbf{s}'\in \mathbf{S}}P(\mathbf{s}'|\mathbf{s}, a)H_t(\mathbf{s}')\right\}, \forall \mathbf{s}\in\mathbf{S}, t=0,1,2,\dots.$$
We can verify that for this MDP, every stationary deterministic policy has an aperiodic transition matrix. Therefore, by Theorem 9.4.5 of  \citet{puterman2014markov}, $\{d_t(\mathbf{s})\}_{\mathbf{s}\in \mathbf{S}}$ converges to the optimal policy $\pi^*$ as $t$ goes to infinity. In what follows, we first show that $H_t$ satisfies some functional properties. Then, we use those properties to characterize the structure of $d_t$ and show that $d_t$ has the same structure for all $t=0,1,2,\dots$. Therefore, the optimal policy $\pi^*$ has the same structure.

Note that the recursive equations \eqref{appendix:dp.eq0} can be rewritten as follows: for $t=1,2,\dots,$

\noindent $\bullet$ Case 1: If $x_{H}=0$,
\begin{equation}\label{appendix:dp.eq2}
H_t(0, x_{L}, 1, 0)
= -h x_{L}+\max \{r_{LH}+h+V_t (0, x_{L}-1), V_t (0, x_{L})\}.
\end{equation}
\begin{equation}\label{appendix:dp.eq3}
H_t(0, x_{L}, 0, 1)
=-h x_{L}+\max \{r_{LL}+h+V_t (0, x_{L}-1), V_t (0, x_{L})\}.
\end{equation}

\noindent $\bullet$ Case 2: If $x_{L}=0$,
\begin{equation}\label{appendix:dp.eq4}
H_t(x_{H}, 0, 1, 0)
= -h x_{H}+\max \{ r_{HH}+h+V_t (x_{H}-1, 0),V_t (x_{H}, 0)\}.
\end{equation}
\begin{equation}\label{appendix:dp.eq5}
H_t(x_{H}, 0, 0, 1)
=-h x_{H}+\max \{ r_{HL}+h+V_t (x_{H}-1, 0), V_t (x_{H}, 0)\}.
\end{equation}

\noindent $\bullet$ Case 3: If $x_{H}>0$ and $x_{L}>0$,
\begin{eqnarray}\label{appendix:dp.eq6}
H_t(x_{H}, x_{L}, 1, 0)
&=&-h(x_{H}+x_{L})+\max \{ r_{HH}+h+V_t (x_{H}-1, x_{L}),\nonumber \\
&&r_{LH}+h+V_t (x_{H}, x_{L}-1), V_t (x_{H}, x_{L})\}.
\end{eqnarray}
\begin{eqnarray}\label{appendix:dp.eq7}
H_t(x_{H}, x_{L}, 0, 1)
&=&-h(x_{H}+x_{L})+\max \{ r_{HL}+h+V_t (x_{H}-1, x_{L}),\nonumber \\
&&r_{LL}+h+V_t (x_{H}, x_{L}-1), V_t(x_{H}, x_{L})\}.
\end{eqnarray}
where 
\begin{eqnarray}\label{appendix:dp.eq1}
V_t(x_{H}, x_{L})  &=& pqH_{t-1}(x_{H}+1, x_{L}, 1, 0)+(1-p)q H_{t-1}(x_{H}, x_{L}+1, 1, 0) \nonumber \\
&&+p(1-q)H_{t-1}(x_{H}+1, x_{L}, 0, 1)+(1-p)(1-q)H_{t-1}(x_{H}, x_{L}+1, 0, 1).
\end{eqnarray}

For any functions $H$ defined on $\mathbf{S}$ and $V$ defined on $\{(x_{H}, x_{L})|0\leq x_{H}+x_{L}\leq M\}$, let
\begin{eqnarray*}
	\Delta_{x_{H}} H(x_{H}, x_{L}, y_{H}, y_{L})&=& H(x_{H}+1, x_{L},y_{H}, y_{L})-H(x_{H}, x_{L},y_{H}, y_{L})\\
	\Delta_{x_{L}} H(x_{H}, x_{L}, y_{H}, y_{L})&=& H(x_{H}, x_{L}+1, y_{H}, y_{L})-H(x_{H}, x_{L}, y_{H}, y_{L})\\
	\Delta_{x_{H}} V(x_{H}, x_{L})&=& V(x_{H}+1, x_{L})-V(x_{H}, x_{L})\\
	\Delta_{x_{L}} V(x_{H}, x_{L})&=& V(x_{H}, x_{L}+1)-V(x_{H}, x_{L}).
\end{eqnarray*}

Next, we use induction to show that for $t=1,2,\dots$, $V_t(x_{H}, x_{L})$ satisfies 
\begin{description}
	\item[(i)] $\Delta_{x_{H}}V_t(x_{H}, x_{L})\le r_{HH}+h$;
	\item[(ii)] $\Delta_{x_{L}}V_t(x_{H}, x_{L})\le r_{LL}+h$;
	\item[(iii)] $r_{HL}-r_{LL}\leq  \Delta_{x_{H}}V_t(x_{H}, x_{L})-\Delta_{x_{L}}V_t(x_{H}, x_{L})\le r_{HH}-r_{LH}$;
	\item[(iv)] $V_t(x_{H}+2, x_{L})-2V_t(x_{H}+1, x_{L})+V_t(x_{H}, x_{L})\leq 0$. 
\end{description}

Clearly, $V_t(x_{H}, x_{L})=0$ for all $x_{H}, x_{L}$ and thus satisfies properties (\textit{i})-(\textit{iv}). Now suppose  $V_t(x_{H}, x_{L})$ satisfies properties (\textit{i})-(\textit{iv}). We need to show that $V_{t+1}(x_{H}, x_{L})$ also satisfies properties (\textit{i})-(\textit{iv}).  To show the preservation, we first prove in {\bf Step 1} that $H_t(x_{H},x_{L}, y_{H},y_{L})$ satisfies 
\begin{description}
	\item[(Hi)] $\Delta_{x_{H}}H_t(x_{H},x_{L}, y_{H},y_{L})\le r_{HH}+h$;
	\item[(Hii)] $\Delta_{x_{L}}H_t(x_{H},x_{L}, y_{H},y_{L})\le r_{LL}+h$;
	\item[(Hiii)] $ r_{HL}-r_{LL}\leq  \Delta_{x_{H}}H_t(x_{H},x_{L}, y_{H},y_{L})-\Delta_{x_{L}}H_t(x_{H},x_{L}, y_{H},y_{L})\le r_{HH}-r_{LH}$;
	\item[(Hiv)] $H_t(x_{H}+2, x_{L}, y_{H},y_{L})-2H_t(x_{H}+1, x_{L}, y_{H},y_{L})+H_t(x_{H}, x_{L}, y_{H},y_{L})\leq 0$. 
\end{description}
Then, we show in {\bf Step 2} that $V_{t+1}(x_{H}, x_{L})$ also satisfies properties (\textit{i})-(\textit{iv}).

{\bf Step 1:} Since  $V_t(x_{H}, x_{L})$ satisfies properties (\textit{i})-(\textit{iv}), we can rewrite equations (\ref{appendix:dp.eq2})-(\ref{appendix:dp.eq7}) as follows:
\noindent $\bullet$ Case 1: If $x_{H}=0$,
\begin{align*}
H_t(0, x_{L}, 1, 0)
&= -h x_{L}+r_{LH}+h+V_t (0, x_{L}-1),\\
H_t(0, x_{L}, 0, 1)
&=-h x_{L}+r_{LL}+h+V_t (0, x_{L}-1).
\end{align*}

\noindent $\bullet$ Case 2: If $x_{L}=0$,
\begin{align*}
H_t(x_{H}, 0, 1, 0)
&= -h x_{H}+r_{HH}+h+V_t (x_{H}-1, 0),\\
H_t(x_{H}, 0, 0, 1)
&=-h x_{H}+\max \{ r_{HL}+h+V_t (x_{H}-1, 0), V_t (x_{H}, 0)\}.
\end{align*}

\noindent $\bullet$ Case 3: If $x_{H}>0$ and $x_{L}>0$,
\begin{align*}
H_t(x_{H}, x_{L}, 1, 0)
&=-h(x_{H}+x_{L})+ r_{HH}+h+V_t (x_{H}-1, x_{L}),\\
H_t(x_{H}, x_{L}, 0, 1)
&=-h(x_{H}+x_{L})+r_{LL}+h+V_t (x_{H}, x_{L}-1).
\end{align*}

\noindent\textbf{Proof of (\textit{Hi}):} If $x_{H}>0$, we have 
\begin{align*}
\Delta_{x_{H}}H_t(x_{H},x_{L}, 1,0)&=H_t(x_{H}+1, x_{L}, 1, 0)-H_t(x_{H}, x_{L}, 1, 0)\\
&=[-h(x_{H}+1+x_{L})+ r_{HH}+h+V_t (x_{H}, x_{L})]-[-h(x_{H}+x_{L})+ r_{HH}+h+V_t (x_{H}-1, x_{L})]\\
&=-h+\Delta_{x_{H}}V_t(x_{H}-1,x_{L})\\
&\leq -h+(r_{HH}+h)\\
&\leq r_{HH}+h.
\end{align*}
If $x_{H}=0$, we have 
\begin{align*}
\Delta_{x_{H}}H_t(0,x_{L}, 1,0)&=H_t(1, x_{L}, 1, 0)-H_t(0, x_{L}, 1, 0)\\
&=[-h(1+x_{L})+ r_{HH}+h+V_t (0, x_{L})]-[-h x_{L}+r_{LH}+h+V_t (0, x_{L}-1)]\\
&=-h+r_{HH}-r_{LH}+\Delta_{x_{L}}V_t(0,x_{L}-1)\\
&\leq -h+r_{HH}-r_{LH}+r_{LL}+h\\
&\leq r_{HH}+h.
\end{align*}
If $x_{L}>0$, we have 
\begin{align*}
\Delta_{x_{H}}H_t(x_{H},x_{L}, 0,1)&=H_t(x_{H}+1, x_{L}, 0, 1)-H_t(x_{H}, x_{L}, 0, 1)\\
&=[-h(x_{H}+1+x_{L})+ r_{LL}+h+V_t (x_{H}+1, x_{L}-1)]\\
&-[-h(x_{H}+x_{L})+ r_{LL}+h+V_t (x_{H}, x_{L}-1)]\\
&=-h+\Delta_{x_{H}}V_t(x_{H},x_{L}-1)\\
&\leq -h+(r_{HH}+h)\\
&\leq r_{HH}+h.
\end{align*}
If $x_{L}=0$, we have 
\begin{align*}
\Delta_{x_{H}}H_t(x_{H},0, 0,1)&=H_t(x_{H}+1, 0, 0, 1)-H_t(x_{H}, 0, 0, 1)\\
&=[-h(x_{H}+1)+\max \{ r_{HL}+h+V_t (x_{H}, 0), V_t (x_{H}+1, 0)\}]\\
&-[-h x_{H}+\max \{ r_{HL}+h+V_t (x_{H}-1, 0), V_t (x_{H}, 0)\}].
\end{align*}
When $\max \{ r_{HL}+h+V_t (x_{H}, 0), V_t (x_{H}+1, 0)\}=r_{HL}+h+V_t (x_{H}, 0)$,
\begin{align*}
\Delta_{x_{H}}H_t(x_{H},0, 0,1)&\leq [-h(x_{H}+1)+r_{HL}+h+V_t (x_{H}, 0)]-[-h x_{H}+ r_{HL}+h+V_t (x_{H}-1, 0)]\\
&=-h+\Delta_{x_{H}}V_t(x_{H}-1,0)\\
&\leq -h+(r_{HH}+h)\\
&\leq r_{HH}+h.
\end{align*}
When $\max \{ r_{HL}+h+V_t (x_{H}, 0), V_t (x_{H}+1, 0)\}=V_t (x_{H}+1, 0)$,
\begin{align*}
\Delta_{x_{H}}H_t(x_{H},0, 0,1)&\leq [-h(x_{H}+1)+V_t (x_{H}+1, 0)]-[-h x_{H}+ V_t (x_{H}, 0)]\\
&=-h+\Delta_{x_{H}}V_t(x_{H},0)\\
&\leq -h+(r_{HH}+h)\\
&\leq r_{HH}+h.
\end{align*}

\noindent\textbf{Proof of (\textit{Hii}):} If $x_{H}>0$, we have 
\begin{align*}
\Delta_{x_{L}}H_t(x_{H},x_{L}, 1,0)&=H_t(x_{H}, x_{L}+1, 1, 0)-H_t(x_{H}, x_{L}, 1, 0)\\
&=[-h(x_{H}+x_{L}+1)+ r_{HH}+h+V_t (x_{H}-1, x_{L}+1)]\\
&-[-h(x_{H}+x_{L})+ r_{HH}+h+V_t (x_{H}-1, x_{L})]\\
&=-h+\Delta_{x_{L}}V_t(x_{H}-1,x_{L})\\
&\leq -h+(r_{LL}+h)\\
&\leq r_{LL}+h.
\end{align*}
If $x_{H}=0$, we have 
\begin{align*}
\Delta_{x_{L}}H_t(0,x_{L}, 1,0)&=H_t(0, x_{L}+1, 1, 0)-H_t(0, x_{L}, 1, 0)\\
&=[-h(x_{L}+1)+ r_{LH}+h+V_t (0, x_{L})]-[-h x_{L}+r_{LH}+h+V_t (0, x_{L}-1)]\\
&=-h+\Delta_{x_{L}}V_t(0,x_{L}-1)\\
&\leq -h+r_{LL}+h\\
&\leq r_{LL}+h.
\end{align*}
If $x_{L}>0$, we have 
\begin{align*}
\Delta_{x_{L}}H_t(x_{H},x_{L}, 0,1)&=H_t(x_{H}, x_{L}+1, 0, 1)-H_t(x_{H}, x_{L}, 0, 1)\\
&=[-h(x_{H}+x_{L}+1)+ r_{LL}+h+V_t (x_{H}, x_{L})]-[-h(x_{H}+x_{L})+ r_{LL}+h+V_t (x_{H}, x_{L}-1)]\\
&=-h+\Delta_{x_{L}}V_t(x_{H},x_{L}-1)\\
&\leq -h+(r_{LL}+h)\\
&\leq r_{LL}+h.
\end{align*}
If $x_{L}=0$, we have 
\begin{align*}
\Delta_{x_{H}}H_t(x_{H},0, 0,1)&=H_t(x_{H}, 1, 0, 1)-H_t(x_{H}, 0, 0, 1)\\
&=[-h(x_{H}+1)+r_{LL}+h+V_t (x_{H}, 0)]-[-h x_{H}+\max \{ r_{HL}+h+V_t (x_{H}-1, 0), V_t (x_{H}, 0)\}]\\
&\leq [-h(x_{H}+1)+r_{LL}+h+V_t (x_{H}, 0)]-[-h x_{H}+ V_t (x_{H}, 0)]\\
&\leq r_{LL}+h.
\end{align*}

\noindent\textbf{Proof of (\textit{Hiii}):} If $x_{H}>0$, we have 
\begin{align*}
\Delta_{x_{H}}H_t(x_{H},x_{L}, &1, 0)-\Delta_{x_{L}}H_t(x_{H},x_{L},1, 0) \\
&=H_t(x_{H}+1,x_{L}, 1,0)-H_t(x_{H},x_{L}+1,1,0)\\
&=[-h(x_{H}+1+x_{L})+ r_{HH}+h+V_t (x_{H}, x_{L})]-[-h(x_{H}+x_{L}+1)+ r_{HH}+h+V_t (x_{H}-1, x_{L}+1)]\\
&=V_t(x_{H},x_{L})-V_t(x_{H}-1,x_{L}+1)\\
&=\Delta_{x_{H}}V_t(x_{H}-1,x_{L})-\Delta_{x_{L}}V_t(x_{H}-1,x_{L})\in [r_{HL}-r_{LL}, r_{HH}-r_{LH}].
\end{align*}
If $x_{H}=0$, we have 
\begin{align*}
\Delta_{x_{H}}H_t(0,x_{L}, &1, 0)-\Delta_{x_{L}}H_t(0,x_{L},1, 0) \\
&=H_t(1,x_{L}, 1,0)-H_t(0,x_{L}+1,1,0)\\
&=[-h(1+x_{L})+ r_{HH}+h+V_t (0, x_{L})]-[-h(x_{L}+1)+ r_{LH}+h+V_t (0, x_{L})]\\
&=r_{HH}-r_{LH}\in [r_{HL}-r_{LL}, r_{HH}-r_{LH}].
\end{align*}
If $x_{L}>0$, we have
\begin{align*}
\Delta_{x_{H}}H_t(x_{H},x_{L}, &0,1)-\Delta_{x_{L}}H_t(x_{H},x_{L}, 0,1) \\
&=H_t(x_{H}+1,x_{L}, 0,1)-H_t(x_{H},x_{L}+1, 0,1)\\
&=[-h(x_{H}+1+x_{L})+ r_{LL}+h+V_t (x_{H}+1, x_{L}-1)]-[-h(x_{H}+x_{L}+1)+ r_{LL}+h+V_t (x_{H}, x_{L})]\\
&=V_t(x_{H}+1,x_{L}-1)-V_t(x_{H},x_{L})\\
&=\Delta_{x_{H}}V_t(x_{H},x_{L}-1)-\Delta_{x_{L}}V_t(x_{H},x_{L}-1)\in [r_{HL}-r_{LL}, r_{HH}-r_{LH}].
\end{align*}
If $x_{L}=0$, we have
\begin{align*}
\Delta_{x_{H}}H_t(x_{H},0, &0,1)-\Delta_{x_{L}}H_t(x_{H},0, 0,1) \\
&=H_t(x_{H}+1,0, 0,1)-H_t(x_{H},1, 0,1)\\
&=[-h(x_{H}+1)+\max \{ r_{HL}+h+V_t (x_{H}, 0), V_t (x_{H}+1, 0)\}]-[-h(x_{H}+1)+ r_{LL}+h+V_t (x_{H}, 0)].
\end{align*}
When $\max \{ r_{HL}+h+V_t (x_{H}, 0), V_t (x_{H}+1, 0)\}=r_{HL}+h+V_t (x_{H}, 0)$,
\begin{align*}
\Delta_{x_{H}}H_t(x_{H},0, &0,1)-\Delta_{x_{L}}H_t(x_{H},0, 0,1) \\
&= [-h(x_{H}+1)+r_{HL}+h+V_t (x_{H}, 0)]-[-h(x_{H}+1)+ r_{LL}+h+V_t (x_{H}, 0)]\\
&=r_{HL}-r_{LL}\in [r_{HL}-r_{LL}, r_{HH}-r_{LH}].
\end{align*}
When $\max \{ r_{HL}+h+V_t (x_{H}, 0), V_t (x_{H}+1, 0)\}=V_t (x_{H}+1, 0)$,
\begin{align*}
\Delta_{x_{H}}H_t(x_{H},0, &0,1)-\Delta_{x_{L}}H_t(x_{H},0, 0,1) \\
&= [-h(x_{H}+1)+V_t (x_{H}+1, 0)]-[-h(x_{H}+1)+ r_{LL}+h+V_t (x_{H}, 0)]\\
&\geq[-h(x_{H}+1)+r_{HL}+h+V_t (x_{H}, 0)]-[-h(x_{H}+1)+ r_{LL}+h+V_t (x_{H}, 0)]\\
&=r_{HL}-r_{LL},
\end{align*}
and 
\begin{align*}
\Delta_{x_{H}}H_t(x_{H},0, &0,1)-\Delta_{x_{L}}H_t(x_{H},0, 0,1) \\
&= [-h(x_{H}+1)+V_t (x_{H}+1, 0)]-[-h(x_{H}+1)+ r_{LL}+h+V_t (x_{H}, 0)]\\
&\leq [-h(x_{H}+1)+V_t (x_{H}+1, 0)]-[-h(x_{H}+1)+V_t (x_{H},1)]\\
&= V_t (x_{H}+1, 0)-+V_t (x_{H},1)\\
&=\Delta_{x_{H}}V_t(x_{H},0)-\Delta_{x_{L}}V_t(x_{H},0)\leq r_{HH}-r_{LH}.
\end{align*}

\noindent\textbf{Proof of (\textit{Hiv}):}
If $x_{H}>0$, we have  
\begin{align*}
H_t(&x_{H}+2,x_{L}, 1, 0)-2H_t(x_{H}+1,x_{L},1, 0)+H_t(x_{H},x_{L},1, 0)\\
&=[-h(x_{H}+2+x_{L})+ r_{HH}+h+V_t (x_{H}+1, x_{L})]-2[-h(x_{H}+1+x_{L})+ r_{HH}+h+V_t (x_{H}, x_{L})]\\
&\;\;\;\;\;\;\;\;\;\;\;\;\;\;\;\;\;\;\;\;\;\;\;\;\;\;\;\;\;\;\;\;\;\;\;\;\;\;\;\;\;\;\;\;\;\;\;\;\;\;\;\;\;\;\;\;\;\;\;\;\;\;\;\;\;\;\;\;\;\;\;\;\;\;\;\;\;\;\;\;\;\;\;\;\;\;\;\;\;\;\;\;\;\;\;\;\;\;\;\;\;\;+[-h(x_{H}+x_{L})+ r_{HH}+h+V_t (x_{H}-1, x_{L})]\\
&=V_t(x_{H}+1,x_{L})-2V_t(x_{H},x_{L})+V_t(x_{H}-1,x_{L})\leq 0.
\end{align*}
If $x_{H}=0$, we have  
\begin{align*}
H_t(&2,x_{L}, 1, 0)-2H_t(1,x_{L},1, 0)+H_t(0,x_{L},1, 0)\\
&=[-h(2+x_{L})+ r_{HH}+h+V_t (1, x_{L})]-2[-h(1+x_{L})+ r_{HH}+h+V_t (0, x_{L})]\\
&+[-hx_{L}+ r_{LH}+h+V_t (0, x_{L}-1)]\\
&=r_{LH}-r_{HH}+\Delta_{x_{H}}V_t(0,x_{L})-\Delta_{x_{L}}V_t(0,x_{L}-1)\\
&\leq \Delta_{x_{L}}V_t(0,x_{L})-\Delta_{x_{L}}V_t(0,x_{L}-1)\leq 0,
\end{align*}
where the first inequality is because $V_t$ satisfies property (\textit{iii}) and the second  inequality is because $V_t$ satisfies property (\textit{iv}).

If $x_{L}>0$, we have  
\begin{align*}
H_t(&x_{H}+2,x_{L}, 0, 1)-2H_t(x_{H}+1,x_{L},0, 1)+H_t(x_{H},x_{L},0, 1)\\
&=[-h(x_{H}+2+x_{L})+ r_{LL}+h+V_t (x_{H}+2, x_{L}-1)]-2[-h(x_{H}+1+x_{L})+ r_{LL}+h+V_t (x_{H}+1, x_{L}-1)]\\
&+[-h(x_{H}+x_{L})+ r_{LL}+h+V_t (x_{H}, x_{L}-1)]\\
&=V_t(x_{H}+2,x_{L}-1)-2V_t(x_{H}+1,x_{L}-1)+V_t(x_{H},x_{L}-1)\leq 0.
\end{align*}If $x_{L}=0$, we have  
\begin{align*}
H_t(&x_{H}+2,0, 0, 1)-2H_t(x_{H}+1,0,0, 1)+H_t(x_{H},0,0, 1)\\
&=[-h(x_{H}+2)+\max\{ r_{HL}+h+V_t (x_{H}+1, 0), V_t (x_{H}+2, 0)\}]\\
&-2[-h(x_{H}+1)+ \max\{ r_{HL}+h+V_t (x_{H}, 0), V_t (x_{H}+1, 0)\}]\\
&+[-h x_{H}+\max\{ r_{HL}+h+V_t (x_{H}-1, 0), V_t (x_{H}, 0)\}].\\
\end{align*}
When $\max\{ r_{HL}+h+V_t (x_{H}-1, 0), V_t (x_{H}, 0)\}=r_{HL}+h+V_t (x_{H}-1, 0)$, we have $$r_{HL}+h+V_t (x_{H}-1, 0)\geq V_t (x_{H}, 0), $$
i.e., $\Delta_{x_{H}}V_t(x_{H}-1,0)\leq r_{HL}+h$. Since $V_t$ satisfies property (\textit{iv}), $\Delta_{x_{H}}V_t(x_{H}+1,0)\leq \Delta_{x_{H}}V_t(x_{H}-1,0)\leq r_{HL}+h$. This implies that 
$$\max\{ r_{HL}+h+V_t (x_{H}+1, 0), V_t (x_{H}+2, 0)\}=r_{HL}+h+V_t (x_{H}+1, 0).$$ 
In this case, 
\begin{align*}
H_t(&x_{H}+2,0, 0, 1)-2H_t(x_{H}+1,0,0, 1)+H_t(x_{H},0,0, 1)\\
&\leq [-h(x_{H}+2)+ r_{HL}+h+V_t (x_{H}+1, 0)]-2[-h(x_{H}+1)+ r_{HL}+h+V_t (x_{H}, 0)]\\
&\;\;\;\;\;\;\;\;\;\;\;\;\;\;\;\;\;\;\;\;\;\;\;\;\;\;\;\;\;\;\;\;\;\;\;\;\;\;\;\;\;\;\;\;\;\;\;\;\;\;\;\;\;\;\;\;\;\;\;\;\;\;\;\;\;\;\;\;\;\;\;\;\;\;\;\;\;\;\;\;\;\;\;\;\;\;\;\;\;+[-hx_{H}+ r_{HL}+h+V_t (x_{H}-1,0)]\\
&=V_t(x_{H}+1,0)-2V_t(x_{H},0)+V_t(x_{H}-1,0)\leq 0.
\end{align*}
When $\max\{ r_{HL}+h+V_t (x_{H}-1, 0), V_t (x_{H}, 0)\}=V_t (x_{H}, 0)$, we consider two cases: (1) $\max\{ r_{HL}+h+V_t (x_{H}+1, 0), V_t (x_{H}+2, 0)\}=r_{HL}+h+V_t (x_{H}+1, 0)$ and (2) $\max\{ r_{HL}+h+V_t (x_{H}+1, 0), V_t (x_{H}+2, 0)\}=V_t (x_{H}+2, 0)$.
For the first case, we have
\begin{align*}
H_t(&x_{H}+2,0, 0, 1)-2H_t(x_{H}+1,0,0, 1)+H_t(x_{H},0,0, 1)\\
&\leq [-h(x_{H}+2)+ r_{HL}+h+V_t (x_{H}+1, 0)]-[-h(x_{H}+1)+ r_{HL}+h+V_t (x_{H}, 0)]\\
&\;\;\;\;-[-h(x_{H}+1)+ V_t (x_{H}+1, 0)]+[-hx_{H}+V_t (x_{H},0)]=0.
\end{align*}
For the second case, we have 
\begin{align*}
H_t(&x_{H}+2,0, 0, 1)-2H_t(x_{H}+1,0,0, 1)+H_t(x_{H},0,0, 1)\\
&\leq [-h(x_{H}+2)+ V_t (x_{H}+2, 0)]-2[-h(x_{H}+1)+ V_t (x_{H}+1, 0)]+[-hx_{H}+ V_t (x_{H},0)]\\
&=V_t(x_{H}+2,0)-2V_t(x_{H}+1,0)+V_t(x_{H},0)\leq 0.
\end{align*}

Therefore,  $H_t(x_{H},x_{L},y_{H}, y_{L})$ satisfies properties (\textit{Hi})-(\textit{Hiv}).

{\bf Step 2:} Next, we show  that  $V_{t+1}(x_{H}, x_{L})$ also satisfies properties (\textit{i})-(\textit{iv}). Recall that equation (\ref{appendix:dp.eq1}) can be written as $V_{t+1}(\mathbf{x})=\mathbb{E}H_t(\mathbf{x}+\mathbf{S}, \mathbf{D})$. We have shown that $H_t$ satisfies properties (\textit{Hi})-(\textit{Hiv}). Since the expectation operator preserves inequalities, it is straightforward that $V_{t+1}(x_{H}, x_{L})$ also satisfies properties (\textit{i})-(\textit{iv}).

Finally, we show that $d_{t}$ has the same structure as described in Proposition 1. 

Note that $d_{t}$ can be interpreted as the optimal decision in period $t$. In period $t$, there are only two possible types of demand agents arriving in each period, either an H-type demand agent or an L-type demand agent. We analyze these two cases separately. 

\noindent $\bullet$ Case 1: If the newly arrived demand agent is an H-type one, the centralized planner decides to match her with an H-type supply agent or an L-type supply agent if available, or not to match her with any supply agent.

Since $V_t(x_{H}, x_{L})$ satisfies properties (i)-(iii), we have 
\begin{align*}
r_{HH}+h+V_t(x_{H}-1, x_{L})&\ge V_t(x_{H}, x_{L}) &  \mbox{(by property (\textit{i}))}\\
r_{LH}+h+V_t(x_{H}, x_{L}-1)&\ge V_t(x_{H}, x_{L}) & \mbox{(by property (\textit{ii})  and $r_{LH}\geq r_{LL}$)}\\
r_{HH}+h+V_t(x_{H}-1, x_{L})&\ge r_{LH}+h+V_t(x_{H}, x_{L}-1) & \mbox{(by property (\textit{iii}))}
\end{align*}
This implies that 
\begin{align*}
&\max\{r_{HH}+h+V_t(x_{H}-1, x_{L}), r_{LH}+h+V_t(x_{H}, x_{L}-1), V_t(x_{H}, x_{L}) \}=r_{HH}+h+V_t(x_{H}-1, x_{L}), \\
&\max\{r_{HH}+h+V_t(x_{H}-1, 0), V_t(x_{H}, 0) \}=r_{HH}+h+V_t(x_{H}-1, 0), \\
&\max\{ r_{LH}+h+V_t(0, x_{L}-1), V_t(0, x_{L}) \}=r_{LH}+h+V_t(0, x_{L}-1). 
\end{align*}
Therefore, it is optimal to use an H-type supply agent to satisfy the H-type demand agent when both H-type and L-type supply agents are available, to use an H-type supply agent to satisfy the H-type demand agent when only H-type supply agents are available, and to use an L-type supply agent to satisfy the H-type demand agent when only L-type supply agents are available. 

In other words, for a newly arrived H-type demand agent, the optimal matching decision is to use an H-type supply agent to satisfy the H-type demand agent immediately if H-type supply agents are available. If H-type supply agents are not available, use an L-type supply agent to satisfy the H-type demand agent immediately. Since at least one supply agent will arrive on the platform during this period, the no-matching option is not optimal. We call this case {\it Greedy Matching}.

\noindent $\bullet$ Case 2: If the newly arrived demand agent is an L-type one, the centralized planner decides to match her with an H-type supply agent or an L-type supply agent if available, or not to match her with any supply agent.

Since $V_t(x_{H}, x_{L})$ satisfies properties (i)-(iii), we have 
\begin{align*}
r_{LL}+h+V_t(x_{H}, x_{L}-1)&\ge V_t(x_{H}, x_{L}) & \mbox{(by property (\textit{ii}))}\\
r_{LL}+h+V_t(x_{H}, x_{L}-1)&\ge r_{HL}+h+V_t(x_{H}-1, x_{L}) &  \mbox{(by property (\textit{iii}))}
\end{align*}
This implies that 
\begin{align*}
&\max\{r_{HL}+h+V_t(x_{H}-1, x_{L}), r_{LL}+h+V_t(x_{H}, x_{L}-1), V_t(x_{H}, x_{L}) \}=r_{LL}+h+V_t(x_{H}, x_{L}-1),\\ 
&\max\{r_{HL}+h+V_t(x_{H}-1, x_{L}), V_t(x_{H}, x_{L}) \}=r_{LL}+h+V_t(x_{H}, x_{L}-1),
\end{align*}
Therefore, it is optimal to use an L-type supply agent to satisfy the L-type demand agent whenever  L-type supply agents are available. When only H-type supply agents are available, the optimal decision is determined by the larger one between
$$\{r_{HL}+h+V_t(x_{H}-1, 0), V_t(x_{H}, 0)\}.$$
Since $V_t(x_{H}, x_{L})$ satisfies property (v), we have $V_t(x_{H}, 0)-[r_{HL}+h+V_t(x_{H}-1, 0)]$ is decreasing in $x_{H}$. Let 
$$k^{ce}_{t}=\min\{x_{H}>0: V_t(x_{H}, 0)-[r_{HL}+h+V_t(x_{H}-1, 0)]<0\}-1.$$
Then, when $x_{H}>k^{ce}_n$, $V_t(x_{H}, 0)-[r_{HL}+h+V_t(x_{H}-1, 0)]<0$, and when $x_{H}\leq k^{ce}_{t}$, $V_t(x_{H}, 0)-[r_{HL}+h+V_t(x_{H}-1, 0)]\geq 0$. That is, if the number of H-type supply agents is larger than $k^{ce}_{t}$, it is optimal to use an H-type supply agent to satisfy the L-type demand agent, and otherwise, not to match the L-type demand agent. We call this case as {\it Threshold Matching}.

Since  $\{d_{t}(\mathbf{s})\}_{\mathbf{s}\in \mathbf{S}}$ converges to the optimal policy $\pi^*$ as $t$ goes to infinity, the optimal policy $\pi^*$ has the same structure as $\{d_{t}(\mathbf{s})\}_{\mathbf{s}\in \mathbf{S}}$, which is described in Proposition 1, where the threshold $k^{ce}=\lim_{n\rightarrow \infty}k^{ce}_{t}$. 

$\Box$

\subsection*{C. Proof of Proposition \ref{cen: prop optimal threshold}}

Based on Theorem 1, the optimal matching policy $\pi^*$ has the following structure:
\begin{itemize}
	\item (Greedy Matching) When an H-type demand agent arrives, it is optimal to match her with an H-type supply agent if one is available. Otherwise, it is optimal to match her with an L-type supply agent.
	\item (Threshold Matching) When an L-type demand agent arrives, it is optimal to match her with an L-type supply agent if one is available. Otherwise, it is optimal to match her with an H-type supply agent only if the number of H-type supply agents is larger than $k^{ce}$.
\end{itemize}
In the following, we first solve the stationary distribution of steady states of the MDP and obtain the long-run average profit for a given threshold $k^{\pi}$. Then we optimize the long-run average profit over the threshold $k^{\pi}$ to obtain the optimal threshold $k^{ce}$.

For a given threshold $k^{\pi}$, let $(x_{H},x_{L})$ denote the number of H-type and L-type supplies at the beginning of each period before the arrival of the agents. Then, this problem can be formulated as a discrete-time Markov chain (DTMC) with state space $\{(x_{H}, x_{L})|x_{H}+x_{L}\leq k^{\pi}\}$. We can also plot the state transition diagram of this DTMC; see Figure \ref{fig.ce.MDP state trasition main}. The gray circles indicate the recurrent state set $\{(x_{H}, x_{L})|x_{H}+x_{L}=k^{\pi}\}$ and white circles indicate the transient state set $\{(x_{H}, x_{L})|x_{H}+x_{L}<k^{\pi}\}$.
\begin{figure}[htpb]
	\centering
	\includegraphics[width=0.5\textwidth]{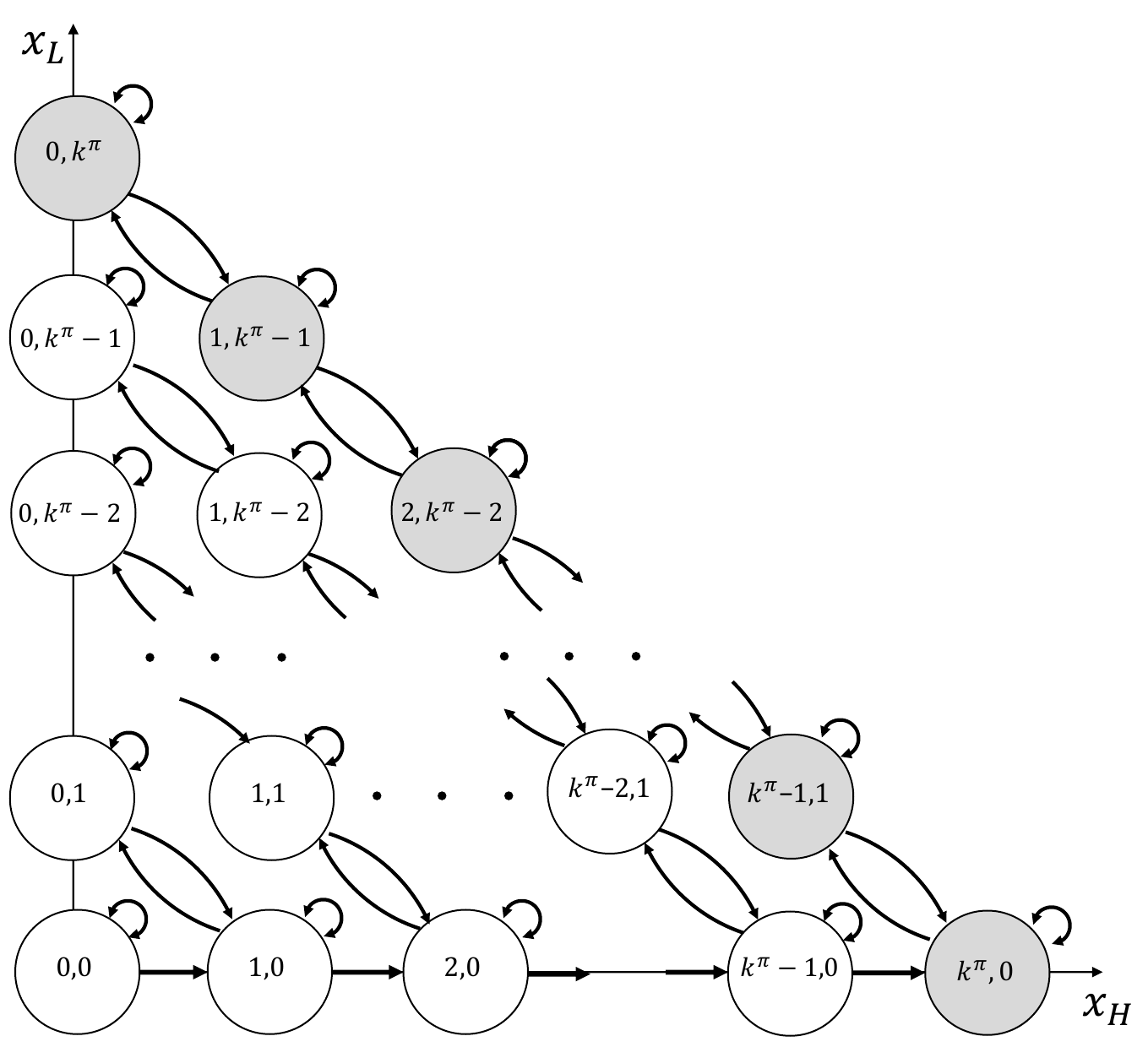}\\
	\caption{States transition diagram of centralized system}\label{fig.ce.MDP state trasition main}
\end{figure}

As shown in Figure \ref{fig.ce.MDP state trasition main}, there is only one communicating class in this DTMC. This communicating class is irreducible. Moreover, in this communicating class, each state is aperiodic.  Additionally, since all states in this communicating class are finite and positive recurrent, this Markov chain is stable. For this irreducible, aperiodic, and stable Markov chain, the steady state distribution exists.


Let $\phi(x_{H}, x_{L})$ represent the steady state probability for state $(x_{H}, x_{L})$. Clearly, $\phi(x_{H}, x_{L})=0$ for $x_{H}+x_{L}< k^{\pi}$. Moreover, $[\phi(k^{\pi}, 0),\phi(k^{\pi}-1, 1),\dots, \phi(0,k^{\pi})]$
is the unique solution to 
\begin{align*}
[\phi(k^{\pi}, 0),\phi(k^{\pi}-1, 1),\dots,\phi(0,k^{\pi})]\cdot P&=[\phi(k^{\pi}, 0),\phi(k^{\pi}-1, 1),\dots, \phi(0,k^{\pi})], \\
\phi(k^{\pi}, 0)+\phi(k^{\pi}-1, 1)+\dots+ \phi(0,k^{\pi})&=1,
\end{align*}
where 
\begin{eqnarray*}
	P=\left(
	\begin{array}{ccccccc}
		1-q(1-p) & q(1-p) & 0 & 0 & ...& 0& 0\\
		p(1-q) & pq+(1-p)(1-q) & q(1-p) & 0 & ...& 0& 0 \\
		0 &  	p(1-q) & pq+(1-p)(1-q) &q(1-p)  &  ...& 0& 0\\
		... & ... & ... & ... & ...&...& ...\\
		0 &  	0 & 0 &0  &  ...& pq+(1-p)(1-q)& q(1-p)\\
		0 &  	0 & 0 &0  &  ...& p(1-q)& 1-p(1-q)\\
	\end{array}
	\right)
\end{eqnarray*}
is the transition matrix. That is, 
\begin{align}
[1-q(1-p)]\phi(k^{\pi}, 0)+p(1-q)\phi(k^{\pi}-1, 1)&=\phi(k^{\pi}, 0), \label{lineareq1} \\
q(1-p)\phi(k^{\pi}, 0)+[pq+(1-p)(1-q)]\phi(k^{\pi}-1, 1)+p(1-q)\phi(k^{\pi}-2, 2)&=\phi(k^{\pi}-1, 1)\label{lineareq2}\\
\dots \\
q(1-p)\phi(1,k^{\pi}-1)+[1-p(1-q)]\phi(0,k^{\pi})&=\phi(0,k^{\pi}) \label{lineareqk}\\
\phi(k^{\pi}, 0)+\phi(k^{\pi}-1, 1)+\dots+ \phi(0,k^{\pi})&=1, \label{lineareqsum}
\end{align}

Let  $\delta=\frac{q(1-p)}{p(1-q)}$. By (\ref{lineareq1}), we have $\phi(k^{\pi}-1, 1)=\delta\phi(k^{\pi},0)$. Plugging it to (\ref{lineareq2}), we have $\phi(k^{\pi}-2, 2)=\delta^2\phi(k^{\pi},0)$. Similarly, $\phi(k^{\pi}-i, i)=\delta^i\phi(k^{\pi},0)$, for $i=3,\dots, k^{\pi}$. Then, by (\ref{lineareqsum}), 
$$\phi(k^{\pi}, 0)+\delta\phi(k^{\pi}, 0)+\delta^2\phi(k^{\pi}, 0)+\dots+ \delta^{k^{\pi}}\phi(k^{\pi}, 0)=1.$$

Next, we consider two cases: (1) $\delta\neq 1$ and (2) $\delta= 1$. 

(1) When $\delta\neq 1$, we have $$\phi(k^{\pi}, 0)=\frac{1}{1+\delta+\dots+\delta^{k^{\pi}}}=\frac{1-\delta}{1-\delta^{k^{\pi}+1}},$$ and 
$$\phi(k^{\pi}-i, i)=\delta^i\phi(k^{\pi}, 0)=\frac{(1-\delta)\delta^i}{1-\delta^{k^{\pi}+1}}, \quad \quad 0\le l\le k^{\pi}.$$

We can then compute the long-run average revenue $R(k^{\pi})$ and the long-run average waiting cost $C(k^{\pi})$ in the following: 
\begin{eqnarray*}
	R(k^{\pi})&=&\phi(k^{\pi}, 0)(pq r_{HH}+p(1-q)r_{HL}+(1-p)q r_{HH}+(1-p)(1-q)r_{LL})\\
	&&+\sum^{k^{\pi}-1}_{l=1}\phi(k^{\pi}-l, l)(pq r_{HH}+p(1-q)r_{LL}+(1-p)q r_{HH}+(1-p)(1-q)r_{LL})\\
	&&+\phi(0, k^{\pi})(pq r_{HH}+p(1-q)r_{LL}+(1-p)q r_{LH}+(1-p)(1-q)r_{LL})\\
	&=&\frac{1}{k^{\pi}+1}(pq r_{HH}+p(1-q)r_{HL}+(1-p)q r_{HH}+(1-p)(1-q)r_{LL})\\
	&&+\frac{k^{\pi}-1}{k^{\pi}+1}(pq r_{HH}+p(1-q)r_{LL}+(1-p)q r_{HH}+(1-p)(1-q)r_{LL})\\
	&&+\frac{1}{k^{\pi}+1}(pq r_{HH}+p(1-q)r_{LL}+(1-p)q r_{LH}+(1-p)(1-q)r_{LL})\\
	&=&pqr_{HH}+(1-p)(1-q)r_{LL}+p(1-q)\frac{(1-\delta)r_{HL}+(\delta-\delta^{k^{\pi}+1})r_{LL}}{1-\delta^{k^{\pi}+1}}\\
	&& +(1-p)q\frac{(1-\delta^{k^{\pi}})r_{HH}+(\delta^{k^{\pi}}-\delta^{k^{\pi}+1})r_{HL}}{1-\delta^{k^{\pi}+1}}\\
	&=& q r_{HH}+(1-q) r_{LL}-(1-p)q(r_{HH}-r_{LH}) \frac{\delta^{k^{\pi}}-\delta^{k^{\pi}+1}}{1-\delta^{k^{\pi}+1}}+p(1-q)(r_{HL}-r_{LL})\frac{1-\delta}{1-\delta^{k^{\pi}+1}}\\
	&=& q r_{HH}+(1-q) r_{LL}-p(1-q)(r_{HH}-r_{LH})(1-\delta) \frac{\delta^{k^{\pi}+1}}{1-\delta^{k^{\pi}+1}}+p(1-q)(r_{HL}-r_{LL})(1-\delta)\frac{1}{1-\delta^{k^{\pi}+1}}\\
	&=&q r_{HH}+(1-q) r_{LL}+p(1-q)(r_{HH}-r_{LH})(1-\delta)-p(1-q)(r_{HH}-r_{LH}-r_{HL}+r_{LL})\frac{1-\delta}{1-\delta^{k^{\pi}+1}},\\
	&=&q r_{HH}+(1-q) r_{LL}+p(1-q)(r_{HH}-r_{LH})(1-\delta)-p(1-q)r\frac{1-\delta}{1-\delta^{k^{\pi}+1}},\\
	C(k^{\pi})&=&\sum^{k^{\pi}}_{l=0}\phi(k^{\pi}-l, l) k^{\pi} h=k^{\pi} h,
\end{eqnarray*}
where $r=r_{HH}-r_{LH}-r_{HL}+r_{LL}$. Therefore, the long-run average profit is 
\begin{equation}\label{eq:W formula 1}
	W(k^{\pi})=q r_{HH}+(1-q) r_{LL}+p(1-q)(r_{HH}-r_{LH})(1-\delta)-p(1-q)r\frac{1-\delta}{1-\delta^{k^{\pi}+1}}-k^{\pi} h.
\end{equation}
It is easy to verify that $W(k^{\pi})$ is a concave function of $k^{\pi}$. Therefore, if $W(0)\geq W(1)$, i.e., 
$$h\geq p(1-q)r\frac{\delta}{1+\delta},$$
the optimal threshold is $k^{ce}=0$. Otherwise, $k^{ce}=\max\{k\geq 0:W(k)-W(k-1)\geq 0\}$.
Note that 
$$W(k)-W(k-1)=p(1-q)r(1-\delta)\frac{\delta^{k}-\delta^{k+1}}{(1-\delta^{k+1})(1-\delta^{k})}-h.$$
If $\delta<1$, i.e.,  $p>q$, $W(k)-W(k-1)\geq 0$ if and only if 
$$k\leq \frac{\ln[h(1+\delta)+p(1-q)r(1-\delta)^2-\sqrt{[h(1+\delta)+p(1-q)r(1-\delta)^2]^2-4h^2\delta}]-\ln(2h\delta)}{\ln\delta}.$$
In this case, $$k^{ce}=\left\lfloor\frac{\ln[h(1+\delta)+p(1-q)r(1-\delta)^2-\sqrt{[h(1+\delta)+p(1-q)r(1-\delta)^2]^2-4h^2\delta}]-\ln(2h\delta)}{\ln\delta}\right\rfloor.$$
If $\delta>1$, i.e.,  $p<q$, $W(k)-W(k-1)\geq 0$ if and only if 
$$k\leq \frac{\ln[h(1+\delta)+p(1-q)r(1-\delta)^2+\sqrt{[h(1+\delta)+p(1-q)r(1-\delta)^2]^2-4h^2\delta}]-\ln(2h\delta)}{\ln\delta}.$$
In this case, $$k^{ce}=\left\lfloor\frac{\ln[h(1+\delta)+p(1-q)r(1-\delta)^2+\sqrt{[h(1+\delta)+p(1-q)r(1-\delta)^2]^2-4h^2\delta}]-\ln(2h\delta)}{\ln\delta}\right\rfloor.$$

(2) When $\delta=1$, using a similar approach, we have 
\begin{equation}\label{eq:W formula 2}
	W(k^{\pi})=pr_{HH}+(1-p)r_{LL}-\frac{p(1-p)r}{k^{\pi}+1}-k^{\pi}h.
\end{equation} 
and  $$k^{ce}=\left\lfloor\frac{-h+\sqrt{h^2+4hp(1-q)r}}{2h}\right\rfloor.$$

$\Box$

\subsection*{D. Proof of Proposition \ref{prop: centralized welfare}}

The optimal social welfare is $W^{ce}=W(k^{ce})=\max\limits_{k^{\pi}= 0,1,\dots} W(k^{\pi})$, where the formula of $W$  is described by equations (\ref{eq:W formula 1}) and (\ref{eq:W formula 2}) for the cases of $\delta\neq 1$ and $\delta=1$, respectively. Since $W(k^{\pi})$ is decreasing in $h$, by the envelope theorem, $W^{ce}$ is also decreasing in $h$.  

$\Box$

\subsection*{E. Proofs of Lemma \ref{lemma:threshold property}, Proposition \ref{prop:matching action in equilibrium}, and Theorem \ref{thm:stable matching of decentralized system}}

In the following, we prove Lemma \ref{lemma:threshold property}, Proposition \ref{prop:matching action in equilibrium}, and Theorem \ref{thm:stable matching of decentralized system} together. 

We consider two scenarios: the arrival of an H-type demand agent and the arrival of an L-type demand agent. In the first scenario, when an H-type demand agent arrives, a match is naturally formed with the first H-type supply agent in the queue, as both parties are willing to engage in the match based on a first-come, first-served rule. If no H-type supply agents are available in the queue, the H-type demand agent will opt to match with the first L-type supply agent waiting in the L-type supply queue, as leaving the system unmatched would result in lower utility. Consequently, the first L-type supply agent is also willing to match with the H-type demand agent, since this match helps reduce holding costs and maximizes their payoff.

Next, we consider the scenario in which an L-type demand agent arrives. According to Lemma 2, the queue length of H-type supply agents is bounded by $k^{de}$.  This means that any $k$-th H-type supply agent (where $1\le k\le k^{de}$) is only willing to match with H-type demand agents and will reject matches with L-type demand agents. However, H-type supply agents ranked $k\ge k^{de}+1$ will be willing to match with the L-type demand agent.

Now, we will analyze the behavior of L-type supply agents waiting in the queue, starting with the first L-type supply agent at the front of the queue. As illustrated in Figure \ref{fig.dec.lost sales 1}, we denote the queue length of H-type supply agents as $z$, where $0\le z\le {k}^{de}$.

\begin{figure}[htpb]
	\centering
	\includegraphics[width=0.45\textwidth]{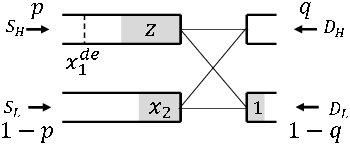}\\
	\caption{Example of a matching system with lost sales when an L-type demand agent arrives}\label{fig.dec.lost sales 1}
\end{figure}

Based on the first-come, first-served rule, L-type supply agents will be matched sequentially. We begin our analysis with the first L-type supply agent in the queue. Suppose an L-type demand agent is willing to match with this L-type supply agent. The first L-type supply agent must decide whether to match with her or continue waiting. If he chooses to match, the immediate payoff is $\alpha r_{LL}$. Conversely, if he decides to wait, his surplus will be the potential payoff $\alpha r_{LH}$ minus the expected waiting cost. Thus, the trade-off for the first L-type supply agent lies between the expected waiting benefit for a potentially superior match and the expected waiting cost associated with that decision. 

The expected waiting benefit is represented by the payoff improvement $\alpha(r_{LH}-r_{LL})$. The expected waiting cost is analyzed as follows.

{\bf Waiting cost of the first L-type supply agent}\\
Denote $EW_{L}^{1}(z)$ as the waiting cost of the first L-type supply agent when there are $z$ H-type supply agents waiting in their queue for $0\le z\le {k}^{de}$. Then, $EW_{L}^{1}(0)$ represents the queue of H-type supply agents is empty and $EW_{L}^{1}({k}^{de})$ represents the queue length of H-type supply agents is ${k}^{de}$. If the first L-type agent chooses to wait, there are three possible cases as follows.

\noindent $\bullet$ Case 1 with $z=0$: we have
\begin{eqnarray*}
	EW_{L}^{1}(0)=[p q+(1-p)(1-q)](EW_{L}^{1}(0)+h)+p(1-q)(EW_{L}^{1}(1)+h)+(1-p)q h
\end{eqnarray*}
It implies that
\begin{equation}\label{proof.eq1}
p(1-q)[EW_{L}^{1}(1)-EW_{L}^{1}(0)]-q(1-p)EW_{L}^{1}(0)+h=0
\end{equation}

\noindent $\bullet$ Case 2, with $1\le z< {k}^{de}$: we have
\begin{eqnarray*}
	EW_{L}^{1}(z)=[pq +(1-p)(1-q)](EW_{L}^{1}(z)+h)+p(1-p)(EW_{L}^{1}(z+1)+h)+q(1-p)(EW_{L}^{1}(z-1)+h)
\end{eqnarray*}
It implies that
\begin{equation}\label{proof.eq2}
p(1-q)[EW_{L}^{1}(z+1)-EW_{L}^{1}(z)]-q(1-p)[EW_{L}^{1}(z)-EW_{L}^{1}(z-1)]+h=0
\end{equation}

\noindent $\bullet$ Case 3 with $z={k}^{de}$: we have
\begin{eqnarray*}
	EW_{L}^{1}({k}^{de})=[pq+(1-p)(1-q)+p(1-p)](EW_{L}^{1}({k}^{de})+h)+q(1-p)(EW_{L}^{1}({k}^{de}-1)+h)
\end{eqnarray*}
It implies that
\begin{equation}\label{proof.eq3}
-q(1-p)[EW_{L}^{1}({k}^{de})-EW_{L}^{1}({k}^{de}-1)]+h=0
\end{equation}

We define $a(z)=EW_{L}^{1}(z)-EW_{L}^{1}(z-1)$ and $a(0)=EW_{L}^{1}(0)$ for $0\le z\le {k}^{de}$. Then, equations \eqref{proof.eq1}, \eqref{proof.eq2}, and \eqref{proof.eq3} can be rewritten as
\begin{eqnarray}\label{proof.eq4}
\left\{
\begin{array}{ll}
p(1-q)a(1)-q(1-p)a(0)+h=0, & \hbox{if $z=0$;} \\
p(1-q)a(z+1)-q(1-p)a(z)+h=0, & \hbox{if $0<z<{k}^{de}$;} \\
-q(1-p)a({k}^{de})+h=0, & \hbox{if $z={k}^{de}$.}
\end{array}
\right.
\end{eqnarray}
Solving\eqref{proof.eq4}, we have
\begin{eqnarray*}
	\left\{
	\begin{array}{ll}
		a(z)=\left(\frac{q(1-p)}{p(1-q)}\right)^z a(0)-\frac{h}{p(1-q)}\left(1+\frac{q(1-p)}{p(1-q)}+\left(\frac{q(1-p)}{p(1-q)}\right)^2+...+\left(\frac{q(1-p)}{p(1-q)}\right)^{z-1}\right),\\
		... \\
		a(3)=\frac{q(1-p)}{p(1-q)} a(2)-\frac{h}{p(1-q)}=\left(\frac{q(1-p)}{p(1-q)}\right)^3 a(0)-\frac{h}{p(1-q)}\left(1+\frac{q(1-p)}{p(1-q)}+\left(\frac{q(1-p)}{p(1-q)}\right)^2\right),  \\
		a(2)=\frac{q(1-p)}{p(1-q)} a(1)-\frac{h}{p(1-q)}=\left(\frac{q(1-p)}{p(1-q)}\right)^2 a(0)-\frac{h}{p(1-q)}\left(1+\frac{q(1-p)}{p(1-q)}\right),  \\
		a(1)=\frac{q(1-p)}{p(1-q)} a(0)-\frac{h}{p(1-q)},  \\
		a(0)=EW_{L}^{1}(0).
	\end{array}
	\right.
\end{eqnarray*}
To simplify the notation, we let $\delta=\frac{q(1-p)}{p(1-q)}$ where $0<\delta<+\infty$.

 If $\delta= 1$, we have $q(1-p)= p(1-q)$ and $p=q$. By the definition of $a(z)$ and the solution to \eqref{proof.eq4}, we have 
\begin{eqnarray*}
	\left\{
	\begin{array}{ll}
		a(z)=EW_{L}^{1}(z)-EW_{L}^{1}(z-1), \\
		a(z-1)=EW_{L}^{1}(z-1)-EW_{L}^{1}(z-2), \\
		..., & \\
		a(1)=EW_{L}^{1}(1)-EW_{L}(0),  \\
		a(0)=EW_{L}^{1}(0),
	\end{array}
	\right.\mbox{ and }
	\left\{
	\begin{array}{ll}
		a(z)= a(0)-\frac{zh}{p(1-q)},\\
		a(z-1)= a(0)-\frac{(z-1)h}{p(1-q)},  \\
		... \\
		a(1)= a(0)-\frac{h}{p(1-q)},  \\
		a(0).
	\end{array}
	\right.
\end{eqnarray*}
Hence, we can represent $EW_{L}^{1}(z)$ by $a(0)$ as follows
\begin{eqnarray}\label{proof.eq5}
EW_{L}^{1}(z)=\sum^{z}_{j=0}a(j)=(z+1)a(0)-\frac{z(z+1)h}{2p(1-q)}.
\end{eqnarray}
Plugging $EW_{L}^{1}(z)$ into equation \eqref{proof.eq3}, we have
\begin{eqnarray*}
	&&EW_{L}^{1}({k}^{de})-EW_{L}^{1}({k}^{de}-1)-\frac{h}{q(1-p)}\\
	&=&\left[({k}^{de}+1)a(0)-\frac{{k}^{de}({k}^{de}+1)h}{2p(1-q)}\right]
	-\left[{k}^{de}a(0)-\frac{({k}^{de}-1){k}^{de}h}{2p(1-q)}\right]
	-\frac{h}{q(1-p)}\\
	&=&a(0)-\frac{{k}^{de}h}{p(1-q)}-\frac{h}{q(1-p)}\\
	&=&EW_{L}^{1}(0)-\frac{{k}^{de}h}{p(1-q)}-\frac{h}{q(1-p)}=0.
\end{eqnarray*}
Thus, $EW_{L}^{1}(0)=\frac{{k}^{de}h}{p(1-q)}+\frac{h}{q(1-p)}$.

Recall that $${k}^{de}=\lfloor\frac{q\alpha(r_{HH}-r_{HL})}{h}\rfloor\geq \frac{q\alpha(r_{HH}-r_{HL})}{h}.$$ Thus,
\begin{eqnarray*}
	EW_{L}^{1}(0)\geq \frac{q\alpha(r_{HH}-r_{HL})}{p(1-q)}+\frac{h}{q(1-p)}>\frac{\alpha(r_{HH}-r_{HL})}{1-p}>\alpha(r_{HH}-r_{HL})\geq \alpha(r_{LH}-r_{LL}),
\end{eqnarray*} and 
where the second equality is because $\frac{q}{p(1-q)}=\frac{1}{1-p}$ when $\delta=1$ and $\frac{h}{q(1-p)}>0$ and the last equality is due to Assumption 2 that $r_{HH}-r_{HL}\geq r_{LH}-r_{LL}$. This implies that when $z=0$,  the first L-type supply agent prefers to match with the L-type demand agent immediately rather than waiting.

Next, we show that $EW_{L}^{1}(z)$ is a strictly increasing function of $z$ when $0\le z\le {k}^{de}$. When $z={k}^{de}$, by equation \eqref{proof.eq3}, we have $EW_{L}^{1}({k}^{de})-EW_{L}^{1}({k}^{de}-1)=\frac{h}{q(1-p)}>0$. It implies $EW_{L}^{1}({k}^{de})>EW_{L}^{1}({k}^{de}-1)$. When $0\le z<{k}^{de}$, 
\begin{eqnarray*}
	EW_{L}^{1}(z)-EW_{L}^{1}(z-1)&=&\left((z+1)EW_{L}^{1}(0)-\frac{z(z+1)h}{2 p(1-p)}\right)-\left(zEW_{L}^{1}(0)-\frac{(z-1)zh}{2 p(1-p)}\right)\\
	&=&EW_{L}^{1}(0)-\frac{hz}{p(1-p)}\\
	&>&EW_{L}^{1}(0)-\frac{h{k}^{de}}{p(1-p)}=\frac{h}{p(1-p)}>0.
\end{eqnarray*}
 Therefore, $EW_{L}^{1}(z)>EW_{L}^{1}(0)>\alpha (r_{LH}-r_{LL})$. This implies that the first L-type supply agent prefers to match with the L-type demand agent immediately, regardless of the queue length of H-type supply agents. The L-type supply agents who follow will wait longer than the first agent. Consequently, none of these agents will choose to wait for H-type supply agents either.

If $\delta\neq 1$, by the solution to \eqref{proof.eq4}, we have 
\begin{eqnarray*}
	\left\{
	\begin{array}{ll}
		a(z)=\delta^z a(0)-\frac{h}{p(1-q)}\left(1+\delta+\delta^2+...+\delta^{z-1}\right),\\
		... \\
		a(3)=\delta^3 a(0)-\frac{h}{p(1-q)}\left(1+\delta+\delta^2\right),  \\
		a(2)=\delta^2 a(0)-\frac{h}{p(1-q)}\left(1+\delta\right),  \\
		a(1)=\delta a(0)-\frac{h}{p(1-q)},  \\
		a(0)=EW_{L}^{1}(0),
	\end{array}
	\right.
	\Longrightarrow
	\left\{
	\begin{array}{ll}
		a(z)=\delta^z a(0)-\frac{h}{p(1-q)}\left(\frac{1-\delta^z}{1-\delta}\right),\\
		..., \\
		a(3)=\delta^3 a(0)-\frac{h}{p(1-q)}\left(\frac{1-\delta^3}{1-\delta}\right),  \\
		a(2)=\delta^2 a(0)-\frac{h}{p(1-q)}\left(\frac{1-\delta^2}{1-\delta}\right),  \\
		a(1)=\delta a(0)-\frac{h}{p(1-q)}\left(\frac{1-\delta}{1-\delta}\right),  \\
		a(0)=a(0)-\frac{h}{p(1-q)}\left(\frac{1-1}{1-\delta}\right).
	\end{array}
	\right.
\end{eqnarray*}
Summing over all the equations, we have 
$$\sum_{j=0}^{z}a(j)=\frac{1-\delta^{z+1}}{1-\delta}a(0)-\frac{h}{p(1-q)(1-\delta)}\left(z+1-\frac{1-\delta^{z+1}}{1-\delta}\right).$$
Recall the definition of $a(z)$:
\begin{eqnarray*}
	\left\{
	\begin{array}{ll}
		a(z)=EW_{L}^{1}(z)-EW_{L}^{1}(z-1), \\
		a(z-1)=EW_{L}^{1}(z-1)-EW_{L}^{1}(z-2), \\
		..., & \hbox{;} \\
		a(1)=EW_{L}^{1}(1)-EW_{L}^{1}(0),  \\
		a(0)=EW_{L}^{1}(0).
	\end{array}
	\right.
\end{eqnarray*}
Again summing over all above equations, we can represent $EW_{L}^{1}(z)$ by $a(0)$ as follows
\begin{eqnarray}\label{proof.eq.z}
EW_{L}^{1}(z)=\sum^{z}_{j=0}a(j)=\frac{1-\delta^{z+1}}{1-\delta}a(0)-\frac{h}{p(1-q)(1-\delta)}\left(z+1-\frac{1-\delta^{z+1}}{1-\delta}\right)
\end{eqnarray}
Plugging $EW_{L}^{1}(z)$ into equation \eqref{proof.eq3}, we have 
\begin{eqnarray*}
	&&EW_{L}^{1}({k}^{de})-EW_{L}^{1}({k}^{de}-1)-\frac{h}{q(1-p)}\\
	&=&\left[\frac{1-\delta^{{k}^{de}+1}}{1-\delta}\left(a(0)+\frac{h}{p(1-q)(1-\delta)}\right)-\frac{h({k}^{de}+1)}{p(1-q)(1-\delta)}\right]\\
	&&-\left[\frac{1-\delta^{{k}^{de}}}{1-\delta}\left(a(0)+\frac{h}{p(1-q)(1-\delta)}\right)-\frac{h{k}^{de}}{p(1-q)(1-\delta)}\right]-\frac{h}{q(1-p)}\\
	&=&\left[(1+\delta+...+\delta^{{k}^{de}-1}+\delta^{{k}^{de}})\left(a(0)+\frac{h}{p(1-q)(1-\delta)}\right)-\frac{h({k}^{de}+1)}{p(1-q)(1-\delta)}\right]\\
	&&-\left[(1+\delta+...+\delta^{{k}^{de}-1})\left(a(0)+\frac{h}{p(1-q)(1-\delta)}\right)-\frac{h{k}^{de}}{p(1-q)(1-\delta)}\right]-\frac{h}{q(1-p)}\\
	&=&\delta^{{k}^{de}}\left(a(0)+\frac{h}{p(1-q)(1-\delta)}\right)-\frac{h}{p(1-q)(1-\delta)}-\frac{h}{q(1-p)}\\
	&=&\delta^{{k}^{de}}a(0)-\frac{h(1-\delta^{{k}^{de}})}{p(1-q)(1-\delta)}-\frac{h}{q(1-p)}=0
\end{eqnarray*}
This allows us to solve $a(0)$ as
\begin{eqnarray}\label{proof.eq.a0}
EW_{L}^{1}(0)=a(0)&=&\frac{h}{\delta^{{k}^{de}}}\left(\frac{1}{p(1-q)}\left(\frac{1-\delta^{{k}^{de}}}{1-\delta}\right)+\frac{1}{q(1-p)}\right)\nonumber\\
&=&\frac{h}{q(1-p)\delta^{{k}^{de}}}\left(1+\frac{\delta(1-\delta^{{k}^{de}})}{1-\delta}\right)\nonumber\\
&=&\frac{h}{q(1-p)}\left(\frac{1-\delta^{{k}^{de}+1}}{\delta^{{k}^{de}}(1-\delta)}\right),
\end{eqnarray}
where the third equality is because $\frac{1}{p(1-q)}=\frac{\delta}{q(1-p)}$ since $\delta=\frac{q(1-p)}{p(1-q)}$. 

In the following, we consider two subcases: $0<\delta<1$ and $\delta>1$.

\noindent $\bullet$ Subcase 1 with $0<\delta<1$: in this case, we have $p>q$ and
\begin{eqnarray*}
	\frac{1-\delta^{{k}^{de}+1}}{\delta^{{k}^{de}}(1-\delta)}
	=\frac{1+\delta+\delta^2+...+\delta^{{k}^{de}}}{\delta^{{k}^{de}}}\ge {k}^{de},
\end{eqnarray*}
where the inequality is due to the fact that $\delta<1$.
Note that $$ {k}^{de}=\lfloor\frac{q\alpha(r_{HH}-r_{HL})}{h}\rfloor\geq \frac{q\alpha(r_{HH}-r_{HL})}{h}.$$ Thus, we have
\begin{eqnarray*}
	EW_{L}^{1}(0)=\frac{h}{q(1-p)}\left(\frac{1-\delta^{{k}^{de}+1}}{\delta^{{k}^{de}}(1-\delta)}\right)\ge \frac{{k}^{de}h}{q(1-p)}\geq \frac{\alpha(r_{HH}-r_{HL})}{1-p}>\alpha(r_{HH}-r_{HL})\geq \alpha(r_{LH}-r_{LL}).
\end{eqnarray*}
This indicates that when $z=0$, the first L-type supply agent prefers to match with the L-type demand agent immediately rather than wait.

Next, we show that $EW_{L}^{1}(z)$ is a strictly increasing function of $z$ when $0\le z\le {k}^{de}$. When $z={k}^{de}$, by equation \eqref{proof.eq3}, we have $EW_{L}^{1}({k}^{de})-EW_{L}^{1}({k}^{de}-1)=\frac{h}{q(1-p)}>0$. It implies $EW_{L}^{1}({k}^{de})>EW_{L}^{1}({k}^{de}-1)$. When $0\le z<{k}^{de}$, 
\begin{eqnarray}
&& EW_{L}^{1}(z)-EW_{L}^{1}(z-1)\nonumber\\
&=&\left(\frac{1-\delta^{z+1}}{1-\delta}a(0)-\frac{h}{p(1-q)(1-\delta)}\left(z+1-\frac{1-\delta^{z+1}}{1-\delta}\right)\right)\nonumber\\
&&-\left(\frac{1-\delta^{z}}{1-\delta}a(0)-\frac{h}{p(1-q)(1-\delta)}\left(z-\frac{1-\delta^{z}}{1-\delta}\right)\right)\nonumber\\
&=&\left(\frac{1-\delta^{z+1}}{1-\delta}-\frac{1-\delta^{z}}{1-\delta}\right)\left(a(0)+\frac{h}{p(1-q)(1-\delta)}\right)-\frac{h}{p(1-q)(1-\delta)}\nonumber\\
&=&\delta^{z}\left(a(0)+\frac{h}{p(1-q)(1-\delta)}\right)-\frac{h}{p(1-q)(1-\delta)}\nonumber\\
&=&\delta^{z}a(0)-\frac{(1-\delta^{z})h}{p(1-q)(1-\delta)}\nonumber\\
&=&\delta^{z}\frac{h}{q(1-p)}\left(\frac{1-\delta^{{k}^{de}+1}}{\delta^{{k}^{de}}(1-\delta)}\right)-\frac{(1-\delta^{z})h}{p(1-q)(1-\delta)}\nonumber\\
&=&\frac{h}{q(1-p)}\left(\frac{1-\delta^{{k}^{de}+1}}{1-\delta}\right)\frac{\delta^{z}}{\delta^{{k}^{de}}}-\frac{h}{p(1-q)}\left(\frac{1-\delta^{z}}{1-\delta}\right)\nonumber\\
&=&\frac{h}{q(1-p)}\left(1+\delta+\delta^2+...+\delta^{{k}^{de}}\right)\frac{\delta^{z}}{\delta^{{k}^{de}}}-\frac{h}{q(1-p)}\left(\delta+\delta^2+...+\delta^{z}\right)\nonumber\\
&=&\frac{h}{q(1-p)}\left(\delta^{z-{k}^{de}}+\delta^{z-{k}^{de}+1}+\dots+1\right)\nonumber\\
&>&\frac{h}{q(1-p)} \ge 0. \label{proof.eq.ewz}
\end{eqnarray}
Therefore, $EW_{L}^{1}(z)>EW_{L}^{1}(0)>\alpha (r_{LH}-r_{LL})$ for all $0\le z\le {k}^{de}$.  That is, none of the L-type supply agents will choose to wait for H-type supply agents either.

\noindent $\bullet$ Subcase 2 with $\delta>1$,: in this case, we have $p<q$. 

Denote  $EW_L^{k}(z)$ as the expected waiting cost of the $k$-th L-type supply agent in the queue when there are $z$ H-type supply agents present. An L-type supply agent can only be matched if there are no H-type supply agents in the market. Therefore, the  $k$-th L-type supply agent has to wait for an expected time of $EW_L^{1}(z)$ due to the presence of $z$  H-type supply agents. In addition, the $k-1$ preceding L-type supply agents each have to wait as the first in line in the queue, which incurs a cost of $EW_L^{1}(0)$. That is,
\begin{eqnarray}\label{proof.eq.kz}
EW_L^{k}(z)&=&(k-1)EW_L^{1}(0)+EW_L^{1}(z)\\
&=&(k-1)EW_L^{1}(0)+\frac{1-\delta^{z+1}}{1-\delta}EW_L^{1}(0)-\frac{h}{p(1-q)(1-\delta)}\left(z+1-\frac{1-\delta^{z+1}}{1-\delta}\right)\nonumber\\
&=&\left(k-1+\frac{1-\delta^{z+1}}{1-\delta}\right)\frac{h}{q(1-p)}\left(\frac{1-\delta^{{k}^{de}+1}}{\delta^{{k}^{de}}(1-\delta)}\right)
-\frac{h}{p(1-q)(1-\delta)}\left(z+1-\frac{1-\delta^{z+1}}{1-\delta}\right).\nonumber
\end{eqnarray}
The $k$-th L-type supply agent will choose to wait if and only if $EW_L^{k}(z)\le \alpha (r_{LH}-r_{LL})$. Note that $EW_L^{k}(z)$ strictly increases in $k$. As a result, there exists a threshold for the queue length, denoted as $k_L(z)$, such that L-type supply agents will wait in the queue if they are positioned up to the $k_L(z)$-th L-type supply agent. The value of $k_L(z)$ is equal to the largest integer $k$ that satisfies $EW_L^{k}(z)\le \alpha(r_{LH}-r_{LL})$.  This threshold is unique due to the monotonicity of  $EW_L^{k}(z)$ with respect to $k$.

Thus, we can derive the expression of $k_L(z)$ as follows
\begin{eqnarray*}
	 k_L(z)=\bigg\lfloor\frac{\alpha(r_{LH}-r_{LL})+\frac{h}{p(1-q)(1-\delta)}\left(z+1-\frac{1-\delta^{z+1}}{1-\delta}\right)+\left(1-\frac{1-\delta^{z+1}}{1-\delta}\right)\frac{h}{q(1-p)}\left(\frac{1-\delta^{{k}^{de}+1}}{\delta^{{k}^{de}}(1-\delta)}\right)}
	{\left(\frac{1-\delta^{z+1}}{1-\delta}-1\right)\frac{h}{q(1-p)}\left(\frac{1-\delta^{{k}^{de}+1}}{\delta^{{k}^{de}}(1-\delta)}\right)}\bigg\rfloor.
\end{eqnarray*}

Since $\delta>1$, it is easy to verify that 
$$EW_L^{k}(z+1)-EW_L^{k}(z)=\delta^{z+1}\frac{h}{q(1-p)}\left(\frac{1-\delta^{{k}^{de}+1}}{\delta^{{k}^{de}}(1-\delta)}\right)-\frac{h}{p(1-q)(1-\delta)}(1-\delta^{z+1})>0.$$
Therefore, $EW_L^{k}(z)$ increases in $z$. As a consequence, $k_L(z)$ decreases in $z$. 

Now, we show that $ k_L(z)\leq {k}^{de}-z$. Note that
$$EW_L^{1}(0)=\frac{h}{q(1-p)}\left(\frac{1-\delta^{{k}^{de}+1}}{\delta^{{k}^{de}}(1-\delta)}\right)\geq \frac{h}{q(1-p)},$$
where the inequality is due to the fact that $$\frac{1-\delta^{{k}^{de}+1}}{\delta^{{k}^{de}}(1-\delta)}=\frac{1+\delta+\dots+\delta^{{k}^{de}}}{\delta^{{k}^{de}}}\geq 1.$$
Therefore, 
$$EW_L^{{k}^{de}}(0)={k}^{de}EW_L^{1}(0)\geq \frac{{k}^{de}h}{q(1-p)}=\frac{\alpha(r_{HH}-r_{HL})}{1-p}>\alpha(r_{LH}-r_{LL}),$$
which implies that $ k_L(0)\leq {k}^{de}$. 

By equations \eqref{proof.eq3} and (\ref{proof.eq.ewz}), we have $EW_{L}^{1}(z)-EW_{L}^{1}(z-1)\geq \frac{h}{q(1-p)}$ for all $z=1,\dots, {k}^{de}$. Thus,
$$EW_{L}^{1}(z)\geq EW_{L}^{1}(0)+z\frac{h}{q(1-p)}\geq (z+1)\frac{h}{q(1-p)}.$$
As a consequence, 
\begin{align*}
EW_L^{{k}^{de}-z}(z)&=({k}^{de}-z-1)EW_L^{1}(0)+EW_L^{1}(z)\\
&\geq ({k}^{de}-z-1)\frac{h}{q(1-p)}+(z+1)\frac{h}{q(1-p)}\\
&={k}^{de}\frac{h}{q(1-p)}\\
&=\frac{\alpha(r_{HH}-r_{HL})}{1-p}>\alpha(r_{LH}-r_{LL}).
\end{align*}
This implies that $ k_L(z)\leq {k}^{de}-z$. 

To summarize, in the equilibrium, if an H-type demand agent arrives, all supply agents choose to match with her. The H-type demand agent will select an H-type supply agent if one is available; otherwise, she will match with an L-type supply agent. If an L-type demand agent arrives, the first $k^{de}$ H-type supply agents in the waiting line will choose to wait and match with a future H-type demand agent, while the remaining H-type supply agents will match with her. Similarly, the first $k_L(x_H)$ L-type supply agents in the waiting line will opt to wait for a future H-type demand agent, while the other L-type supply agents will match with the L-type demand agent. The L-type demand agent will choose to match with an H-type supply agent if the number of H-type supply agents present exceeds $k^{de}$; otherwise, she will match with an L-type supply agent. Moreover, $k_L(z)$ is decreasing in $z$ and $ k_L(z)\leq {k}^{de}-z$ with $k_L(z)=0$ when $p\geq q$. The corresponding matching outcome in equilibrium is described in Theorem \ref{thm:stable matching of decentralized system}.

Finally, note that any welfare-maximizing matching policy would match an arrived H-type demand agent with an H-type supply agent if there is any. Otherwise, it would match it with an L-type supply agent. It is easy to see that the above equilibrium is the only equilibrium that satisfies these properties. Thus, it is welfare maximizing.

$\Box$

\subsection*{F. Proofs of Propositions \ref{prop: steady states in equilibrium} and \ref{prop:decentralized sw}}
We consider two cases: (1) $p\ge q$, and (2) $p<q$. 
\begin{figure}[ht]
	\centering
	\subfigure[$p\ge q$]{
		\label{fig: larger p}     \includegraphics[width=0.46\textwidth]{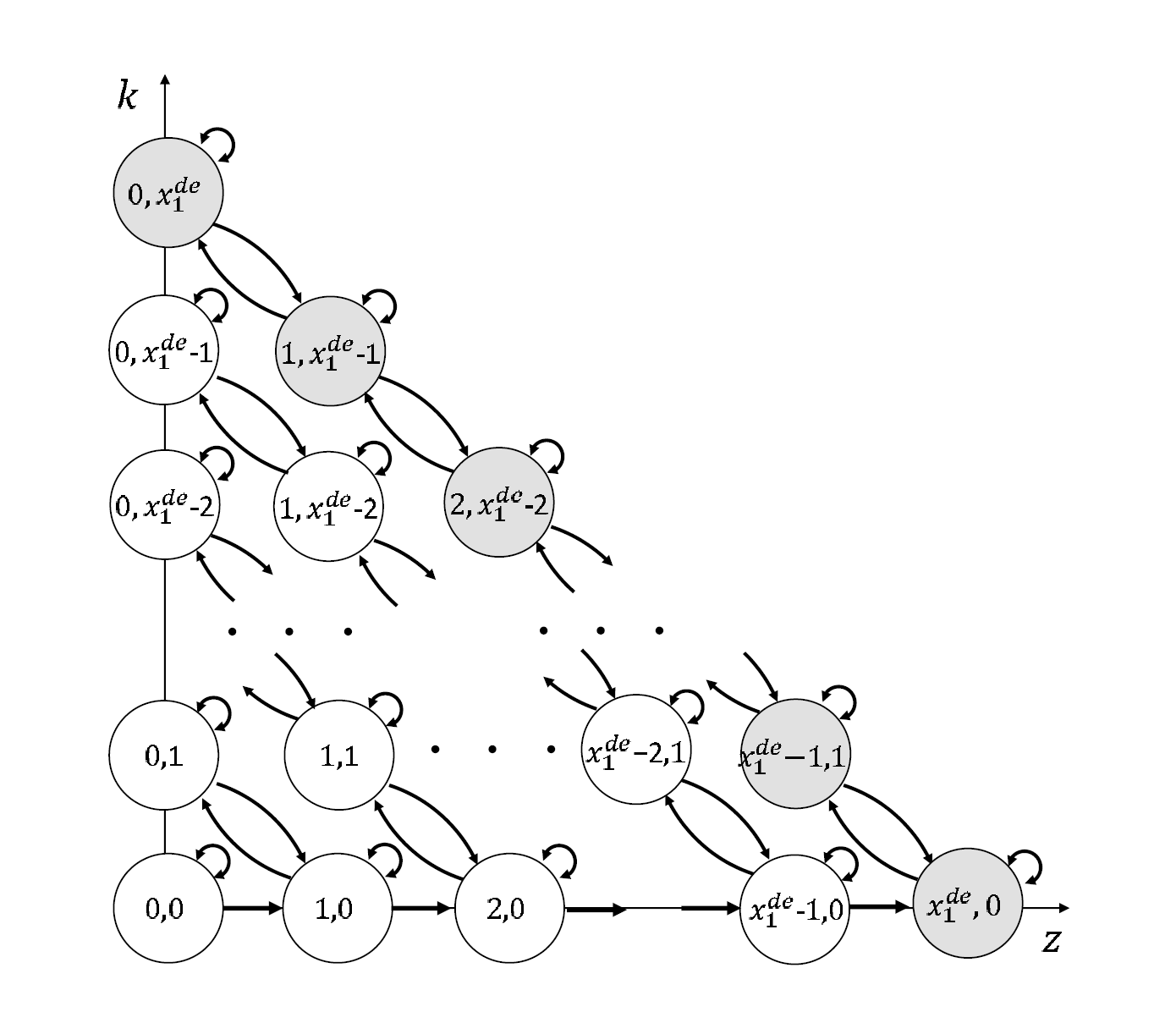}}
	\hspace{0.2in}
	\subfigure[$p<q$]{
		\label{fig: larger q}     \includegraphics[width=0.46\textwidth]{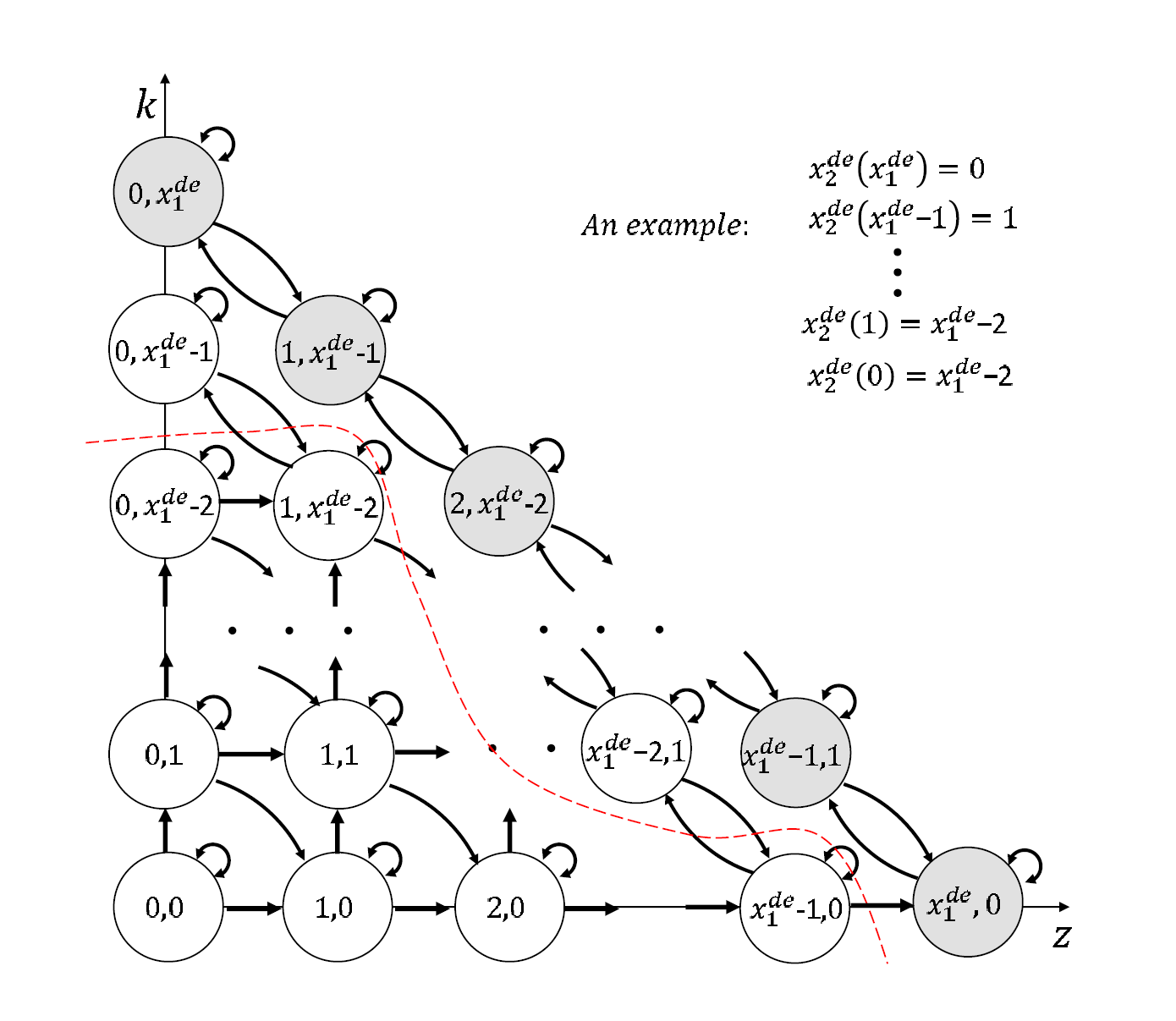}}
	\caption{State transition diagrams of L-type supply agents}
	\label{fig:state transit}
\end{figure}

In the first case, the L-type supply agent will choose to match with either an H-type or an L-type demand agent. Therefore, we can illustrate the transition diagram in Figure \ref{fig: larger p}. It is evident that the only recurrent states are $\{({k}^{de},0), ({k}^{de},1), ({k}^{de}-2,2), \dots, (0,{k}^{de})\}$.

In this second case, there exists a unique position threshold $ k_L(x_{H})$ for every $x_{H}\leq  k^{de}$ such that  L-type agents positioned at $k$ with $k\le  k_L(x_{H})$  will insist on waiting for H-type demand agents, while those positioned at $k$ with $k> k_L(x_{H})$ will accept either H-type or L-type demand agents. Again, we can represent this scenario with a transition diagram; see  Figure \ref{fig: larger q} for an example where $ k_L({k}^{de})=0$, $ k_L({k}^{de}-1)=1$, $\cdots$, $ k_L(1)={k}^{de}-2$, and $ k_L(0)={k}^{de}-2$. In this case, we again observe that the only recurrent states are  $\{({k}^{de},0), ({k}^{de},1), ({k}^{de}-2,2), \dots, (0,{k}^{de})\}$.

Therefore, in both cases, the Markov chain is reducible and there is only one closed communicating class  $\{({k}^{de},0), ({k}^{de},1), ({k}^{de}-2,2), \dots, (0,{k}^{de})\}$.

\begin{figure}[htpb]
	\centering
	\includegraphics[width=0.8\textwidth]{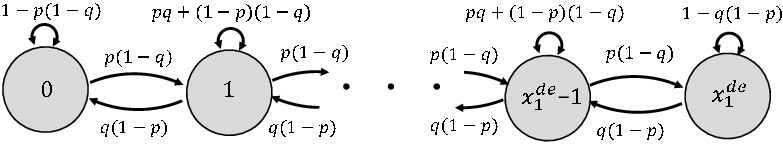}\\
	\caption{State transition diagram of H-type supply agents}\label{fig.dec.lost sales H type queue state transition}
\end{figure}

Now, we can compute the steady state distribution for those recurrent states. We again illustrate the transition diagram for these recurrent states in Figure \ref{fig.dec.lost sales H type queue state transition}. The balance equations are  
\begin{eqnarray*}
	\phi^{de}({k}^{de},0)&=&p(1-q)\phi^{de}({k}^{de}-1,1)+(1-q(1-p))\phi^{de}({k}^{de},0),\\
	\phi^{de}({k}^{de}-1,1)&=&p(1-q)\phi^{de}({k}^{de},0)+[p q+(1-p)(1-q)]({k}^{de}-1,1)+q(1-p)\phi^{de}({k}^{de}-2,2),\\
	\phi^{de}({k}^{de}-2,2)&=&p(1-q)\phi^{de}({k}^{de}-1,1)+[p q+(1-p)(1-q)]({k}^{de}-2,2)+q(1-p)\phi^{de}({k}^{de}-3,3),\\
	\dots\\
	\phi^{de}(1,{k}^{de}-1)&=&p(1-q)\phi^{de}(2,{k}^{de}-2)+[p q+(1-p)(1-q)](1,{k}^{de}-1)+q(1-p)\phi^{de}(0,{k}^{de}),\\
	\phi^{de}(0, {k}^{de})&=&[1-p(1-q)]\phi^{de}(0, {k}^{de}) +q(1-p)\phi^{de}(1, {k}^{de}-1),\\
	\phi^{de}({k}^{de},0)&+ &\phi^{de}({k}^{de},1)+ \dots+ \phi^{de}(0,{k}^{de})=1.
\end{eqnarray*}
To solve the above equations and obtain the steady state distribution, we again consider two cases: (1) $\delta\neq 1$ and (2) $\delta=1$. 

When $\delta\neq 1$, the solution to the above equations provides us with the steady state distribution
$$\phi^{de}({k}^{de}-i,i)=\frac{(1-\delta)\delta^i}{1-\delta^{ k^{de}+1}}, \quad \quad 0\le l\le  k^{de}$$
and social welfare
$$W^{de}=W({k}^{de})=qr_{HH}+(1-q)r_{LL}+p(1-q)(r_{HH}-r_{LL})(1-\delta)-p(1-q)r\frac{1-\delta}{1-\delta^{k^{de}+1}}-k^{de}h.$$

When $\delta= 1$, the solution to the above equations provides us with the steady state distribution
$$\phi^{de}({k}^{de}-i,i)=\frac{1}{k^{de}+1}, \quad \quad 0\le l\le  k^{de}$$
and social welfare
$$W^{de}=W({k}^{de})=pr_{HH}+(1-p)r_{LL}-\frac{p(1-p)r}{k^{de}+1}-k^{de}h.$$

$\Box$

\subsection*{F. Proofs of Propositions \ref{prop:threshold comparison} and \ref{prop: coordination}}

Note that ${k}^{de}=\left\lfloor \frac{q \alpha (r_{HH}-r_{HL})}{h}\right\rfloor$ and $k^{ce}$ are integers that are independent of $\alpha$. Then, $k^{de}<k^{ce}$ if and only if $\frac{q \alpha (r_{HH}-r_{HL})}{h}< k^{ce}$, i.e., $\alpha<\frac{hk^{ce}}{q(r_{HH}-r_{HL})}$. Similarly, $k^{de}>k^{ce}$ if and only if $\frac{q \alpha (r_{HH}-r_{HL})}{h}\geq k^{ce}+1$, i.e.,$\alpha\geq \frac{h(k^{ce}+1)}{q(r_{HH}-r_{HL})}$. When $\alpha\in[\frac{hk^{ce}}{q(r_{HH}-r_{HL})}, \frac{h(k^{ce}+1)}{q(r_{HH}-r_{HL})})$, $ \frac{q \alpha (r_{HH}-r_{HL})}{h}\in [k^{ce}, k^{ce}+1)$, i.e.,  $k^{de}=k^{ce}$. In this case, the centralized and decentralized settings are identical. 

$\Box$

\subsection*{G. Proofs of Proposition \ref{prop:comparison in centralized setting}}
Note that $k^{ce}_{FB}=\left\lfloor\sqrt{\frac{p(1-p)r}{2h}}\right\rfloor$ and $k^{ce}=\left\lfloor\frac{-h+\sqrt{h^2+4hp(1-p)r}}{2h}\right\rfloor$ when $p=q$. It is easy to see that when $p(1-p)r<2h$, $k^{ce}_{FB}=0$.  In this case, we have 
\begin{align*}
    p(1-p)r<2h & \iff 2h\sqrt{2h p(1-p)r}-2h p(1-p)r>0\\
     & \iff h^2+2h\sqrt{2h p(1-p)r}+2h p(1-p)r>h^2+4h p(1-p)r \\
     & \iff h+\sqrt{2h p(1-p)r}>\sqrt{h^2+4h p(1-p)r}\\
     &\iff \sqrt{\frac{p(1-p)r}{2h}}>\frac{-h+\sqrt{h^2+4hp(1-p)r}}{2h}.
\end{align*}
This implies that $k^{ce}\leq k^{ce}_{FB}=0$, i.e., $k^{ce}=0$.

Similarly, when $p(1-p)r\geq 2h$, we have $k^{ce}\geq k^{ce}_{FB}\geq 1$.  In this case, we also have 
\begin{align*}
    p(1-p)r\geq 2h & \iff 4(p(1-p)r)^2-8hp(1-p)r\geq 0\\
     & \iff 4(p(1-p)r)^2-4hp(1-p)r+h^2\geq h^2+4hp(1-p)r\\
     & \iff 2p(1-p)r-h\geq\sqrt{h^2+4h p(1-p)r}\\
     &\iff \frac{p(1-p)r}{h}\geq \frac{-h+\sqrt{h^2+4hp(1-p)r}}{2h}+1.
\end{align*}
This implies that $\frac{p(1-p)r}{h}\geq k^{ce}+1$. Moreover, 
\begin{align*}
  p(1-p)r\geq 2h & \iff -4h\sqrt{2h p(1-p)r}+4h p(1-p)r\geq 0\\
     & \iff h^2-4h\sqrt{2h p(1-p)r}+8h p(1-p)r\geq h^2+4h p(1-p)r \\
     & \iff 2\sqrt{2h p(1-p)r}-h\geq \sqrt{h^2+4h p(1-p)r}\\
     &\iff 2\sqrt{\frac{p(1-p)r}{2h}}\geq \frac{-h+\sqrt{h^2+4hp(1-p)r}}{2h}+1.
\end{align*}
This implies that $2k^{ce}_{FB}\geq k^{ce}$.

Note that 
\begin{align*}
    W^{ce}_{FB}&=pr_{HH}+(1-p)r_{LL}-\frac{p(1-p)r}{2k^{ce}_{FB}+1}-\frac{2k^{ce}_{PA}(k^{ce}_{FB}+1)}{2k^{ce}_{FB}+1}h,\\
    W^{ce}_{OB}&=pr_{HH}+(1-p)r_{LL}-\frac{p(1-p)r}{k^{ce}+1}-k^{ce}h, \\
    W_{IA}^{ce}&=pr_{HH}+(1-p)r_{LL}-p(1-p)r.
\end{align*}
Thus, 
\begin{align*}
W^{ce}_{FB}- W^{ce}_{OB}&=\frac{2k^{ce}_{FB}-k^{ce}}{(k^{ce}+1)(2k^{ce}_{FB}+1)}p(1-p)r-\frac{2(k^{ce}_{FB})^2-2k^{ce}_{FB}k^{ce}+2k^{ce}_{FB}-k^{ce}}{2k^{ce}_{FB}+1}h,\\
&=\frac{2k^{ce}_{FB}-k^{ce}}{2k^{ce}_{FB}+1}(\frac{p(1-p)r}{k^{ce}+1}-h)+\frac{2k^{ce}_{FB}(k^{ce}-k^{ce}_{FB})}{2k^{ce}_{FB}+1}h.
\end{align*}
When $p(1-p)r<2h$, we have $k^{ce}= k^{ce}_{FB}=0$ and $W^{ce}_{FB}- W^{ce}_{OB}=0$. When $p(1-p)r\geq 2h$, $k^{ce}\geq k^{ce}_{FB}$, $\frac{p(1-p)r}{h}\geq k^{ce}+1$, and $k^{ce}\leq 2k^{ce}_{FB}$, which implies $W^{ce}_{FB}- W^{ce}_{OB}\geq 0$. Thus, $W^{ce}_{FB}\geq  W^{ce}_{OB}$.

Similarly, 
\begin{align*}
W^{ce}_{OB}-W^{ce}_{NB}&=\frac{k^{ce}}{k^{ce}+1}p(1-p)r-k^{ce}h,
\end{align*}
which is 0 when $p(1-p)r< 2h$ and larger than or equal to 0 when $p(1-p)r\geq 2h$ since $\frac{p(1-p)r}{h}\geq k^{ce}+1$ in this case. Therefore, $W^{ce}_{OB}\geq W^{ce}_{NB}.$

Finally,
\begin{align*}
(W^{ce}_{OB}-W^{ce}_{NB})-(W^{ce}_{FB}- W^{ce}_{OB})&=(k^{ce}-\frac{2k^{ce}_{FB}-k^{ce}}{2k^{ce}_{FB}+1})(\frac{p(1-p)r}{k^{ce}+1}-h)-\frac{2k^{ce}_{FB}(k^{ce}-k^{ce}_{FB})}{2k^{ce}_{FB}+1}h.
\end{align*}
When $p(1-p)r<2h$, we have $k^{ce}= k^{ce}_{FB}=0$ and thus $(W^{ce}_{OB}-W^{ce}_{NB})-(W^{ce}_{FB}- W^{ce}_{OB})=0$.
When $2h\leq p(1-p)r<6h$, it is easy to verify that $k^{ce}=k^{de}=1$ and $(W^{ce}_{OB}-W^{ce}_{NB})-(W^{ce}_{FB}- W^{ce}_{OB})=0$.
When $p(1-p)r<2h$, we have
\begin{align*}
    p(1-p)r\geq 6h & \iff (p(1-p)r)^2-6hp(1-p)r\geq 0\\
     & \iff (p(1-p)r)^2-2hp(1-p)r+h^2\geq h^2+4hp(1-p)r\\
     & \iff p(1-p)r-h\geq\sqrt{h^2+4h p(1-p)r}\\
     &\iff \frac{p(1-p)r}{2h}\geq \frac{-h+\sqrt{h^2+4hp(1-p)r}}{2h}+1.
\end{align*}
This implies that $\frac{p(1-p)r}{2h}\geq k^{ce}+1$ and $$\frac{p(1-p)r}{k^{ce}+1}-h\geq h.$$ Therefore, 
\begin{align*}
(W^{ce}_{OB}-W^{ce}_{NB})-(W^{ce}_{FB}- W^{ce}_{OB})&=\frac{2k^{ce}_{FB}k^{ce}-2k^{ce}_{FB}+2k^{ce}}{2k^{ce}_{FB}+1}(\frac{p(1-p)r}{k^{ce}+1}-h)-\frac{2k^{ce}_{FB}(k^{ce}-k^{ce}_{FB})}{2k^{ce}_{FB}+1}h \\
&\geq \frac{2k^{ce}_{FB}k^{ce}-2k^{ce}_{FB}+2k^{ce}}{2k^{ce}_{FB}+1}h-\frac{2k^{ce}_{FB}(k^{ce}-k^{ce}_{FB})}{2k^{ce}_{FB}+1}h\\
&=\frac{2(k^{ce}_{FB})^2-2k^{ce}_{FB}+2k^{ce}}{2k^{ce}_{FB}+1}h \\
&\geq 0,
\end{align*}
where the first and last inequalities are due to the fact that $k^{ce}_{FB}\leq k^{ce}$ when $p(1-p)r\geq 2h$.

To sum up, we have $W^{ce}_{OB}-W^{ce}_{NB}\geq W^{ce}_{FB}- W^{ce}_{OB}$.

$\Box$

\subsection*{H. Proofs of Proposition \ref{prop:comparison in decentralized setting}}
Note that 
\begin{align*}
    W^{de}_{FB}&=pr_{HH}+(1-p)r_{LL}-\frac{p(1-p)r}{2k^{de}+1}-\frac{2k^{de}(k^{de}+1)}{2k^{de}+1}h,\\
    W^{de}_{OB}&=pr_{HH}+(1-p)r_{LL}-\frac{p(1-p)r}{k^{de}+1}-k^{de}h, \\
    W_{IA}^{de}&=pr_{HH}+(1-p)r_{LL}-p(1-p)r.
\end{align*}
Thus, 
\begin{align*}
W^{de}_{FB}- W^{de}_{OB}&=\frac{k^{de}}{(k^{de}+1)(2k^{de}+1)}p(1-p)r-\frac{k^{de}}{2k^{de}+1}h,\\
W^{de}_{OB}-W^{de}_{NB}&=\frac{k^{de}}{k^{de}+1}p(1-p)r-k^{de}h.
\end{align*}
It is easy to see that $W^{ce}_{FB}\geq W^{ce}_{OB}\geq W^{ce}_{NB}$ when $k^{de}\leq \max\{0, \frac{p(1-p)r}{h}-1\}$ and $W^{ce}_{FB}\leq W^{ce}_{OB}\leq W^{ce}_{NB}$ when $k^{de}\geq \max\{0, \frac{p(1-p)r}{h}-1\}$.

When  $\alpha\leq \frac{(1-p)r}{2(r_{HH}-r_{HL})}$, consider two cases. (1) If $h\geq p\alpha(r_{HH}-r_{HL})$, we have $\frac{p \alpha (r_{HH}-r_{HL})}{h}<1$ and thus ${k}^{de}=\left\lfloor \frac{q \alpha (r_{HH}-r_{HL})}{h}\right\rfloor=0\leq \max\{0, \frac{p(1-p)r}{h}-1\}$. (2) If $h< p\alpha(r_{HH}-r_{HL})$, we have 
$$p\alpha(r_{HH}-r_{HL})-p(1-p)r+h< 2 p\alpha(r_{HH}-r_{HL})-p(1-p)r\leq 0, $$
where the last inequality is due to the fact that $\alpha\leq \frac{(1-p)r}{2(r_{HH}-r_{HL})}$. This implies that $\frac{p \alpha (r_{HH}-r_{HL})}{h}\leq  \frac{p(1-p)r}{h}-1$ and thus ${k}^{de}=\left\lfloor \frac{q \alpha (r_{HH}-r_{HL})}{h}\right\rfloor\leq \frac{p(1-p)r}{h}-1$. In both cases, we have $k^{de}\leq \max\{0, \frac{p(1-p)r}{h}-1\}$  and therefore $W^{ce}_{FB}\geq W^{ce}_{OB}\geq W^{ce}_{NB}$.

When $\alpha\geq \frac{(1-p)r}{(r_{HH}-r_{HL})}$, we have $\frac{p \alpha (r_{HH}-r_{HL})}{h}> \frac{p(1-p)r}{h}$, which implies that ${k}^{de}=\left\lfloor \frac{q \alpha (r_{HH}-r_{HL})}{h}\right\rfloor\geq \frac{p(1-p)r}{h}-1$. As a consequence, when  $\alpha\geq \frac{(1-p)r}{(r_{HH}-r_{HL})}$, $W^{ce}_{FB}\leq W^{ce}_{OB}\leq W^{ce}_{NB}$.

When $\alpha\in (\frac{(1-p)r}{2(r_{HH}-r_{HL})}, \frac{(1-p)r}{(r_{HH}-r_{HL})})$, if $h\in\{h: \left\lfloor \frac{q \alpha (r_{HH}-r_{HL})}{h}\right\rfloor\leq \max\{0, \frac{p(1-p)r}{h}\}\}$, $W^{ce}_{FB}\geq W^{ce}_{OB}\geq W^{ce}_{NB}$. If $h\in\{h: \left\lfloor \frac{q \alpha (r_{HH}-r_{HL})}{h}\right\rfloor\leq \max\{0, \frac{p(1-p)r}{h}\}\}$, $W^{ce}_{FB}\leq W^{ce}_{OB}\leq W^{ce}_{NB}$.
\hfill$\Box$

\subsection*{I. Numerical Results for Varying $p$ and $q$}\label{appendix: numberical results}

 In this section, we report additional numerical examples to demonstrate the robustness of the results discussed in Section 5. The following numerical results are conducted under the parameters of $\alpha=0.2$ and $\mathbf{r}=(800, 50, 50, 0)$.

The impact of $h$ on social welfare in decentralized systems as $p$ and $q$ change: 
\begin{figure}[htbp]
    \centering
\subfigure[$p=0.5$ and $q=0.1$\label{subfig:a}]{
\includegraphics[width=0.28\textwidth]      {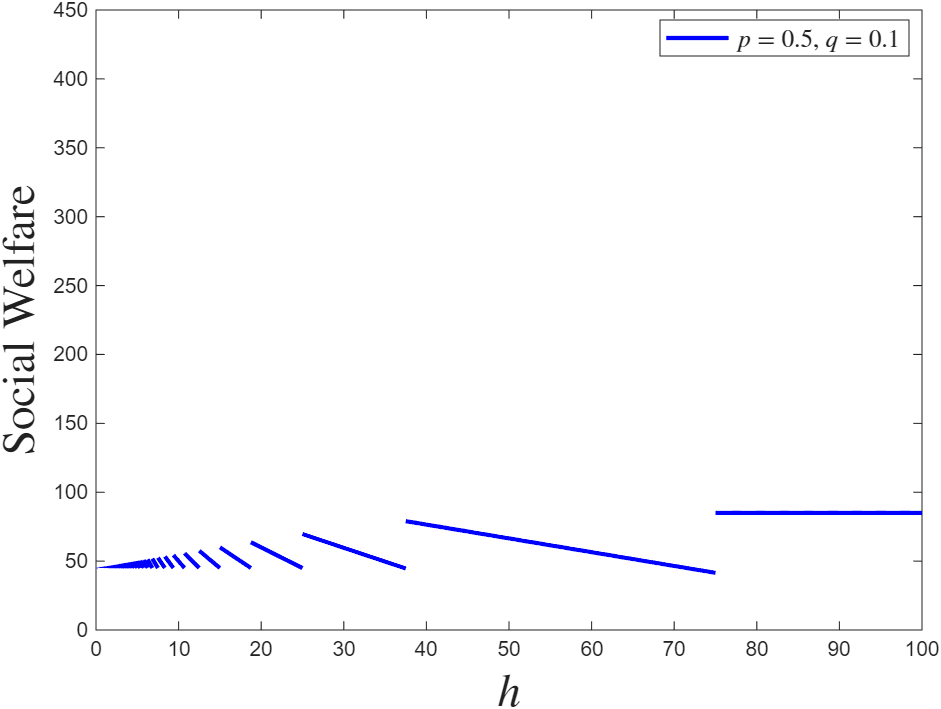}}
\hfill
\subfigure[$p=0.5$ and $q=0.2$\label{subfig:b}]{    \includegraphics[width=0.28\textwidth]{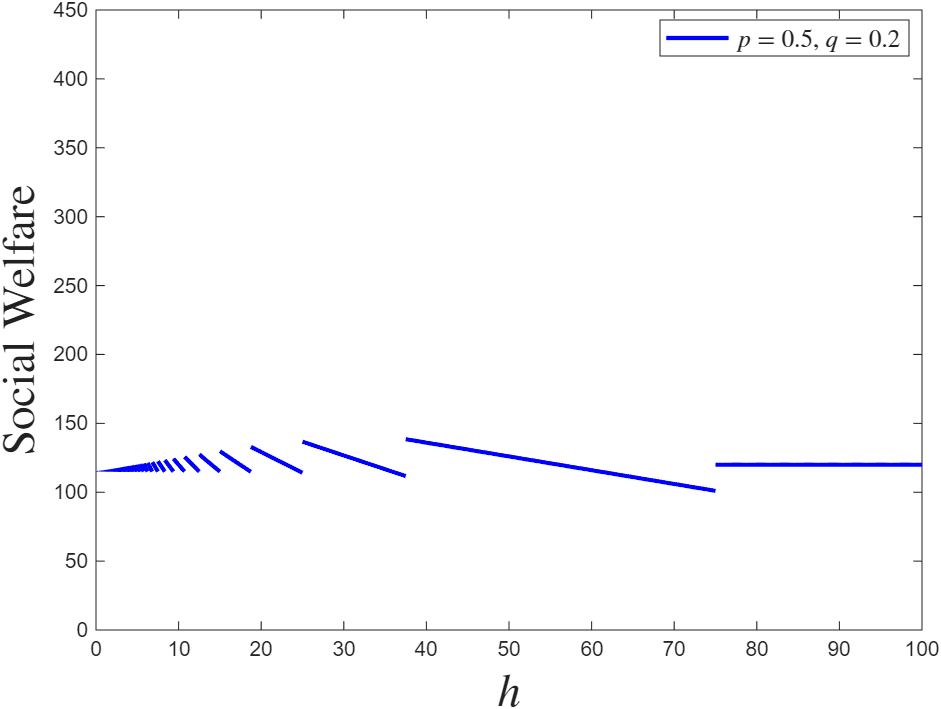}}
\hfill
\subfigure[$p=0.5$ and $q=0.3$\label{subfig:c}]{
\includegraphics[width=0.28\textwidth]{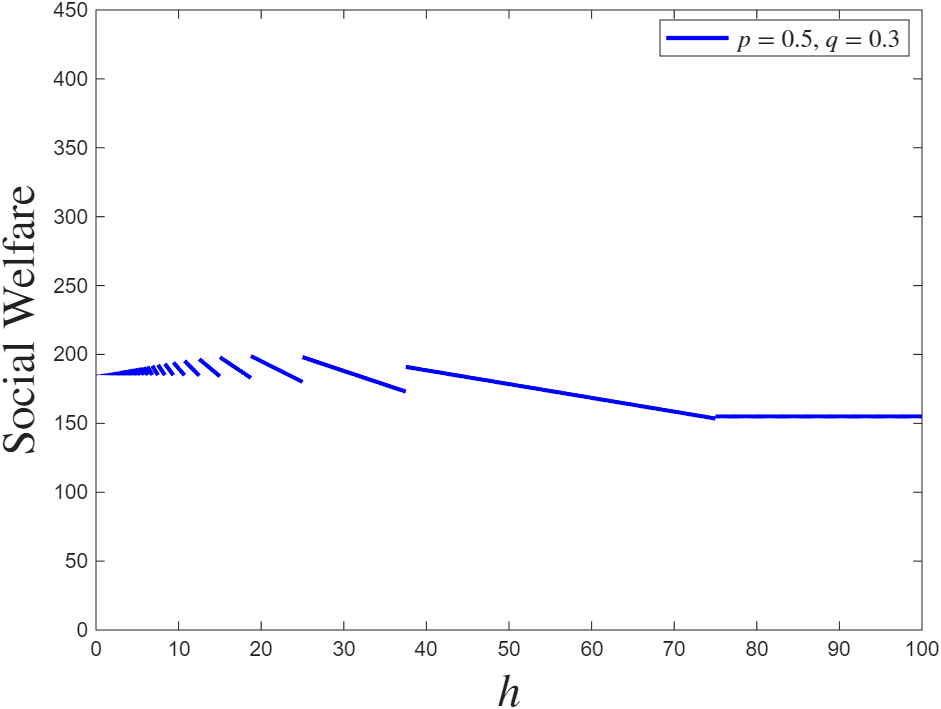}}
\vspace{8pt}
\subfigure[$p=0.5$ and $q=0.4$\label{subfig:a}]{
\includegraphics[width=0.28\textwidth]      {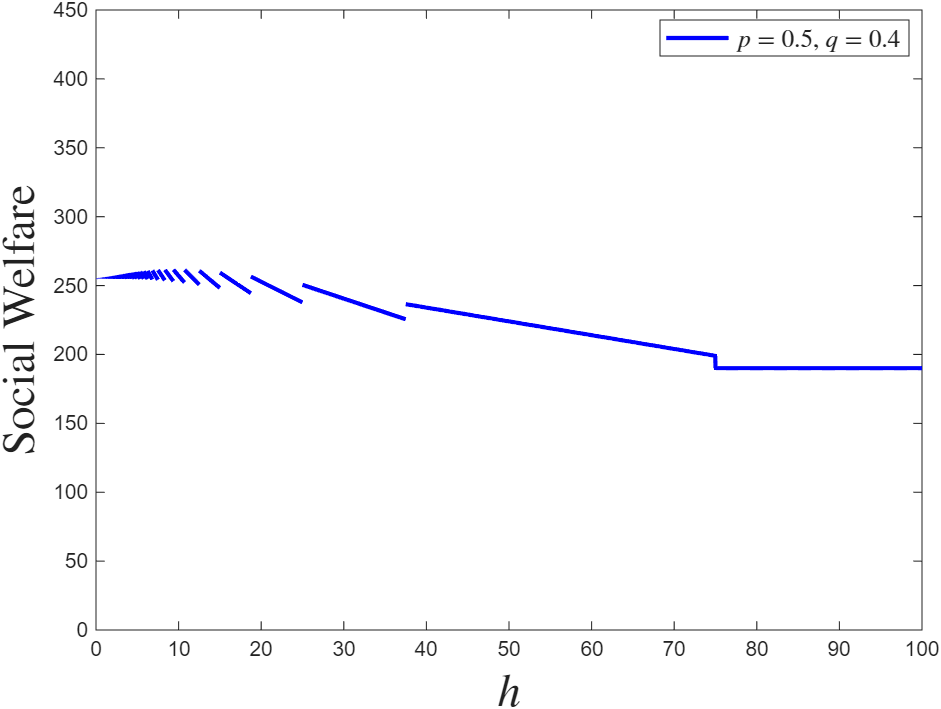}}
\hfill
\subfigure[$p=0.5$ and $q=0.5$\label{subfig:b}]{    \includegraphics[width=0.28\textwidth]{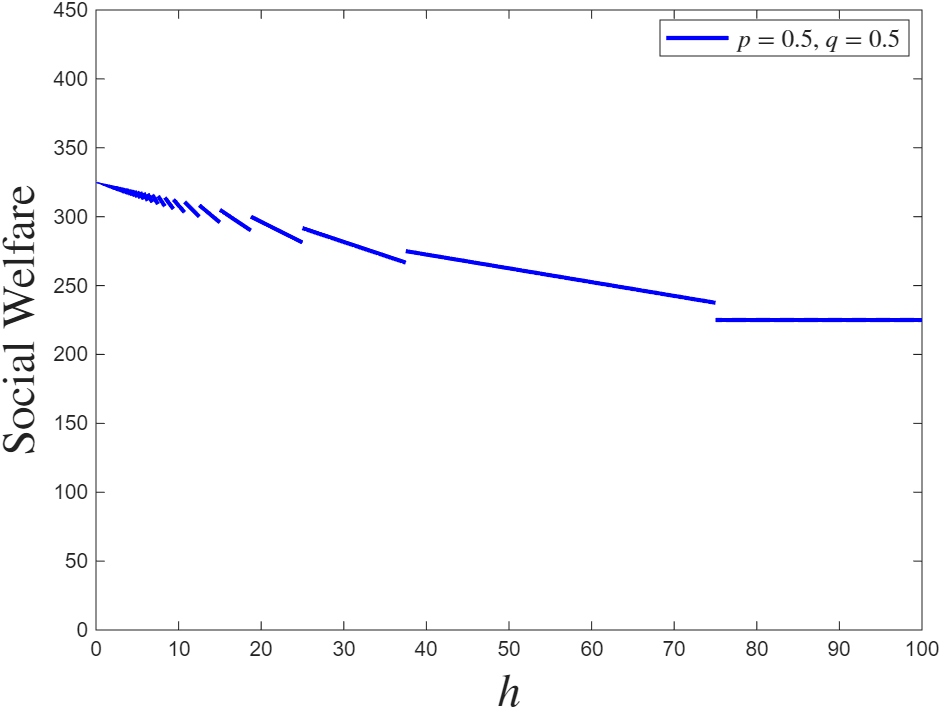}}
\hfill
\subfigure[$p=0.5$ and $q=0.6$\label{subfig:c}]{
\includegraphics[width=0.28\textwidth]{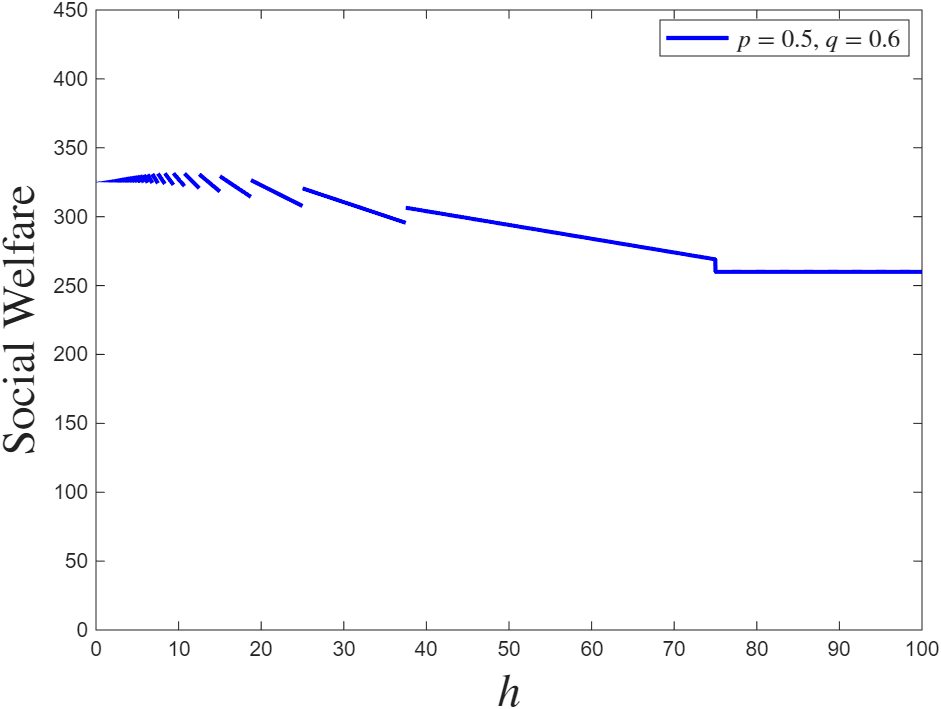}}
\vspace{8pt}
\subfigure[$p=0.5$ and $q=0.7$\label{subfig:a}]{
\includegraphics[width=0.28\textwidth]      {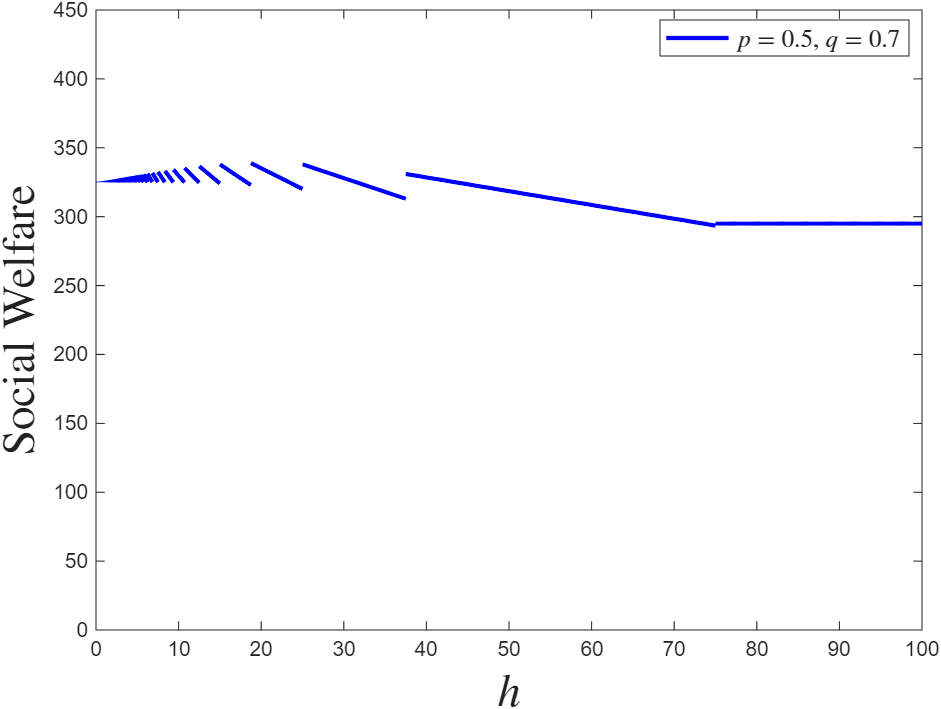}}
\hfill
\subfigure[$p=0.5$ and $q=0.8$\label{subfig:b}]{    \includegraphics[width=0.28\textwidth]{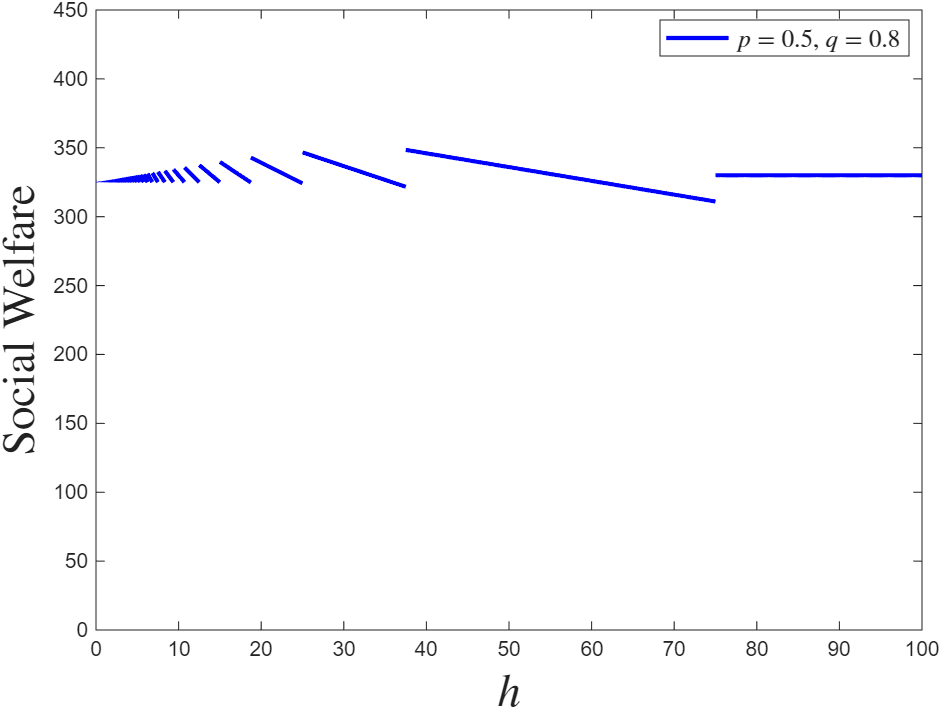}}
\hfill
\subfigure[$p=0.5$ and $q=0.9$\label{subfig:c}]{
\includegraphics[width=0.28\textwidth]{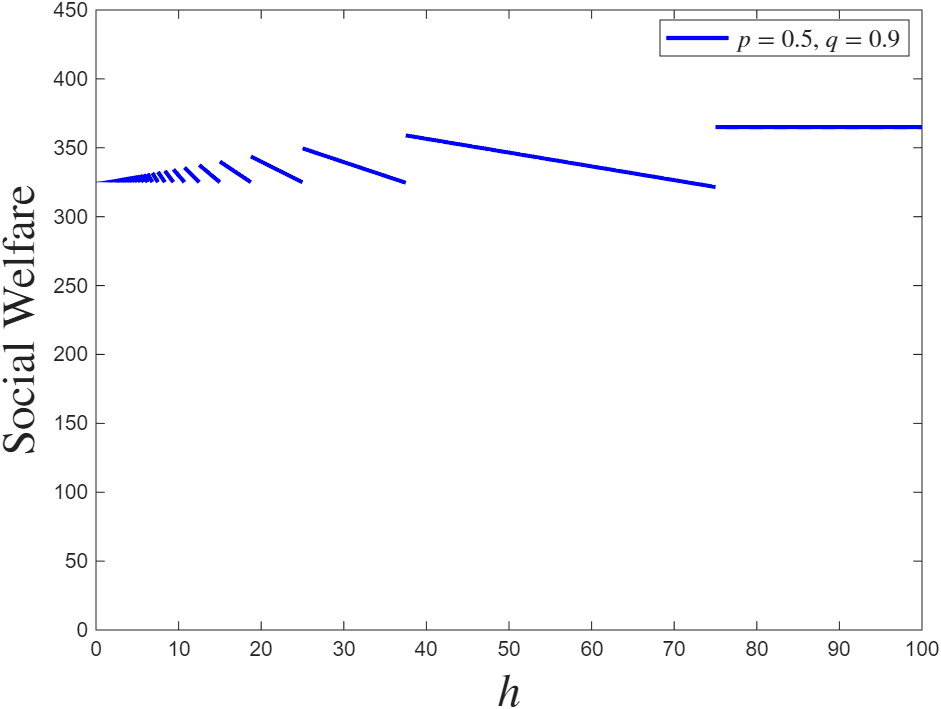}}
\caption{Impact of $h$ on social welfare in the decentralized system when $p=0.5$ and varying $q$}
 \end{figure}

 \begin{figure}[htbp]
 \centering
 \subfigure[$p=0.3$ and $q=0.7$\label{subfig:a}]{
\includegraphics[width=0.35\textwidth]   {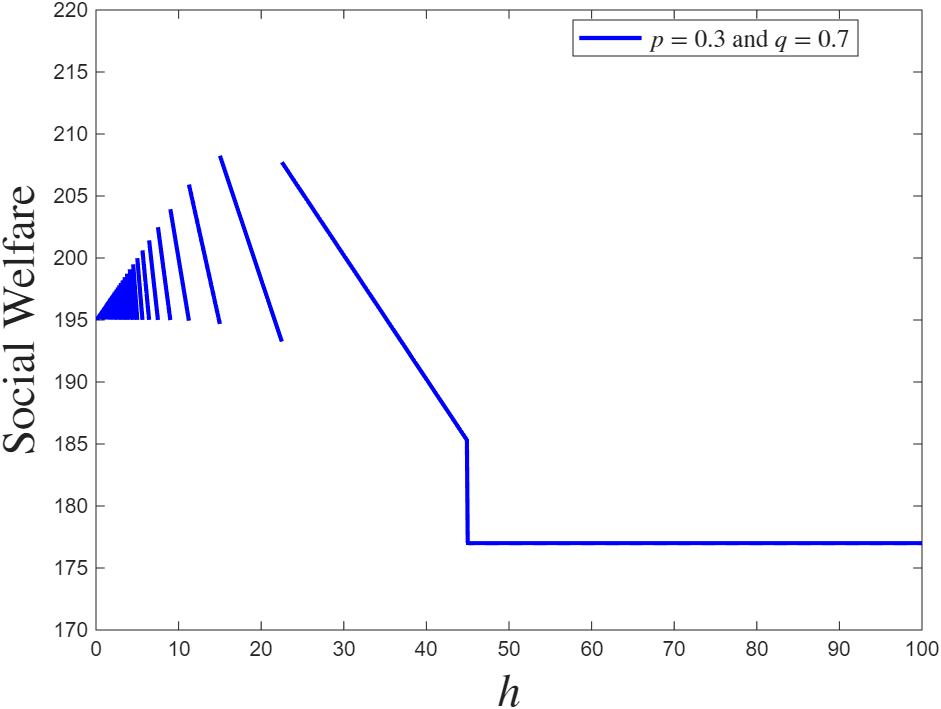}}
\subfigure[$p=0.7$ and $q=0.3$\label{subfig:c}]{
\includegraphics[width=0.35\textwidth]{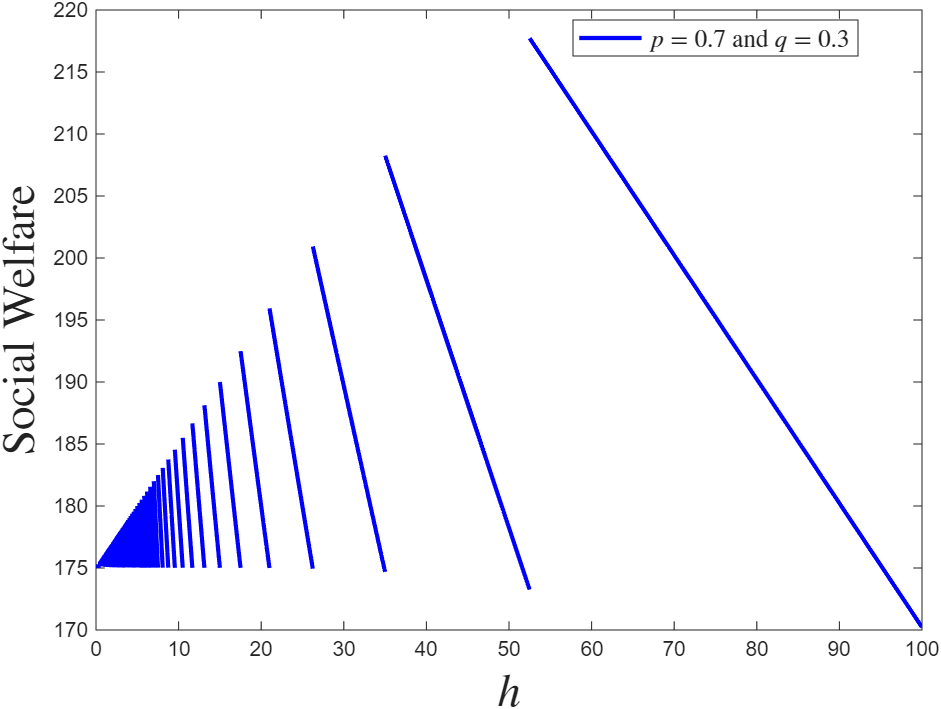}}\\
\vspace{8pt}
\subfigure[$p=0.2$ and $q=0.8$\label{subfig:d}]{
\includegraphics[width=0.35\textwidth]   {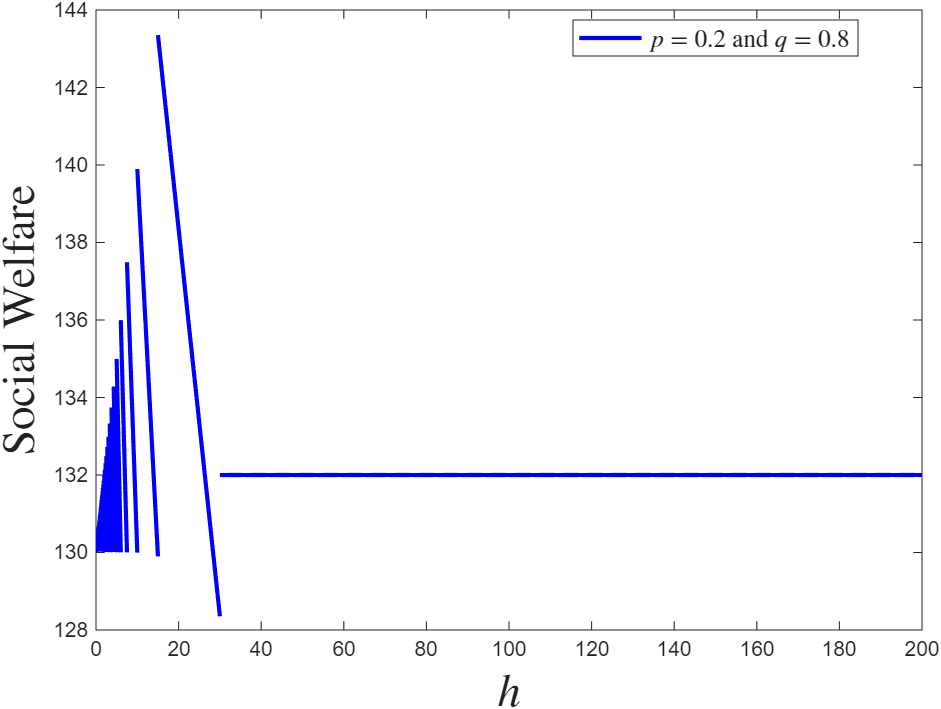}}
\subfigure[$p=0.8$ and $q=0.2$\label{subfig:e}]{
\includegraphics[width=0.35\textwidth]{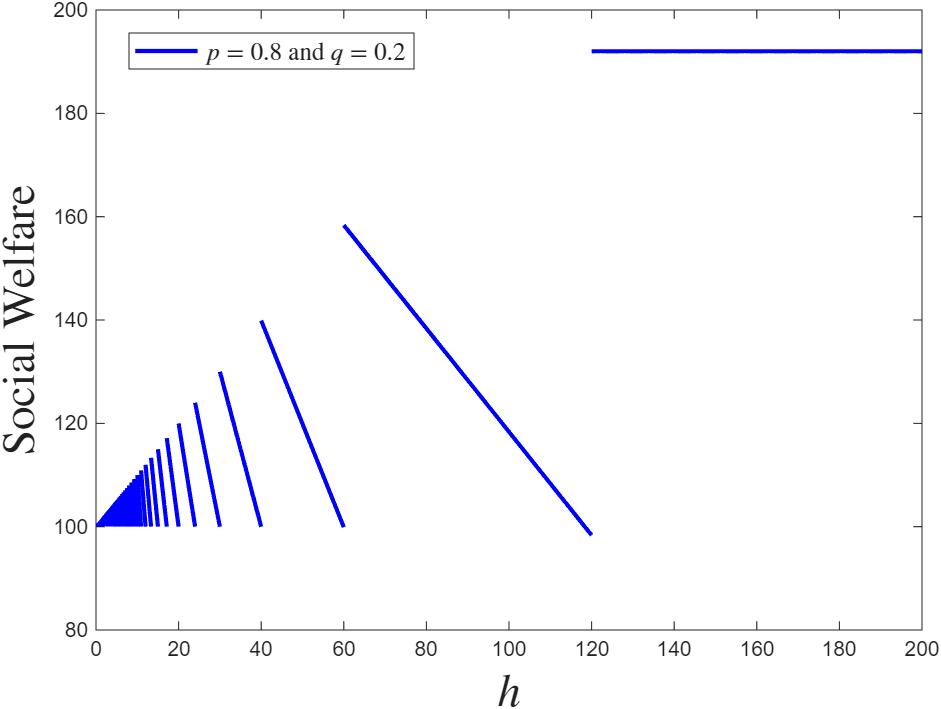}}\\
\vspace{8pt}
\subfigure[$p=0.1$ and $q=0.9$\label{subfig:d}]{
\includegraphics[width=0.35\textwidth]   {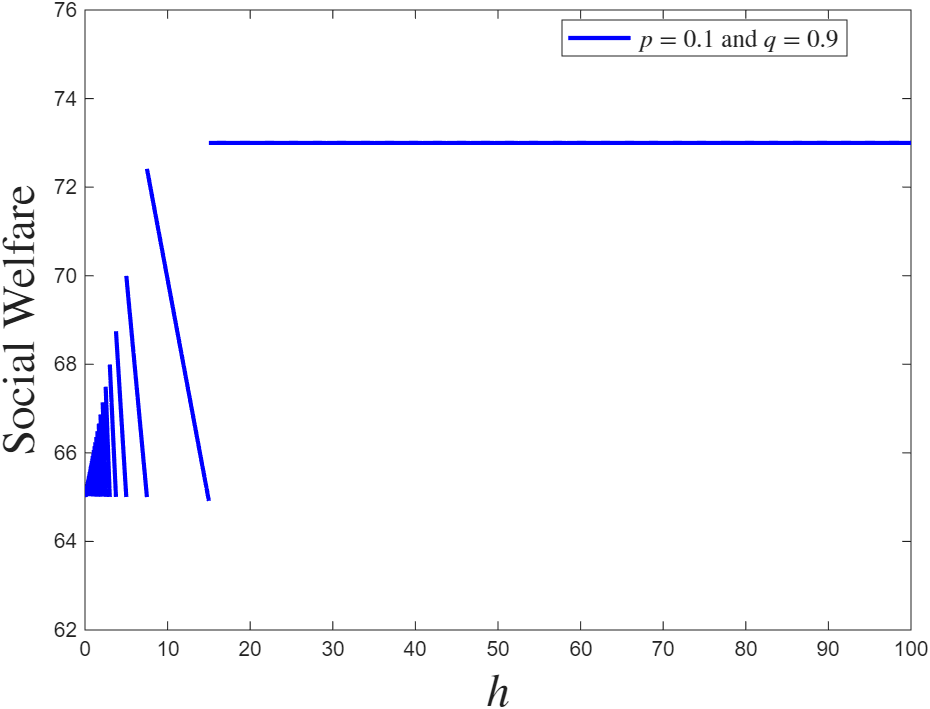}}
\subfigure[$p=0.9$ and $q=0.1$\label{subfig:e}]{
\includegraphics[width=0.35\textwidth]{figures/figures_appendix/decentralized_sw_p0.8_q0.2.png}}
\caption{Impact of $h$ on social welfare in the decentralized system when $p+q=1$}
 \end{figure}

\begin{figure}[htbp]
    \centering
\subfigure[$p=q=0.1$\label{subfig:a}]{
\includegraphics[width=0.28\textwidth]      {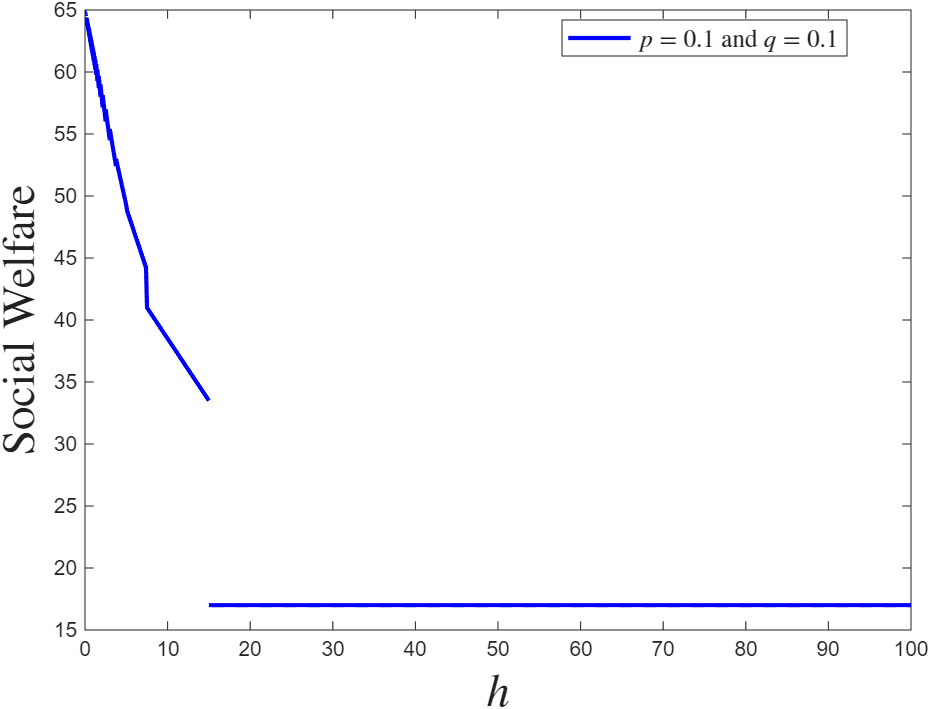}}
\hfill
\subfigure[$p=q=0.2$\label{subfig:b}]{    \includegraphics[width=0.28\textwidth]{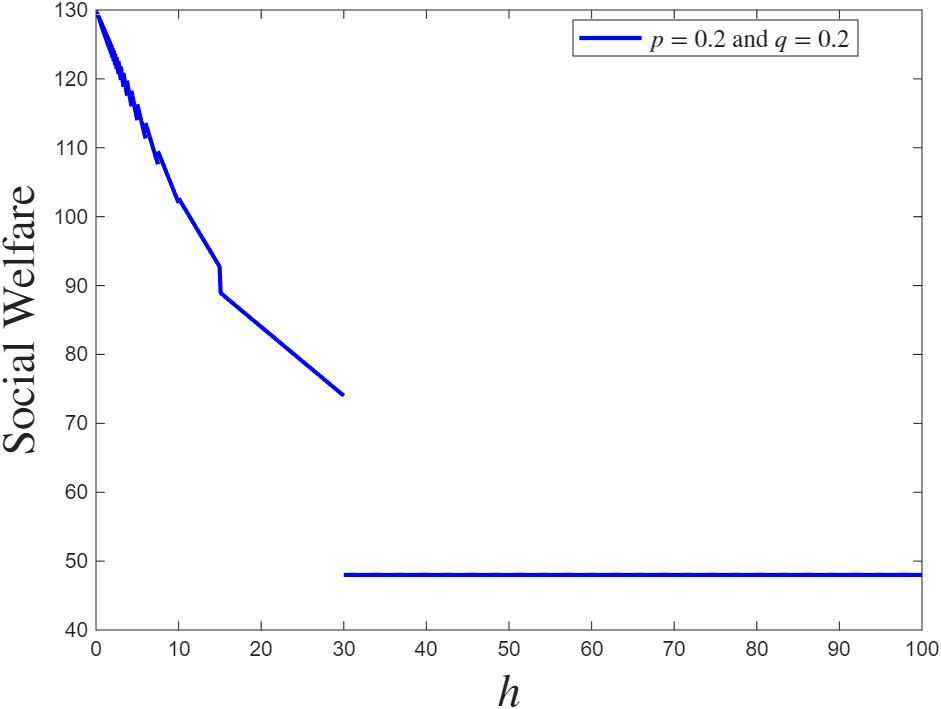}}
\hfill
\subfigure[$p=q=0.3$\label{subfig:c}]{
\includegraphics[width=0.28\textwidth]{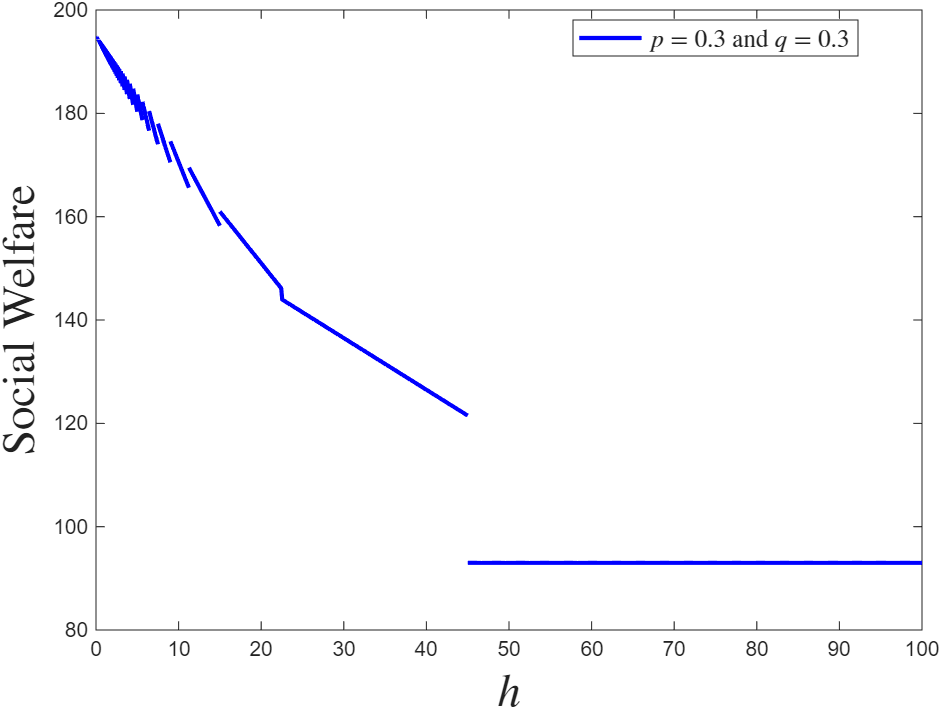}}
\vspace{8pt}
\subfigure[$p=q=0.4$\label{subfig:a}]{
\includegraphics[width=0.28\textwidth]      {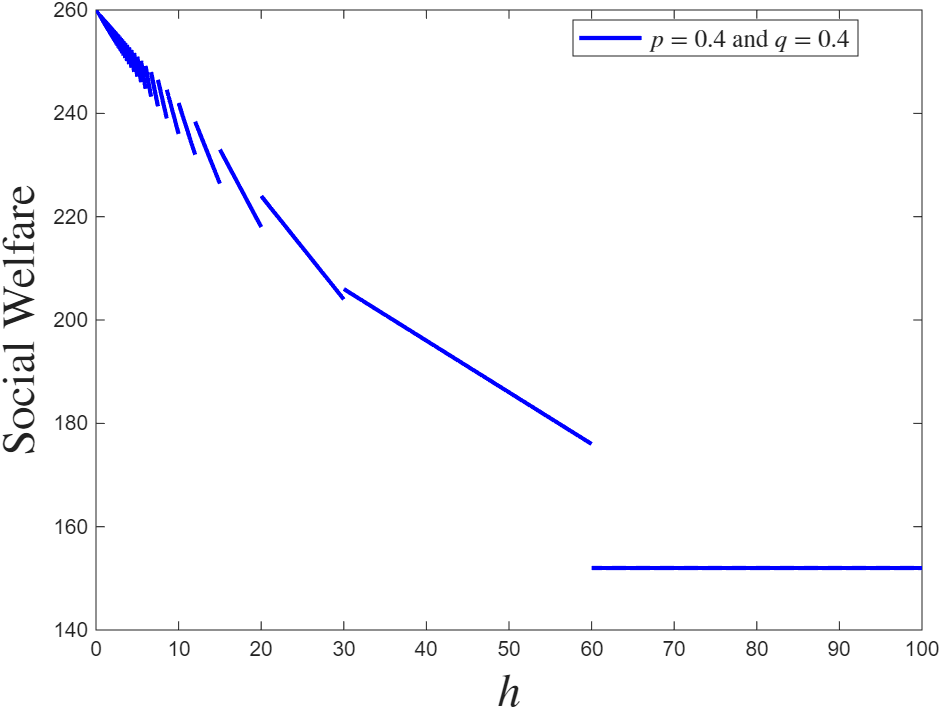}}
\hfill
\subfigure[$p=q=0.5$\label{subfig:b}]{    \includegraphics[width=0.28\textwidth]{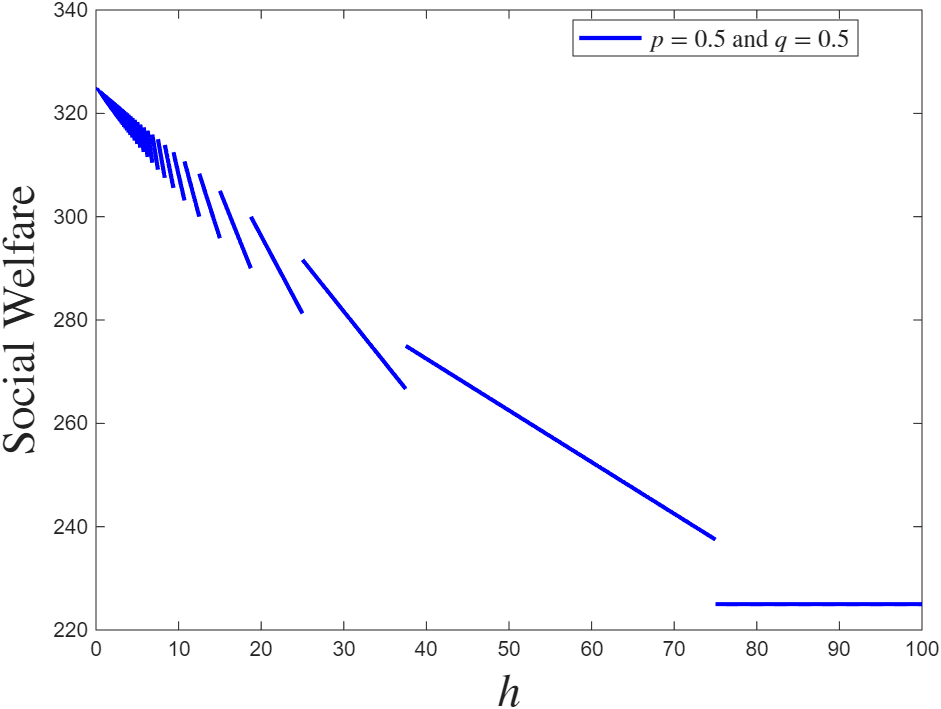}}
\hfill
\subfigure[$p=q=0.6$\label{subfig:c}]{
\includegraphics[width=0.28\textwidth]{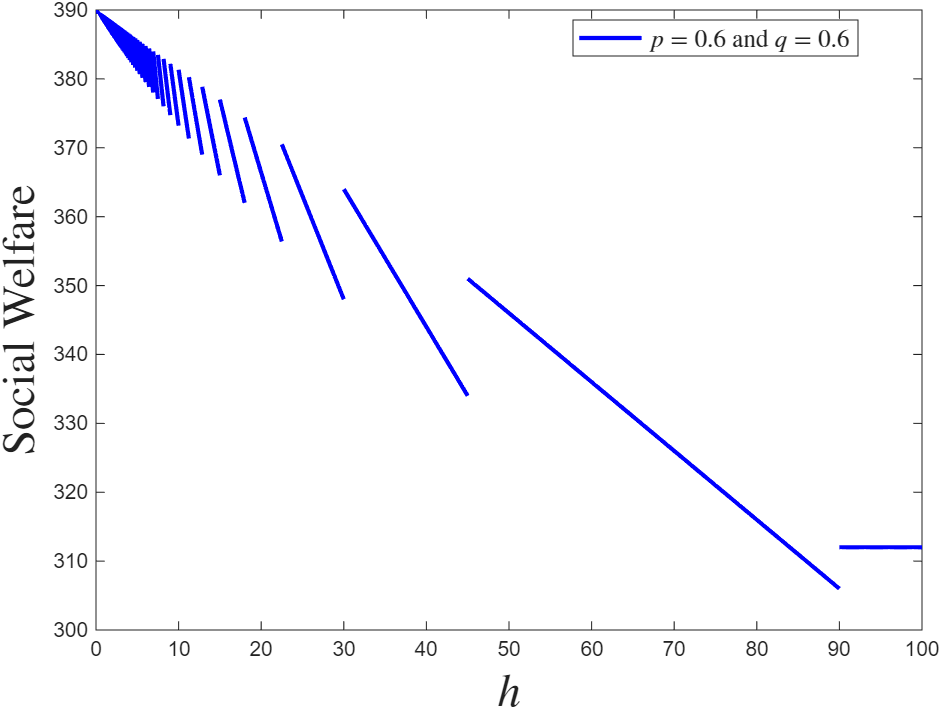}}
\vspace{8pt}
\subfigure[$p=q=0.7$\label{subfig:a}]{
\includegraphics[width=0.28\textwidth]      {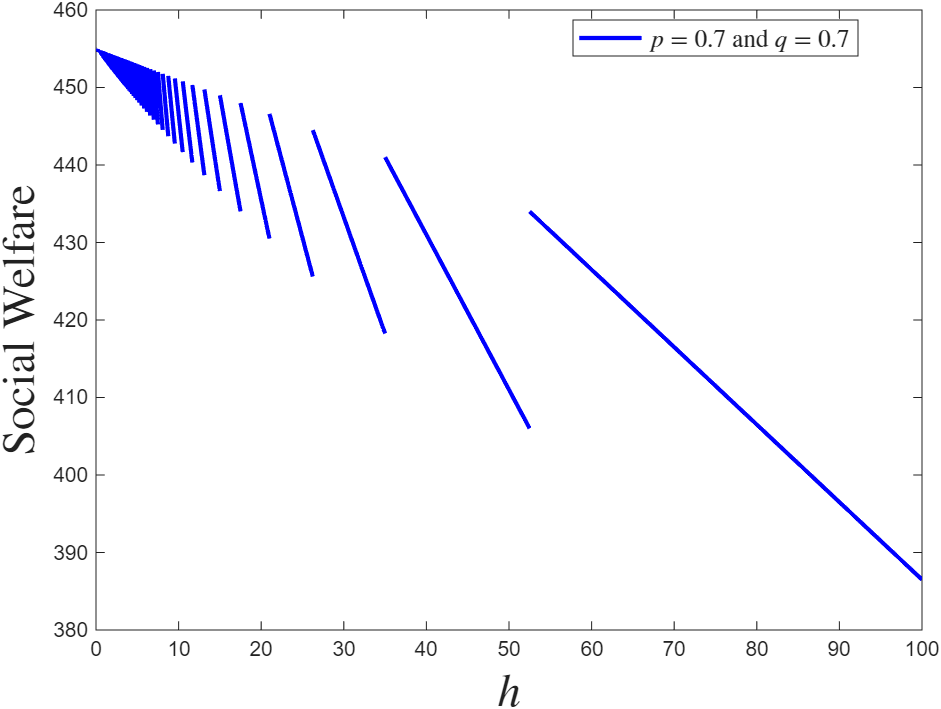}}
\hfill
\subfigure[$p=q=0.8$\label{subfig:b}]{    \includegraphics[width=0.28\textwidth]{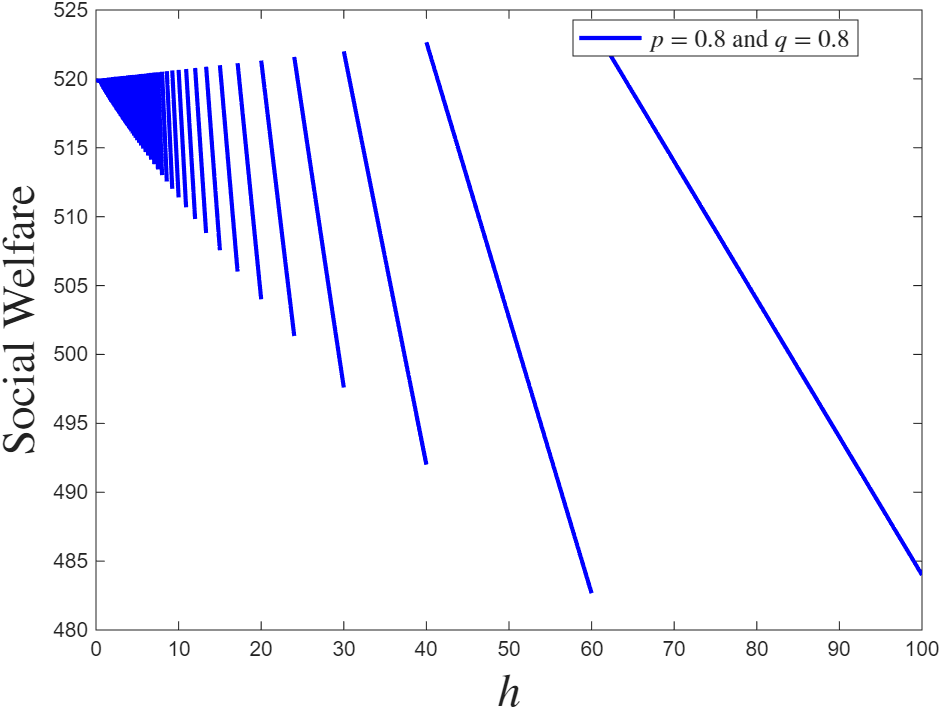}}
\hfill
\subfigure[$p=q=0.9$\label{subfig:c}]{
\includegraphics[width=0.28\textwidth]{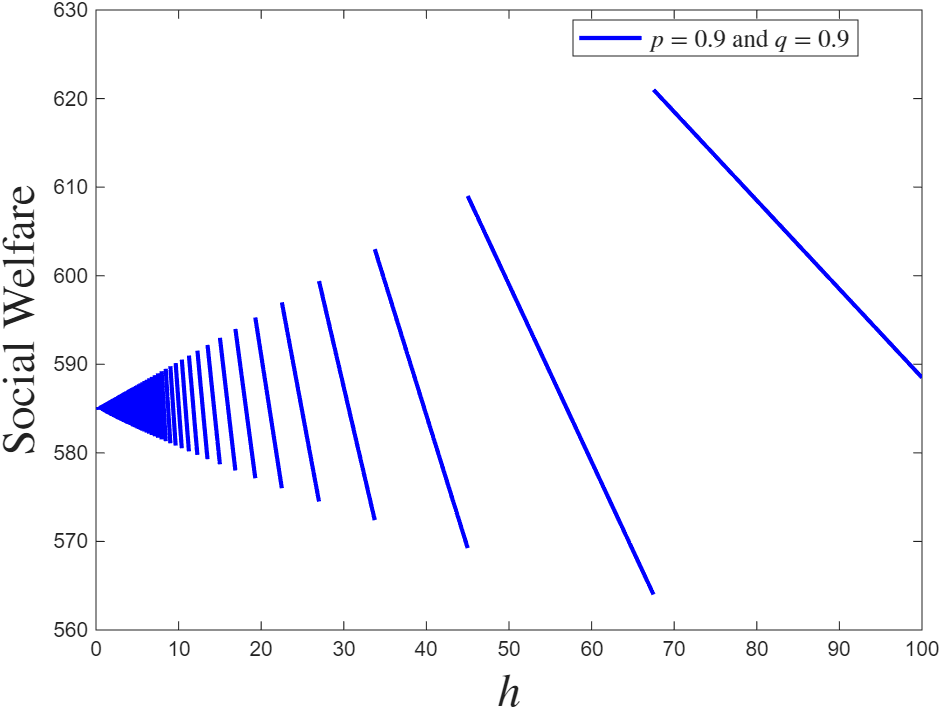}}
\caption{Impact of $h$ on social welfare in the decentralized system when $p=q$}
 \end{figure}

\end{document}